\newcommand{\mm}[1]{\mathcal{#1}}
\newcommand{\li}{[\![}
\newcommand{\ri}{]\!]}
\newcommand{\sZ}{\mm{Z}}
\newcommand{\mCZ}{\mm{CZ}}
\newcommand{\bfo}{\mathbf{1}}
\newcommand{\bfz}{\mathbf{0}}
\newcommand{\mtg}[1]{\boldsymbol{#1}}
\newcommand{\mt}[1]{\boldsymbol{#1}}
\newcommand{\cgeq}{=}
\newif\ifdraft
\newcommand{\revise}[1]{{\color{black} #1}}
\newtheoremstyle{theoremdd}
  {\topsep}
  {\topsep}
  {}
  {0pt}
  {\bfseries}
  {.}
  { }
  {\thmname{#1}\thmnumber{ #2}\thmnote{ (#3)}}
\theoremstyle{theoremdd}
\newtheorem{assump}{Assumption}
\newtheorem{eg}{Example}
\newtheorem{prop}{Proposition}
\newtheorem{lem}{Lemma}
\newtheorem{rmk}{Remark}
\newtheorem{thm}{Theorem}
\title{
Efficient Backward Reachability Using the Minkowski Difference of Constrained Zonotopes
}
\author{ 
Liren Yang \hspace{1cm} Hang Zhang \hspace{1cm} Jean-Baptiste Jeannin \hspace{1cm} Necmiye Ozay
\thanks{
This article will be presented at the International Conference on Embedded Software (EMSOFT) 2022 and will appear as part of the ESWEEK-TCAD special issue.
This work is supported by the 
NSF grants ECCS-1553873 and CCF-1918123, and the ONR grant N000141812501. 

L. Yang is with the School of Artificial Intelligence and Automation, Huazhong University of Science and Technology,  Wuhan 430074, China.
 (e-mail: {\tt lirenyang@hust.edu.cn}) 

H. Zhang is with the  Department of Mechanical
Engineering, University of Wisconsin-Madison, Madison, WI 53706, USA (e-mail: {\tt\small hang.zhang@wisc.edu}).

J. B. Jeannin is with the Department of Aerospace Engineering, University of Michigan, Ann Arbor, MI 48105, USA (e-mail:   {\tt\small jeannin@umich.edu}).

N. Ozay is with the Department of Electrical Engineering and
Computer Science, University of Michigan, Ann Arbor, MI 48105, USA (e-mail:   {\tt\small necmiye@umich.edu}).
 }
}
\begin{document}
\maketitle
\thispagestyle{empty}
\pagestyle{empty}

\begin{abstract}
Backward reachability analysis is essential to synthesizing controllers that ensure the correctness of closed-loop systems. This paper is concerned with developing scalable algorithms that under-approximate the backward reachable sets, for discrete-time uncertain linear and nonlinear systems. Our algorithm sequentially linearizes the dynamics, and uses constrained zonotopes for set representation and computation. The main technical ingredient of our algorithm is an efficient way to under-approximate the Minkowski difference between a constrained zonotopic minuend and a zonotopic subtrahend, which consists of all possible values of the uncertainties and the linearization error. 
\revise{
This Minkowski difference needs to be represented as a constrained zonotope to enable subsequent computation, but, as we show, it is impossible to find a polynomial-size representation for it in polynomial time. 
Our algorithm finds a polynomial-size under-approximation in polynomial time. 
}
We further analyze the conservatism of this under-approximation technique, and show that it is exact under some conditions. Based on the developed Minkowski difference technique, we detail two backward reachable set computation algorithms to control the linearization error and incorporate nonconvex state constraints. Several examples illustrate the effectiveness of our algorithms. 
\end{abstract}


\section{Introduction}

\subsubsection{Backward Reachability Analysis}
Backward reachability analysis is concerned with finding a set of states (called the backward reachable set, BRS for short), from where a proper control strategy can steer the system's trajectories into a prescribed target region in finite time. 
The computation of BRSs is central to many control synthesis problems with reachability \cite{bertsekas1971minimax, lygeros1999controllers}, safety \cite{bertsekas1972infinite,mitchell2007comparing} or even more complex temporal logic requirements \cite{chen2018signal}, and can be used to seek critical test cases for closed-loop systems with complex controllers in the loop \cite{chou2018using,yang2021synthesis}. 
Whenever the exact computation is hard, an under-approximation can still be used
to define a conservative strategy that accomplishes the reachability task. 
For systems that exhibit modeling error or are affected by environmental uncertainties, the target region should be reached in a guaranteed manner, regardless of these uncertainties. 
This leads to
a conservative analysis and hence smaller BRSs. 
For linear systems with additive disturbances, this amounts to a Minkowski difference step in the sequential computation of BRSs (e.g., see \cite{bertsekas1971minimax}). 
For nonlinear systems, this is achieved by shrinking the target set \revise{\cite{goubault2019inner}},  \cite{schurmann2021formal}, which can be implemented by Minkowski subtracting a set that over-approximates the impact of the linearization error and disturbances.  
\revise{This shrinking step is absent in the forward reachability analysis, for which there is a sizable literature focusing on over-approximation (see \cite{althoff2021set} and the references therein). 
However, if one wants to under-approximate the BRSs under uncertain inputs by employing those  techniques developed for the forward computation (e.g., \cite{kochdumper2020computing,rwth2014under}), the shrinking step is necessary.} 


\subsubsection{Minkowski Difference}
Since the late 60s, a simple approach using support functions is known to compute the exact Minkowski difference (in halfspace representation, H-Rep for short) between a polyhedral minuend (in H-Rep) and a compact subtrahend \cite{hnyilicza1969aset}.   
For a thorough discussion on this subject, see \cite{kolmanovsky1998theory}. 
For high-dimensional polyhedra, unfortunately, H-Reps are not suitable for other operations such as affine transformation and Minkowski addition.
This is because an H-Rep's complexity may grow exponentially after these operations \cite{tiwary2008hardness}. 
For example, the off-the-shelf tool MPT3 \cite{MPT3} may return an error when computing the Minkowski addition between two 4-D polytopes. 
For applications like reachability analysis that extensively involve such operations, 
 algorithms can be made more scalable at the cost of generality, by considering a special class of polyhedra called zonotopes.
A zonotope can be expressed by its generator representation (G-Rep for short), which is more suitable for affine transformation and Minkowski addition. 
The Minkowski difference, however, is not as easy to compute when the minuend is in G-Rep.
Compared to other operations, the problem of Minkowski-subtracting a set from a zonotopic minuend (in G-Rep) receives less attention, and is first studied in \cite{althoff2015computing}, where the subtrahend is also assumed to be a zonotope (in G-Rep). 
\revise{
The exact Minkowski difference is not necessarily a zonotope, but a zonotopic under-approximation can still be found efficiently \cite{raghuraman2020set,yang2022scalable} using the encoding techniques developed in \cite{sadraddini2019linear}.}
Based on these developments, 
a scalable backward reachability algorithm is obtained for linear systems with additive disturbances in \cite{yang2022scalable}. 

\subsubsection{Constrained Zonotopes}
To enjoy the same computational advantages as zonotopes (and their G-Reps) while achieving the generality of polyhedra, a new set representation called constrained generator representation (CG-Rep for short) is proposed in \cite{scott2016constrained}.
A set expressible by CG-Rep is called a constrained zonotope. 
Not only can affine transformation and Minkowski addition of constrained zonotopes be done easily via CG-Rep manipulation, so can intersection, under which zonotopes are not even closed.  
Moreover, all polytopes (i.e., bounded polyhedra) are expressible by CG-Rep. 
Therefore, constrained zonotopes (in CG-Reps) serve as an efficient tool for set-based control and estimation. 
They are more general than zonotopes and are particularly suitable to deal with state constraints.  
\revise{However, the Minkowski difference operation,  which is necessary for BRS computation, is difficult for constrained zonotopes. 
In fact, we show that no polynomial-time algorithm can find a polynomial-size CG-Rep of the Minkowski difference between a constrained zonotopic minuend (in CG-Rep) and a zontopic subtrahend (in G-Rep), unless P $=$ NP.
Neither there is, to the best of our knowledge, an efficient way to compute a polynomial-size  under-approximation.
This prohibits the use of constrained zonotopes for BRS computation under uncertainties because a compact representation of the BRS is essential for its efficient end uses (e.g., checking if a state belongs to the BRS and deriving the control law accordingly). }

\vspace{-3mm}
\subsubsection{Contributions}
In this paper, we use constrained zonotopes to develop scalable algorithms that under-approximate the BRSs for discrete-time nonlinear systems.
Our approach is based on sequential linearization, and the linearization error is incorporated with a Minkowski difference step. 
Our technical contributions are summarized as follows.

\emph{i)} We propose an efficient way to under-approximate the Minkowski difference between a constrained zonotopic minuend (in CG-Rep) and a zonotopic subtrahend (in G-Rep).
Our approach is optimization-based. 
We show that a na\"ive use of the encoding from \cite{sadraddini2019linear} leads to a bilinear program, but by extending the two-step approach proposed in  \cite{yang2022scalable}, an under-approximation can be found via  a linear program. The size of this linear program is polynomial in that of the minuend's and the subtrahend's representations. \revise{Our approach hence gives a polynomial-sized under-approximation in polynomial time.}

\emph{ii)} We further analyze the conservatism of this extended two-step approach. 
In particular, we show that any constrained zonotopic minuend has a ``rich'' enough CG-Rep, for which our two-step approach is exact. 
While it may be impractical to always assume such a rich CG-Rep, this result opens the direction of incrementally enriching the given CG-Rep of the minuend to improve the two-step approach's accuracy. 

\emph{iii)} Using the developed Minkowski difference  technique, we propose two methods: scaling method and splitting method for BRS computation. The scaling method can compute BRSs with convex constraints for a longer time horizon than the splitting method. In contrast, the splitting method can give larger BRSs than those obtained by scaling method for a short time horizon but have difficulties in computing BRSs for a long time horizon. However, the splitting method can deal with nonconvex constraints and expand the BRSs into different homotopy classes. Experiments show the advantages of these constrained-zonotope-based methods: they give less conservative BRSs under-approximation than those using zonotopes \cite{yang2022scalable}, especially in the presence of state constraints, and scales better than the Hamilton-Jacobi (HJB) method~\cite{bansal2017hamilton}. 






\vspace{-2mm}
\section{Notations \& Preliminaries}



We use $\bfo$ ($\bfz$, resp.) to represent a matrix of proper size whose entries are all ones (zeros, resp.). We will not make the size of such a matrix explicit unless it is not clear from context. 
Let $\mt{M}$ be a matrix and $\mt{M}_1$ ($\mt{M}_2$, resp.) be another matrix of the same height (width, resp.) as $\mt{M}$, $[\mt{M}, \mt{M}_1]$ ($[\mt{M}; \mt{M}_2]$, resp.) denotes the matrix obtained by concatenating $\mt{M}$ and $\mt{M}_1$ horizontally (concatenating  $\mt{M}$ and $\mt{M}_2$ vertically, resp.). Further, $\mm{N}(\mt{M})$ is the null space of $\mt{M}$ and $\vert \mt{M} \vert$ is the matrix that consists of the element-wise absolute values of $\mt{M}$.

Let $\underline{\mt{a}}$, $\overline{\mt{a}} \in \mathbb{R}^n$ such that $\underline{\mt{a}}\leq \overline{\mt{a}}$ ($\leq $ is element-wise), a \emph{hyper-box} $\li \underline{\mt{a}}, \overline{\mt{a}}\ri$ is the set $\{\mt{x} \in \mathbb{R}^n\mid \underline{\mt{a}} \leq \mt{x} \leq \overline{\mt{a}}\}$.  
Let $\mt{G} \in \mathbb{R}^{n \times N}$ and $\mt{c} \in \mathbb{R}^{n}$, a \emph{zonotope} $\mm{Z} = \langle \mt{G}, \mt{c}\rangle $ is defined to be the set
$\big\{\mt{G}\mtg{\theta} + \mt{c} \mid \mtg{\theta} \in \li-\bfo, \bfo\ri\}$. 
The tuple $\langle \mt{G}, \mt{c}\rangle$ is called the \emph{generator-representation} (or G-Rep) of $\mm{Z}$. 
The matrix $\mt{G}$ is the \emph{generator matrix} and $\mt{c}$ is the \emph{center} of $\mm{Z}$. 
A set $\mm{CZ}$ is a \emph{constrained zonotope} 
if it can be expressed as $\{\mt{G}\mtg{\theta} + \mt{c} \mid  \mtg{\theta}  \in \li-\bfo,\bfo\ri, \mt{A}\mtg{\theta} = \mt{b}\}$, where $\mt{A}\in \mathbb{R}^{m \times N}$ and $\mt{b} \in \mathbb{R}^m$. The tuple $\langle \mt{G},\mt{c},\mt{A},\mt{b}\rangle$ is a \emph{constrained generator representation} (or CG-Rep) of $\mm{CZ}$, $\mt{A}$ is the \emph{constraint matrix} and $\mt{b}$ is the \emph{constraint vector} of this CG-Rep. 
A zonotope $\langle \mt{G}, \mt{c}\rangle$ is a constrained zonotope whose CG-Rep has the same $\mt{G}$, $\mt{c}$ and empty $\mt{A}$, $\mt{b}$. 
Further, let $\mt{H} \in \mathbb{R}^{\ell\times N}$ and $\mt{a} \in \mathbb{R}^\ell$,  
a set is an \emph{AH-polytope} if it can be expressed as $\{\mt{G}\mtg{\theta} + \mt{c} \mid \mt{H} \mtg{\theta} \leq \mt{a}\}$. Zonotopes and constrained zonotopes are  AH-polytopes, i.e., $\langle \mt{G}, \mt{c}\rangle = \{\mt{G}\mtg{\theta} + \mt{c} \mid [\mt{I};-\mt{I}] \mtg{\theta} \leq \bfo\}$ and $\langle \mt{G},\mt{c},\mt{A},\mt{b}\rangle \cgeq \{\mt{G}\mtg{\theta} + \mt{c} \mid [\mt{A};-\mt{A};\mt{I};-\mt{I}] \mtg{\theta} \leq [\mt{b};-\mt{b};\bfo]\}$. 


Let $\mm{S}$, $\mm{R} \subseteq \mathbb{R}^n$ be two sets, $\mt{x} \in \mathbb{R}^n$ be a vector and $\mt{M} \in \mathbb{R}^{m \times n}$ be a matrix, we define $\mt{M} \,\mm{S} : = \{\mt{M}\mt{s} \mid \mt{s} \in \mm{S}\}$ and $\mt{x} + \mm{S} := \{ \mt{x} + \mt{s} \mid \mt{s} \in \mm{S}\}$. Further, $\mm{S} \oplus \mm{R} := \{\mt{s} + \mt{r} \mid \mt{s} \in \mm{S}, \mt{r} \in \mm{R}\}$ is the \emph{Minkowski sum} of $\mm{S}$ and $\mm{R}$, and $\mm{S} \ominus \mm{R} := \{\mt{x} \in \mathbb{R}^n \mid \mt{x} + \mm{R} \subseteq \mm{S}\}$ is the \emph{Minkowski difference}  between $\mm{S}$ and $\mm{R}$. Let $\mm{P}\subseteq \mathbb{R}^p$, $\mm{S}\times \mm{P} := \{[\mt{s};\mt{p}]\mid \mt{s} \in \mm{S}, \mt{p} \in \mm{P}\}$ is the \emph{product} of $\mm{S}$ and $\mm{P}$. 

The following 
set operations 
can be performed by CG-Rep manipulation for constrained zonotopes. 

\begin{lem} \label{lem:cz}
[From \cite{raghuraman2020set,scott2016constrained}] 
Let $\mm{CZ} \cgeq \langle \mt{G},\mt{c},\mt{A},\mt{b}\rangle \subseteq \mathbb{R}^n$, $\mm{CZ}_i \cgeq \langle \mt{G}_i,\mt{c}_i,\mt{A}_i,\mt{b}_i\rangle  \subseteq \mathbb{R}^p$ for $i \in \{1,2\}$ 
be  constrained zonotopes, $\mt{M} \in \mathbb{R}^{m \times n}$ be a matrix and $\mm{H} = \{\mt{x} \in \mathbb{R}^{n} \mid \mt{h}^\top \mt{x} \leq a\}$ be a halfspace,  then
\begin{itemize}
\item[i)] $\mt{M}\,\mm{CZ} \cgeq \langle \mt{M}\mt{G}, \mt{M}\mt{c}, \mt{A}, \mt{b}\rangle$, 
\item[ii)] $\mm{CZ}_1 \oplus \mm{CZ}_2 \cgeq \langle[\mt{G}_1,\mt{G}_2], \mt{c}_1+\mt{c}_2, \text{diag}(\mt{A}_1,\mt{A}_2), $ $[\mt{b}_1;\mt{b}_2]\rangle$, 
\item[iii)] $\mm{CZ}_1 \cap \mm{CZ}_2 \cgeq \langle[\mt{G}_1,\bfz], \mt{c}_1, [\text{diag}(\mt{A}_1,\mt{A}_2); [\mt{G}_1, -\mt{G}_2]],$ $[\mt{b}_1;\mt{b}_2;\mt{c}_2-\mt{c}_1]\rangle$, 
\item[iv)]  if $\mm{CZ} \cap \mm{H} \neq \emptyset$, then $\mm{CZ} \cap \mm{H} \cgeq \langle [\mt{G}, \bfz], \mt{c}, $ $[\mt{A}, \bfz; \mt{h}^\top \mt{G},  \tfrac{d}{2}], $ $[\mt{b}; a -\mt{h}^\top \mt{c} - \tfrac{d}{2}]\rangle $ where $d = a - \mt{h}^\top \mt{c} + \Vert \mt{h}^\top \mt{G}\Vert_1$,
\item[v)]  $\mm{CZ} \times \mm{CZ}_1 \cgeq \langle \text{diag}(\mt{G}, \mt{G}_1), [\mt{c};\mt{c}_1],\text{diag}(\mt{A}, \mt{A}_1),[\mt{b};\mt{b}_1] \rangle$. 
\end{itemize}
\end{lem}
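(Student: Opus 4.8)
The plan is to prove each of the five identities as a set equality by reducing both sides to the defining set-builder form of a CG-Rep and exhibiting an explicit correspondence between the generator-coefficient vectors $\mtg{\theta}$. Recall that $\langle \mt{G},\mt{c},\mt{A},\mt{b}\rangle \cgeq \{\mt{G}\mtg{\theta}+\mt{c} \mid \mtg{\theta}\in\li-\bfo,\bfo\ri,\ \mt{A}\mtg{\theta}=\mt{b}\}$, so in each case it suffices to check that a point admits a representation with the claimed right-hand-side data if and only if it lies in the set described on the left. For items (i), (ii), and (v) this correspondence follows directly from linearity and needs no new variables. For (i) I would substitute $\mt{M}(\mt{G}\mtg{\theta}+\mt{c}) = (\mt{M}\mt{G})\mtg{\theta}+\mt{M}\mt{c}$ while leaving the box and the constraint $\mt{A}\mtg{\theta}=\mt{b}$ untouched. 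For (ii) and (v) I would stack the coefficient vectors into $\mtg{\theta}=[\mtg{\theta}_1;\mtg{\theta}_2]$; the product of two unit boxes is again a unit box, and the block-diagonal constraint $\text{diag}(\mt{A}_1,\mt{A}_2)\mtg{\theta}=[\mt{b}_1;\mt{b}_2]$ is equivalent to the pair of original constraints, so the map between stacked coefficients and points is a bijection, giving $[\mt{G}_1,\mt{G}_2]\mtg{\theta}+\mt{c}_1+\mt{c}_2$ for the sum and $\text{diag}(\mt{G},\mt{G}_1)\mtg{\theta}+[\mt{c};\mt{c}_1]$ for the product.

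For the intersection (iii), the key idea is to encode the equality $\mt{G}_1\mtg{\theta}_1+\mt{c}_1 = \mt{G}_2\mtg{\theta}_2+\mt{c}_2$ as an extra linear constraint. I would again stack $\mtg{\theta}=[\mtg{\theta}_1;\mtg{\theta}_2]$, keep only $\mtg{\theta}_1$ in the output via the generator $[\mt{G}_1,\bfz]$ and center $\mt{c}_1$, and impose the three constraint blocks $\mt{A}_1\mtg{\theta}_1=\mt{b}_1$, $\mt{A}_2\mtg{\theta}_2=\mt{b}_2$, and the coupling row $[\mt{G}_1,-\mt{G}_2]\mtg{\theta}=\mt{c}_2-\mt{c}_1$. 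For the forward inclusion, a point in both sets supplies feasible $\mtg{\theta}_1,\mtg{\theta}_2$; for the reverse, any feasible stacked coefficient yields the point $\mt{G}_1\mtg{\theta}_1+\mt{c}_1$, which lies in $\mm{CZ}_1$ by the first block and equals $\mt{G}_2\mtg{\theta}_2+\mt{c}_2\in\mm{CZ}_2$ by the coupling row, hence lies in the intersection.

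The step I expect to be the main obstacle is the halfspace intersection (iv), because there one must turn an \emph{inequality} into an \emph{equality} using a single new bounded coefficient $\theta_{\mathrm{new}}\in\li-\bfo,\bfo\ri$ whose generator column is zero, so it does not move the output point. I would write the new constraint row as $\mt{h}^\top\mt{G}\mtg{\theta}+\tfrac{d}{2}\theta_{\mathrm{new}} = a-\mt{h}^\top\mt{c}-\tfrac{d}{2}$, equivalently $\mt{h}^\top\mt{G}\mtg{\theta} = a-\mt{h}^\top\mt{c}-\tfrac{d}{2}(1+\theta_{\mathrm{new}})$, so that as $\theta_{\mathrm{new}}$ sweeps $\li-\bfo,\bfo\ri$ the quantity $\mt{h}^\top\mt{G}\mtg{\theta}$ is forced into $[a-\mt{h}^\top\mt{c}-d,\ a-\mt{h}^\top\mt{c}]$. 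The crux is verifying that the choice $d=a-\mt{h}^\top\mt{c}+\Vert\mt{h}^\top\mt{G}\Vert_1$ makes the lower endpoint equal to $-\Vert\mt{h}^\top\mt{G}\Vert_1$, which is exactly the smallest value $\mt{h}^\top\mt{G}\mtg{\theta}$ can already attain over the box; the slack variable therefore removes no feasible point below the cut while capping $\mt{h}^\top(\mt{G}\mtg{\theta}+\mt{c})$ at $a$. I would finish by checking both inclusions and noting that the hypothesis $\mm{CZ}\cap\mm{H}\neq\emptyset$ guarantees the augmented CG-Rep is nonempty and hence a legitimate constrained zonotope. Since all five identities already appear in \cite{scott2016constrained,raghuraman2020set}, the argument is a verification rather than a discovery, and the bookkeeping in (iii) and the slack-variable accounting in (iv) are where care is needed.
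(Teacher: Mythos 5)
Your proposal is correct. One caveat on the comparison itself: the paper does not prove this lemma at all---it is imported as a preliminary fact from \cite{scott2016constrained} (items i--iii) and \cite{raghuraman2020set} (items iv--v)---so there is no in-paper argument to measure against; your direct set-equality verification (reduce both sides to set-builder form, stack coefficient vectors, check both inclusions) is exactly the argument those references use, including the coupling row $[\mt{G}_1,-\mt{G}_2]\mtg{\theta}=\mt{c}_2-\mt{c}_1$ for the intersection and the slack generator for the halfspace cut. The only place your reasoning should be tightened is the role of the hypothesis $\mm{CZ}\cap\mm{H}\neq\emptyset$ in item iv): it is not there merely to make the augmented CG-Rep nonempty, but to guarantee $d = a-\mt{h}^\top\mt{c}+\Vert\mt{h}^\top\mt{G}\Vert_1\geq 0$ (take any feasible point; then $-\Vert\mt{h}^\top\mt{G}\Vert_1\leq\mt{h}^\top\mt{G}\mtg{\theta}\leq a-\mt{h}^\top\mt{c}$). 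The sign of $d$ is what makes your sweep argument run in the right direction: with $d\geq 0$, letting $\theta_{\mathrm{new}}$ range over $[-1,1]$ forces $\mt{h}^\top\mt{G}\mtg{\theta}$ into $[-\Vert\mt{h}^\top\mt{G}\Vert_1,\ a-\mt{h}^\top\mt{c}]$, whose lower endpoint is vacuous over the unit box and whose upper endpoint is the cut. If $d<0$ the swept interval flips, the equality constraint pins $\mt{h}^\top\mt{G}\mtg{\theta}$ at values violating the cut, and the claimed identity can genuinely fail (the right-hand set may contain a point while $\mm{CZ}\cap\mm{H}=\emptyset$); so the nonemptiness hypothesis carries substantive content beyond legitimacy of the representation.
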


The following lemma follows from the definitions. 
\begin{lem}\label{lem:Minkowski}
Let $\mm{A}$, $\mm{B}$, $\mm{C}\subseteq \mathbb{R}^{N}$ and $\mt{M} \in \mathbb{R}^{n \times N}$
\begin{itemize}
\item[i)] $\mt{M}\,(\mm{A} \oplus \mm{B}) = \mt{M}\,\mm{A} \oplus \mt{M}\, \mm{B}$.
\item[ii)] $\mt{M}\,(\mm{A} \ominus \mm{B}) \subseteq \mt{M}\,\mm{A} \ominus \mt{M}\, \mm{B}$.
\item[iii)] $\left( \mm{A}\ominus \mm{C} \right) \cup \left( \mm{B} \ominus \mm{C} \right) \subseteq \left( \mm{A} \cup \mm{B} \right) \ominus \mm{C}$. 
\item[iv)] $\mm{A} \ominus \mm{B} \oplus \mm{C} \subseteq \mm{A} \oplus \mm{C} \ominus \mm{B}$.  
\end{itemize}
For ii), iii) and iv), 
equality does not hold in general. 
\end{lem}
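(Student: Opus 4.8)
The plan is to prove each of the four items directly from the set-theoretic definitions of $\oplus$ and $\ominus$ given in the preliminaries, treating every point as an elementwise membership argument. Throughout I will repeatedly use the defining equivalence $\mt{x} \in \mm{S}\ominus\mm{R} \iff \mt{x}+\mm{R}\subseteq\mm{S}$, together with the identity $\mt{M}(\mm{S}+\mt{v}) = \mt{M}\,\mm{S} + \mt{M}\mt{v}$ that follows from linearity of $\mt{M}$. No structural facts about zonotopes or CG-Reps are needed; the content is purely set algebra.

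For i) I would argue both inclusions simultaneously: any element of either side has the form $\mt{M}(\mt{a}+\mt{b})=\mt{M}\mt{a}+\mt{M}\mt{b}$ with $\mt{a}\in\mm{A}$ and $\mt{b}\in\mm{B}$, and reading this equality left-to-right gives $\mt{M}(\mm{A}\oplus\mm{B})\subseteq \mt{M}\,\mm{A}\oplus\mt{M}\,\mm{B}$ while reading it right-to-left gives the reverse, so linearity of $\mt{M}$ is the only fact used. For ii) I would take $\mt{x}\in\mm{A}\ominus\mm{B}$, so $\mt{x}+\mm{B}\subseteq\mm{A}$, and then apply $\mt{M}$ to both sides to get $\mt{M}\mt{x}+\mt{M}\,\mm{B}=\mt{M}(\mt{x}+\mm{B})\subseteq\mt{M}\,\mm{A}$, which is exactly $\mt{M}\mt{x}\in\mt{M}\,\mm{A}\ominus\mt{M}\,\mm{B}$. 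The reason equality can fail is that the step ``apply $\mt{M}$ to $\mt{x}+\mm{B}\subseteq\mm{A}$'' is not reversible when $\mt{M}$ has a nontrivial null space, so I would record a small counterexample, e.g.\ a projection $\mt{M}$ that collapses a direction in which $\mm{B}$ is large.

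For iii) I would pick $\mt{x}$ in the left-hand union and assume without loss of generality that $\mt{x}\in\mm{A}\ominus\mm{C}$; then $\mt{x}+\mm{C}\subseteq\mm{A}\subseteq\mm{A}\cup\mm{B}$ shows $\mt{x}\in(\mm{A}\cup\mm{B})\ominus\mm{C}$, and the case $\mt{x}\in\mm{B}\ominus\mm{C}$ is symmetric. For iv) the only subtlety is fixing operator precedence: I read the two sides as $(\mm{A}\ominus\mm{B})\oplus\mm{C}$ and $(\mm{A}\oplus\mm{C})\ominus\mm{B}$. I would take a generic $\mt{z}=\mt{x}+\mt{c}$ with $\mt{x}+\mm{B}\subseteq\mm{A}$ and $\mt{c}\in\mm{C}$, and verify the defining condition $\mt{z}+\mm{B}\subseteq\mm{A}\oplus\mm{C}$ through the chain $\mt{z}+\mm{B}=(\mt{x}+\mm{B})+\mt{c}\subseteq\mm{A}+\mt{c}\subseteq\mm{A}\oplus\mm{C}$, where the final inclusion holds because $\mt{c}\in\mm{C}$.

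I do not anticipate a genuine technical obstacle, since every step is a one-line membership check; the two things to be careful about are using the correct direction of the $\ominus$ definition at each point, and supplying the three counterexamples witnessing strictness in ii)--iv), which I would choose as low-dimensional boxes or intervals where the failure of equality is immediate to verify.
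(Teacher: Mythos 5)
Your proposal is correct and takes exactly the approach the paper intends: the paper states this lemma without proof, remarking only that it ``follows from the definitions,'' and your element-wise membership arguments are precisely that routine verification, including the left-to-right operator precedence in iv), which is the reading the paper itself relies on when it invokes the lemma in the proof of Theorem 2. Your sketched witnesses for strictness also work (e.g., for ii) a projection $\mt{M}$ annihilating a direction in which $\mm{B}$ is long, so that $\mm{A}\ominus\mm{B}=\emptyset$ while $\mt{M}\,\mm{A}\ominus\mt{M}\,\mm{B}\neq\emptyset$; for iii) and iv), one-dimensional intervals such as $\mm{A}=[0,1]$, $\mm{B}=[1,2]$, $\mm{C}=[0,1]$ and $\mm{A}=\{0\}$, $\mm{B}=\mm{C}=[0,1]$, respectively).
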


\begin{lem} \label{lem:ahpc}
[\cite{sadraddini2019linear}, Theorem 1] 
Let $\mm{S}_i:= \mt{c}_i + \mt{G}_i\{\mtg{\theta}_i\mid \mt{H}_i \mtg{\theta}_i \leq \mt{a}_i\}\subseteq \mathbb{R}^n$ for $i \in \{1,2\} $ be two AH-polytopes. Suppose that $\mm{S}_1$ has nonempty interior. Then a sufficient condition for $\mm{S}_1 \subseteq \mm{S}_2$ is that there exist matrices $\mtg{\Gamma}, \mtg{\beta}, \mtg{\Lambda}$ of proper sizes such that 
\begin{align}
&  \mt{G}_1 = \mt{G}_2 \mtg{\Gamma}, \ \mt{c}_2 - \mt{c}_1 = \mt{G}_2 \mtg{\beta}, \ \mtg{\Lambda} \mt{H}_1 = \mt{H}_2 \mtg{\Gamma}, \nonumber \\
&  \mtg{\Lambda} \mt{a}_1 \leq \mt{a}_2 + \mt{H}_2 \mtg{\beta}, \ \mtg{\Lambda} \geq \bfz. 
\label{eq:ahpc}
\end{align} 
\end{lem}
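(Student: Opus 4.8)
The plan is to prove the containment $\mm{S}_1 \subseteq \mm{S}_2$ pointwise, by exhibiting for each point of $\mm{S}_1$ an explicit preimage under the affine map that defines $\mm{S}_2$. Concretely, I would take an arbitrary $\mt{x} \in \mm{S}_1$, so that $\mt{x} = \mt{c}_1 + \mt{G}_1 \mtg{\theta}_1$ for some $\mtg{\theta}_1$ with $\mt{H}_1 \mtg{\theta}_1 \leq \mt{a}_1$. To place $\mt{x}$ in $\mm{S}_2$ I must produce a $\mtg{\theta}_2$ satisfying $\mt{H}_2 \mtg{\theta}_2 \leq \mt{a}_2$ and $\mt{x} = \mt{c}_2 + \mt{G}_2 \mtg{\theta}_2$. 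The entire argument hinges on the right choice of parameter, and the five conditions in \eqref{eq:ahpc} are tailored precisely to make the affine substitution $\mtg{\theta}_2 := \mtg{\Gamma}\mtg{\theta}_1 - \mtg{\beta}$ do the job.

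First I would check that this $\mtg{\theta}_2$ reproduces $\mt{x}$, which uses the two equality conditions $\mt{G}_1 = \mt{G}_2 \mtg{\Gamma}$ and $\mt{c}_2 - \mt{c}_1 = \mt{G}_2 \mtg{\beta}$:
\begin{align}
\mt{c}_2 + \mt{G}_2 \mtg{\theta}_2 &= \mt{c}_2 + \mt{G}_2 \mtg{\Gamma}\mtg{\theta}_1 - \mt{G}_2 \mtg{\beta} \nonumber \\
&= \mt{c}_2 + \mt{G}_1 \mtg{\theta}_1 - (\mt{c}_2 - \mt{c}_1) = \mt{c}_1 + \mt{G}_1 \mtg{\theta}_1 = \mt{x}.
\end{align}
So the proposed parameter does map to the prescribed point, regardless of feasibility.

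Next I would verify that $\mtg{\theta}_2$ is feasible for the constraints of $\mm{S}_2$; this is where the remaining three conditions enter. Writing $\mt{H}_2 \mtg{\theta}_2 = \mt{H}_2 \mtg{\Gamma}\mtg{\theta}_1 - \mt{H}_2 \mtg{\beta}$ and substituting $\mt{H}_2 \mtg{\Gamma} = \mtg{\Lambda}\mt{H}_1$ gives $\mt{H}_2 \mtg{\theta}_2 = \mtg{\Lambda}\mt{H}_1 \mtg{\theta}_1 - \mt{H}_2 \mtg{\beta}$. Because $\mtg{\Lambda} \geq \bfz$, left-multiplying the feasible inequality $\mt{H}_1 \mtg{\theta}_1 \leq \mt{a}_1$ by $\mtg{\Lambda}$ preserves its direction, yielding $\mtg{\Lambda}\mt{H}_1 \mtg{\theta}_1 \leq \mtg{\Lambda}\mt{a}_1$. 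Combining with the last inequality condition $\mtg{\Lambda}\mt{a}_1 \leq \mt{a}_2 + \mt{H}_2 \mtg{\beta}$ then gives $\mt{H}_2 \mtg{\theta}_2 \leq \mtg{\Lambda}\mt{a}_1 - \mt{H}_2 \mtg{\beta} \leq \mt{a}_2$, as required. Hence $\mt{x} \in \mm{S}_2$, and since $\mt{x}$ was arbitrary, $\mm{S}_1 \subseteq \mm{S}_2$.

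The argument is ultimately a direct verification, so the only genuine obstacle is discovering the affine substitution $\mtg{\theta}_2 = \mtg{\Gamma}\mtg{\theta}_1 - \mtg{\beta}$ that links the two parameter spaces; once this ansatz is in hand, each of the five conditions is consumed exactly once, with the monotonicity of left-multiplication by the nonnegative matrix $\mtg{\Lambda}$ doing the essential work in the inequality step. I would also remark that the nonempty-interior hypothesis on $\mm{S}_1$ is not actually used in this sufficiency direction; I expect it is carried along because it is needed for the converse (exactness of the encoding), which is not claimed here.
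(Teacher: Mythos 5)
Your proof is correct and is essentially the canonical argument: the paper itself states this lemma without proof (importing it from \cite{sadraddini2019linear}, Theorem 1), and the direct verification you give --- substituting $\mtg{\theta}_2 = \mtg{\Gamma}\mtg{\theta}_1 - \mtg{\beta}$, using the two equality conditions to recover $\mt{x}$, and using $\mtg{\Lambda} \geq \bfz$ together with $\mtg{\Lambda}\mt{a}_1 \leq \mt{a}_2 + \mt{H}_2\mtg{\beta}$ to check feasibility --- is exactly how the cited reference proves it. Your closing observation is also accurate: the nonempty-interior hypothesis on $\mm{S}_1$ is not needed for this sufficiency direction (the claim is vacuous if $\mm{S}_1$ is empty) and is carried in the statement only because of the necessity/exactness considerations in the original source.
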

The condition in Eq. \eqref{eq:ahpc} is known as the encoding of AH-polytope containment. 
The numbers of variables and constraints in Eq. \eqref{eq:ahpc} are polynomial in the sizes of $\mt{c}_i$, $\mt{G}_i$, $\mt{H}_i$, $\mt{a}_i$. 
Since the zonotope containment problem, which is a special instance of the AH-polytope containment problem, is known to be co-NP hard \cite{kulmburg2021co}, the linear condition in Eq. \eqref{eq:ahpc} cannot possibly be necessary in general unless $\text{P}=\text{NP}$. 

 \begin{lem} \label{lem:HmB}
[\cite{kolmanovsky1998theory}, Theorem 2.3]
Let $\mm{S}\subseteq \mathbb{R}^n$ be compact, then 
$\{\mt{x} \mid \mt{H} \mt{x} \leq \mt{a}\} \ominus \mm{S} = \{x \mid \mt{H} \mt{x} \leq \mt{a} - \underline{\mt{a}}\}$, where $\underline{a}_i = \max \mt{h}_i^\top \mm{S}$ is the $i^{\rm th}$ element of $\underline{\mt{a}}$ and $\mt{h}_i^\top$ is the $i^{\rm th}$ row of $\mt{H}$.  
\end{lem}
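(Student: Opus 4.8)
The plan is to prove the set equality by establishing the equivalence of the membership conditions for the two sides, which settles both inclusions at once. Write $\mm{P} := \{\mt{x} \mid \mt{H}\mt{x} \leq \mt{a}\}$ for the minuend. First I would unfold the definition of the Minkowski difference: $\mt{x} \in \mm{P} \ominus \mm{S}$ holds if and only if $\mt{x} + \mm{S} \subseteq \mm{P}$, i.e., $\mt{H}(\mt{x}+\mt{s}) \leq \mt{a}$ for every $\mt{s} \in \mm{S}$. Reading this vector inequality row by row, it is equivalent to requiring $\mt{h}_i^\top \mt{x} + \mt{h}_i^\top \mt{s} \leq a_i$ for every row index $i$ and every $\mt{s} \in \mm{S}$, where $a_i$ is the $i$-th entry of $\mt{a}$.

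The key step is to decouple the quantifier over $\mt{s}$ from the one over $i$. Since the $i$-th inequality constrains only the scalar $\mt{h}_i^\top \mt{s}$, the condition ``$\mt{h}_i^\top \mt{x} + \mt{h}_i^\top \mt{s} \leq a_i$ for all $\mt{s} \in \mm{S}$'' is equivalent to the single worst-case inequality $\mt{h}_i^\top \mt{x} + \max_{\mt{s} \in \mm{S}} \mt{h}_i^\top \mt{s} \leq a_i$, that is, $\mt{h}_i^\top \mt{x} \leq a_i - \underline{a}_i$ by the definition of $\underline{a}_i$. Here the maximizing $\mt{s}$ may differ across rows, but this causes no difficulty because the constraints are separate; collecting the resulting inequalities over all $i$ yields exactly $\mt{H}\mt{x} \leq \mt{a} - \underline{\mt{a}}$, which is precisely the membership condition for the right-hand side. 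Chaining these equivalences delivers the claimed equality.

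The role of compactness of $\mm{S}$, and the only point requiring care, lies in this worst-case reduction: it guarantees that $\max_{\mt{s} \in \mm{S}} \mt{h}_i^\top \mt{s}$ is finite and attained, so that $\underline{a}_i$ is well defined and the passage from the universally quantified family of inequalities to a single tightened one is valid. (Boundedness alone would suffice for finiteness via the supremum, with the ``$\leq$'' recovered by taking $\mt{h}_i^\top \mt{s} \leq a_i - \mt{h}_i^\top \mt{x}$ for all $\mt{s}$ and passing to the supremum, but compactness makes the attainment immediate.) I do not anticipate a genuine obstacle: the argument is essentially the observation that the support function of $\mm{S}$ in each facet direction $\mt{h}_i$ is exactly the amount by which that facet must be pulled inward, so the whole proof reduces to ordering the quantifiers correctly and tracking the sign on each inequality.
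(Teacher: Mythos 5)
Your proof is correct. The paper itself offers no proof of this lemma—it is quoted from \cite{kolmanovsky1998theory}, Theorem 2.3—and your argument (unfolding the definition of $\ominus$ row by row, then replacing the universal quantifier over $\mm{S}$ in each decoupled scalar inequality by the support value $\max \mt{h}_i^\top \mm{S}$, whose finiteness and attainment is exactly where compactness enters) is the standard support-function argument used in that reference, so there is nothing to bridge.
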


\section{Problem Description}\label{sec:prob} 
Consider the following discrete-time nonlinear system: 
\begin{align}
\mt{x}_{t+1} = \mt{f}(\mt{x}_t, \mt{u}_t) + \mt{w}_t, \label{eq:sysnl} 
\end{align}
where $\mt{x} \in \mathbb{R}^n$ is the state, $\mt{u} \in \mm{U} \subseteq \mathbb{R}^q$ is the control input, and $\mt{w} \in \mm{W} \subseteq \mathbb{R}^n$ is the additive disturbance input. 
Given a set $\mm{X}_{\rm safe}$ of safe states and a set $\mm{X}_0$ of target states, the $k^{\rm th}$ backward reachable set $\mm{X}_k$ is defined recursively as follows: 
\begin{align}
\mm{X}_{k} & = Pre(\mm{X}_{k-1})
  := \left\{\begin{array}{l}\mt{x} \in \\ \mm{X}_{\rm safe} \end{array} \Big\vert
  \begin{array}{l}
  \exists \mt{u} \in \mm{U}: \forall \mt{w} \in \mm{W}: \\
  \mt{f}(\mt{x}, \mt{u}) + \mt{w} \in \mm{X}_{k-1}
  \end{array} \right\}, \nonumber \\
  k & =1,2,3\dots
\end{align}

Our goal is to compute $\underline{\mm{X}}_k$, represented by constrained zonotopes, s.t.   $\underline{\mm{X}}_k\subseteq \mm{X}_k$ under the following assumptions. 
\begin{assump}\label{assump:set}
The sets $\mm{X}_0$, $\mm{U}$ are  constrained zonotopes (CG-Reps given). The disturbance set $\mm{W}$ is a zonotope (G-Rep given). The safe set $\mm{X}_{\rm safe} = \bigcup_{p}\{\mt{x} \mid \mt{H}_p \mt{x} \leq \mt{a}_p\}$ is the union of finitely many polytopes in their H-Reps.
\end{assump}

Our solution approach uses sequential linearization. 
If the system is linear, i.e.,  $\mt{f}(\mt{x}, \mt{u}) = \mt{A}\mt{x} + \mt{B}\mt{u}$ for some invertible matrix $\mt{A}$ \footnote{The matrix $\mt{A}$ here is not to be confused with the constraint matrix in the CG-Rep of a constrained zonotope. }, then 
\begin{align}
    \mm{X}_{k} = \mm{X}_{\rm safe} \cap \mt{A}^{-1}(\mm{X}_{k-1}\ominus \mm{W} \oplus -\mt{B}\mm{U}). 
    \label{eq:brs_lin}
\end{align}
For nonlinear systems, in each step,  we linearize $\mt{f}$ at some $[\mt{x}^\ast; \mt{u}^\ast]$ and compute a under-approximation $\underline{\mm{X}}_k$ of $\mm{X}_k$ by applying Eq. \eqref{eq:brs_lin} to the previously obtained $\underline{\mm{X}}_{k-1}$ and the linear dynamics. 
Particularly, to ensure that $\underline{\mm{X}}_{k-1}$ can be reached from  $\underline{\mm{X}}_k$ under the nonlinear dynamics, 
we conservatively approximate the linearization error by an additive term that takes value from a zonotopic set $\mm{L}$, 
and require $\underline{\mm{X}}_{k-1} \ominus  \mm{L}$ to be reachable from $\underline{\mm{X}}_k$ under the linear dynamics (i.e., replace $\mm{W}$ in Eq. \eqref{eq:brs_lin} by $\mm{L}\oplus \mm{W}$).

The main challenge in the above approach is twofold:
\begin{itemize}
    \item[C1)] To implement Eq. \eqref{eq:brs_lin} under Assumption \ref{assump:set}, while 
    the affine transformation, intersection and Minkowski addition can be done via CG-Rep manipulation (Lemma \ref{lem:cz}), there still lacks an efficient way to compute (or to under-approximate) the Minkowski difference between a constrained zonotopic minuend  $\mm{X}_{k-1}$ and a zonotopic subtrahend $\mm{W}$ (or $\mm{L}\oplus \mm{W}$). \revise{Subsequent computations require this  Minkowski difference to be in CG-Rep. } 
    \item[C2)] To compute $\underline{\mm{X}}_k$, one needs to make a guess of $\mm{L}$ that encompasses all possible values of the additive linearization error over $\underline{\mm{X}}_k$, without knowing $\underline{\mm{X}}_k$ a priori. 
\end{itemize}
The rest of the paper is devoted to tackling these two challenges. 
In Sec. \ref{sec:Mdiff}, we develop an efficient algorithm for Minkowski difference under-approximation. 
We further show that our algorithm is exact for the problem instances whose minuend set has sufficiently rich CG-Reps (Sec. \ref{sec:rep}). In Sec. \ref{sec:brs_nl}, we explore two strategies to tackle challenge C2) and present two detailed algorithms that combine all ingredients together for BRS under-approximation.

\section{Under-approximating $\mCZ\ominus \sZ$}\label{sec:Mdiff} 
This section is concerned with under-approximating $\mCZ \ominus \sZ$, where $\mCZ \cgeq \langle \mt{G},\mt{c},\mt{A},\mt{b}\rangle \subseteq \mathbb{R}^{n}$ is a constrained zonotope and $\sZ = \langle \mt{G}',\mt{c}'\rangle \subseteq \mathbb{R}^{n}$ is a zonotope. 
\revise{We will show that computing a compact CG-rep of the \emph{exact} Minkowski difference is hard (Sec. \ref{sec:hard}). }
Hence we restrict our under-approximation to be a constrained zonotope $\mm{CZ}_{\rm d}$ that shares the same ``template'' as the minuend  $\mCZ$, i.e., $\mm{CZ}_{\rm d} \cgeq \langle \mt{G}\, \text{diag}(\overline{\mtg{\delta}}), \mt{c}_{\rm d}, \mt{A}\,\text{diag}(\overline{\mtg{\delta}}), \mt{b}_{\rm d} \rangle$ for some $\overline{\mtg{\delta}} \in \li \bfz, \bfo\ri$, $\mt{c}_{\rm d}\in \mathbb{R}^n$ and $\mt{b}_{\rm d}\in \mathbb{R}^m$. 
Under such restrictions, one can enforce $\mm{CZ}_{\rm d} \oplus \sZ \subseteq \mCZ$ using the constraints given by Lemma \ref{lem:ahpc} and find $\mm{CZ}_{\rm d}$ by solving an optimization problem. 
However, this optimization problem, as will be shown in Sec. \ref{sec:naive}, is a bilinear program. 
We refer to the above approach as the ``na\"ive approach''. 
To find a $\mm{CZ}_{\rm d}$ more efficiently, in Sec.  \ref{sec:twostep}, we propose a two-step approach that amounts to solving a linear program. 
We further show how to reduce the size of this linear program and present some results to understand how conservative our under-approximation is.
 
 \revise{
\subsection{Complexity Analysis}\label{sec:hard}
We show that, given the CG-Reps of $\mm{CZ}$ and $\mm{Z}$, it is impossible to find a polynomial-size CG-Rep of $\mm{CZ}\ominus \mm{Z}$ in polynomial time, unless P $=$ NP. 
This motivates us to find an under-approximation that admits a polynomial-size CG-Rep computable in polynomial time.  

\begin{prop}\label{prop:hard}
\normalfont
No algorithm satisfies the following two conditions simultaneously  unless P $=$ NP.  
\begin{itemize}[nolistsep]
    \item[a)] It finds $\langle \mt{G}'', \mt{c}'', \mt{A}'',\mt{b}''\rangle = \mm{CZ}\ominus \mm{Z}$ in ${\tt{poly}}(n, N, N')$  time, where $N$ ($N'$, resp.) is the width of $\mt{G}$ ($\mt{G}'$, resp.). 
    \item[b)] The widths and heights of matrices  $\mt{G}'', \mt{c}'', \mt{A}'',\mt{b}''$ are ${\tt{poly}}(n, N, N')$. 
\end{itemize} 
\end{prop}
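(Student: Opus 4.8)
The plan is to prove the contrapositive by reducing the zonotope containment problem, which is co-NP-hard \cite{kulmburg2021co}, to the task described in a)--b). The bridge between the Minkowski difference and set containment is the elementary identity that, by the definition $\mm{CZ}\ominus\mm{Z} = \{\mt{x}\mid \mt{x}+\mm{Z}\subseteq \mm{CZ}\}$, one has $\bfz \in \mm{CZ}\ominus\mm{Z}$ if and only if $\mm{Z}\subseteq \mm{CZ}$. Thus deciding membership of the origin in the \emph{exact} Minkowski difference is exactly as hard as deciding containment, and any efficient, compact, exact representation of the difference would yield an efficient containment test.

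Concretely, suppose toward a contradiction that some algorithm $\mathcal{A}$ satisfies a) and b) simultaneously. Given an arbitrary instance of zonotope containment, namely two zonotopes $\mm{Z}_1 = \langle \mt{G}_1, \mt{c}_1\rangle$ and $\mm{Z}_2 = \langle \mt{G}_2, \mt{c}_2\rangle$ together with the question ``is $\mm{Z}_1 \subseteq \mm{Z}_2$?'', I would instantiate the minuend as $\mm{CZ} := \mm{Z}_2$ (recall a zonotope is a constrained zonotope with empty $\mt{A}, \mt{b}$) and the subtrahend as $\mm{Z} := \mm{Z}_1$. Running $\mathcal{A}$ then produces, by a) and b), a CG-Rep $\langle \mt{G}'', \mt{c}'', \mt{A}'', \mt{b}''\rangle$ of the exact difference $\mm{Z}_2 \ominus \mm{Z}_1$, in time polynomial in $n, N, N'$ and of size polynomial in the same parameters. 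Note that taking the minuend to be an ordinary zonotope already produces the obstruction, so the constrained generality of $\mm{CZ}$ is not even needed.

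The final step is to test whether $\bfz \in \langle \mt{G}'', \mt{c}'', \mt{A}'', \mt{b}''\rangle$. By the CG-Rep definition this reduces to deciding the feasibility of the linear system $\mt{G}''\mtg{\theta} + \mt{c}'' = \bfz$, $\mt{A}''\mtg{\theta} = \mt{b}''$, $\mtg{\theta} \in \li -\bfo, \bfo\ri$. Since condition b) guarantees the representation has polynomial size, this is a linear-feasibility problem with polynomially many variables and constraints, solvable in polynomial time. By the equivalence of the first paragraph its answer decides $\mm{Z}_1 \subseteq \mm{Z}_2$. Chaining these two polynomial-time stages yields a polynomial-time decision procedure for the co-NP-hard containment problem; equivalently, its complement (non-containment, which is NP-hard) would lie in P, and since P is closed under complementation this forces P $=$ NP, the desired contradiction.

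The point requiring the most care is checking that each hypothesis is genuinely used and that polynomiality is preserved throughout. Exactness in a) is what makes the origin-membership test equivalent to containment: a mere under-approximation would only certify one direction and break the reduction. The polynomial running time in a) bounds the first stage, while the polynomial-size guarantee in b) is exactly what keeps the concluding linear program small enough to solve in polynomial time. I would also make explicit that membership in a constrained zonotope is decidable in polynomial time precisely because it is a bounded linear-feasibility query, which is the only nontrivial algorithmic fact invoked beyond the cited co-NP-hardness.
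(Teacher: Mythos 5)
Your proposal is correct and follows essentially the same route as the paper's proof: both assume an algorithm satisfying a) and b), use the identity $\bfz \in \mm{CZ}\ominus\mm{Z} \iff \mm{Z}\subseteq\mm{CZ}$, decide containment by running the algorithm and then solving a polynomial-size linear feasibility problem for membership of the origin, and conclude P $=$ NP from the co-NP-hardness result of \cite{kulmburg2021co}. Your observation that a zonotopic minuend (empty $\mt{A}$, $\mt{b}$) already suffices is a nice sharpening, but it does not change the substance of the argument.
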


\begin{proof}
Assume that an algorithm {\tt{A}} satisfies conditions a) and b) simultaneously. Since $\mm{Z} \subseteq \mm{CZ}$ iff $\mt{0} \in \mm{CZ} \ominus \mm{Z}$, whether $\mm{Z} \subseteq \mm{CZ}$ can be determined via the following procedure: 
\begin{itemize}[nolistsep]
    \item[1)] find $\langle \mt{G}'', \mt{c}'', \mt{A}'',\mt{b}''\rangle$ by algorithm {\tt{A}},  
    \item[2)] claim $\mm{Z} \subseteq \mm{CZ}$ iff $\mt{0} \in \langle \mt{G}'', \mt{c}'', \mt{A}'',\mt{b}''\rangle$.  
\end{itemize}
By bullet a), step 1) takes ${\tt{poly}}(n, N, N')$ time to run. 
Further, step 2) amounts to solving the following linear program: 
\begin{align}
    \begin{array}{rl}
        \text{find} & \mtg{\theta} \\
         \text{s.t.}&  \mt{G}''\mtg{\theta} + \mt{c}'' = \mt{0}, \ \mt{A}''\mtg{\theta} =  \mt{b}'',  \  -\mt{1} \leq \mtg{\theta} \leq \mt{1}
    \end{array} 
    \tag{LP}
    \label{eq:lp}
\end{align}
Let $N''$ ($m''$, resp.) be the width (height, resp.) of $\mt{A''}$,  there are $N''$ variables and $2N'' + m'' + n$ constraints in \eqref{eq:lp}. These two numbers are ${\tt{poly}}(n, N, N')$  by bullet b).  
Therefore, the above two-step procedure takes ${\tt{poly}}(n, N, N')$ time to run.
However, it is co-NP hard \cite{kulmburg2021co} to decide if $\mm{Z}\subseteq \mm{CZ}$ given the CG-Reps of $\mm{Z}$ and $\mm{CZ}$ as the inputs, which consist of $n (N + N' + 2)$ reals. Hence the existence of such an algorithm ${\tt{A}}$ that satisfies a) and b) implies P $=$ NP. 
\end{proof}
}

 \subsection{Na\"ive Approach with Bilinear Constraints}\label{sec:naive}

With the aforementioned na\"ive approach, we need to solve the following optimization problem: 
\begin{align}
\begin{array}{rl}
\max_{\overline{\mtg{\delta}}, \mt{c}_{\rm d}, \mt{b}_{\rm d}} &  \Vert\overline{\mtg{\delta}}\Vert_1  \\
\text{s.t. }&  \mm{CZ}_{\rm d} \oplus \sZ \subseteq \mCZ
\end{array}. 
\label{eq:blnr}
\end{align}
The objective function $\Vert \overline{\mtg{\delta}} \Vert_1$ is used as a heuristic to maximize the set $\mm{CZ}_{\rm d}$. 
To apply Lemma \ref{lem:ahpc}, we write $\mCZ$ as an AH-polytope, i.e., 
\begin{align}
\mCZ & = \mt{c} + \mt{G}\{\mtg{\theta} \mid [\mt{A};-\mt{A};\mt{I};-\mt{I}] \mtg{\theta} \leq [\mt{b};-\mt{b};\bfo]\}, \label{eq:e1}
\end{align}
and write $\mm{CZ}_{\rm d} \oplus \mm{Z}$ either as 
\begin{align}
&  \ \ \ \ \ \mt{c}_{\rm d} \hspace{-0mm} +  \hspace{-0mm}\mt{c}' \hspace{-0mm} +  \hspace{-0mm}[\mt{G}\,\text{diag}(\overline{\mtg{\delta}}), \mt{G}'] \{\mtg{\xi} \hspace{-0mm}\mid 
\hspace{-0mm} [\mt{A}\,\text{diag}(\overline{\mtg{\delta}}),\bfz; \nonumber\\ 
& \ \ \ \ \ \ \ \ \ \ -\mt{A}\,\text{diag}(\overline{\mtg{\delta}}), \bfz; \mt{I};-\mt{I}] \mtg{\xi}
 \leq [\mt{b}_{\rm d};-\mt{b}_{\rm d};\bfo]\},  \label{eq:e2} 
\end{align}
 or as 
\begin{align}
&  \ \ \ \ \ \mt{c}_{\rm d} \hspace{-0mm} +  \hspace{-0mm}\mt{c}' \hspace{-0mm} +  \hspace{-0mm}[\mt{G}, \mt{G}']  \{\mtg{\xi} \mid [\mt{A},\bfz; -\mt{A},\bfz;\mt{I};-\mt{I}] \mtg{\xi} \nonumber \\
& \ \ \ \ \ \ \ \ \ \ \leq [\mt{b}_{\rm d};-\mt{b}_{\rm d};\overline{\mtg{\delta}};\bfo;\overline{\mtg{\delta}};\bfo]\}.  \label{eq:e3}
\end{align}
Unfortunately, Lemma \ref{lem:ahpc} gives bilinear constraints when applied to Eqs. \eqref{eq:e1},\eqref{eq:e2} or to Eqs. \eqref{eq:e1},\eqref{eq:e3}. 
If \eqref{eq:e2} is used, ``$\mt{H}_1$'' in \eqref{eq:ahpc} depends on the variable $\overline{\mtg{\delta}}$ and the term ``$\mtg{\Lambda} \mt{H}_1$" is bilinear; if \eqref{eq:e3} is used, ``$\mt{a}_1$'' in \eqref{eq:ahpc} depends on $\overline{\mtg{\delta}}$ and ``$\mtg{\Lambda} \mt{a}_1$'' is bilinear.

The key observation here is that the encoding in Lemma \ref{lem:ahpc} is more favorable (i.e., tends to be linear) if the variables are related to the \emph{outer} set. 
On the contrary, the above encoding is bilinear because the variable $\overline{\mtg{\delta}}$ is related to the \emph{inner} set.

 \subsection{Two-Step Approach: Overview}\label{sec:twostep}

We propose an alternative approach that finds an under-approximation $\mm{CZ}_{\rm d}$ of  $\mCZ \ominus \sZ$ with the following two steps. 
\begin{itemize} 
\item[I)] Compute a vector $\overline{\mtg{\sigma}} \in \li \bfz, \bfo\ri$ such that $\mm{CZ}_{\rm s} \cgeq \langle \mt{G}\,\text{diag}(\overline{\mtg{\sigma}}), \mt{c}_{\rm s}, A\,\text{diag}(\overline{\mtg{\sigma}}), \mt{b}_{\rm s}\rangle$ encloses 
$\mm{Z}$. 
\item[II)] Compute $\mm{CZ}_{\rm d} \cgeq \langle \mt{G}\,\text{diag}(\bfo - \overline{\mtg{\sigma}}), \mt{c}-\mt{c}_{\rm s}, \mt{A}\,\text{diag}(\bfo - \overline{\mtg{\sigma}}), \mt{b}-\mt{b}_{\rm s}\rangle$. 
\end{itemize}
Since $\sZ \subseteq \mm{CZ}_{\rm s}$ by construction, $\mCZ \ominus \mm{CZ}_{\rm s} $ is an under-approximation of $\mCZ \ominus \sZ$. 
The significance of Step I) is that, since the variable $\overline{\mtg{\sigma}}$ is related to the \emph{outer} set $\mm{CZ}_{\rm s}$, the encoding of $\sZ \subseteq \mm{CZ}_{\rm s}$ by Lemma \ref{lem:ahpc} is linear. 
Further, since the generator matrix $\mt{G}$ and the constraint matrix $\mt{A}$ of the minuend  $\mCZ$ are used as ``templates'' when constructing $\mm{CZ}_{\rm s}$, it follows that $\mCZ \ominus \mm{CZ}_{\rm s}  \supseteq \mm{CZ}_{\rm d} \cgeq \langle \mt{G}\,\text{diag}(\bfo - \overline{\mtg{\sigma}}), \mt{c}-\mt{c}_{\rm s}, \mt{A}\,\text{diag}(\bfo - \overline{\mtg{\sigma}}), \mt{b}-\mt{b}_{\rm s}\rangle$. 
Hence $\mCZ \ominus \mm{CZ}_{\rm s}$  can be further under-approximated via a simple CG-Rep manipulation. 
The above two-step approach extends the one in \cite{yang2022scalable} for under-approximating the Minkowski difference of two zonotopes.

In what follows, we first show in details how to implement Step I) by solving a linear program. 
We further simplify this linear program by showing that it is optimal to choose $\mt{c}_{\rm s} = \mt{c}'$ and $\mt{b}_{\rm s} = \bfz$ in Step I). 
Then we prove that $\mCZ \ominus \mm{CZ}_{\rm s} \supseteq \mm{CZ}_{\rm d}$.  
As a step to understand the conservatism of our under-approximation, we will also give a sufficient condition for $\mCZ \ominus \mm{CZ}_{\rm s} = \mm{CZ}_{\rm d}$  
to hold.

\subsection{Step I: Over-approximating $\mm{Z}$ by $\mm{CZ}_{\rm s}$}
Our goal is to solve 
\begin{align}
\begin{array}{rl}
\min_{\overline{\mtg{\sigma}}, \mt{c}_{\rm s}, \mt{b}_{\rm s}} &  \Vert\overline{\mtg{\sigma}}\Vert_1  \\
\text{s.t. }& \sZ \subseteq \mm{CZ}_{\rm s}
\end{array}, 
\label{eq:minout0}
\end{align}
where  $\sZ = \langle \mt{G}', \mt{c}'\rangle$ and $\mm{CZ}_{\rm s} \cgeq \langle \mt{G}\,\text{diag}(\overline{\mtg{\sigma}}), \mt{c}_{\rm s}, $ $\mt{A}\,\text{diag}(\overline{\mtg{\sigma}}), \mt{b}_{\rm s}\rangle$. 
In \eqref{eq:minout0}, we minimize $\Vert\overline{\mtg{\sigma}}\Vert_1$. This can be seen as a heuristic to minimize the enclosing constrained zonotope $\mm{CZ}_{\rm s}$. 
Note that $\mm{CZ}_{\rm s}$ can be rewritten as $\mt{c}_{\rm s} + \mt{G} \mm{S}_{\mt{b}_{\rm s}}$ where $
\mm{S}_{\mt{b}_{\rm s}} =  \{\mtg{\sigma} \in \li -\overline{\mtg{\sigma}}, \overline{\mtg{\sigma}} \ri \mid \mt{A} \mtg{\sigma} = \mt{b}_{\rm s}\}$.

The following result shows that, to solve the optimization problem in \eqref{eq:minout0}, 
one can choose $\mt{c}_{\rm s} = \mt{c}'$ and $\mt{b}_{\rm s} = \bfz$ without loss of optimality. 

\begin{prop}\label{prop:cb}
Let $\mm{S}_{\mt{b}_{\rm s}} : = \{\mtg{\sigma} \in \li -\overline{\mtg{\sigma}}, \overline{\mtg{\sigma}} \ri \mid \mt{A} \mtg{\sigma} = \mt{b}_{\rm s}\}$ and $\mm{Z}$ be a zonotope centering at $\mt{c}'$. 
We have 
\begin{align}
& \ \min \{\Vert\overline{\mtg{\sigma}}\Vert_1 \mid \sZ \subseteq \mt{c}' + \mt{G} \mm{S}_{\bfz} \} \nonumber \\
\leq & \ \min \{\Vert\overline{\mtg{\sigma}}\Vert_1 \mid \exists \mt{c}_{\rm s}, \mt{b}_{\rm s} : \sZ \subseteq \mt{c}_{\rm s} + \mt{G} \mm{S}_{\mt{b}_{\rm s}}\}. 
\label{eq:mincb}
\end{align}
\end{prop}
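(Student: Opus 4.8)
The plan is to establish Eq. \eqref{eq:mincb} by showing that the feasible set of the restricted problem on the left (where $\mt{c}_{\rm s} = \mt{c}'$ and $\mt{b}_{\rm s} = \bfz$ are fixed) \emph{contains} the feasible set of the general problem on the right. Since both minimize the same objective $\Vert \overline{\mtg{\sigma}}\Vert_1$, minimizing over the larger feasible set yields the smaller value, which is precisely the claimed inequality. Concretely, I would prove the following implication for a fixed half-width $\overline{\mtg{\sigma}}$: if $\sZ \subseteq \mt{c}_{\rm s} + \mt{G}\,\mm{S}_{\mt{b}_{\rm s}}$ for \emph{some} $\mt{c}_{\rm s}$ and $\mt{b}_{\rm s}$, then the same $\overline{\mtg{\sigma}}$ also satisfies $\sZ \subseteq \mt{c}' + \mt{G}\,\mm{S}_{\bfz}$.

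The engine of the argument is central symmetry. Because $\sZ$ is a zonotope centered at $\mt{c}'$, the reflection $R(\mt{x}) := 2\mt{c}' - \mt{x}$ satisfies $R(\sZ) = \sZ$. I would first record the set identity $-\mm{S}_{\mt{b}_{\rm s}} = \mm{S}_{-\mt{b}_{\rm s}}$, which holds because the box $\li -\overline{\mtg{\sigma}}, \overline{\mtg{\sigma}}\ri$ is symmetric about the origin and $\mt{A}(-\mtg{\sigma}) = -\mt{A}\mtg{\sigma}$. Applying $R$ to the assumed containment then gives the ``mirror'' inclusion $\sZ = R(\sZ) \subseteq R(\mt{c}_{\rm s} + \mt{G}\,\mm{S}_{\mt{b}_{\rm s}}) = (2\mt{c}' - \mt{c}_{\rm s}) + \mt{G}\,\mm{S}_{-\mt{b}_{\rm s}}$.

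With both inclusions in hand, I would finish by an averaging step applied pointwise. For any $\mt{z} \in \sZ$, the two inclusions furnish $\mt{z} = \mt{c}_{\rm s} + \mt{G}\mtg{\sigma}_1$ with $\mtg{\sigma}_1 \in \mm{S}_{\mt{b}_{\rm s}}$ and $\mt{z} = (2\mt{c}' - \mt{c}_{\rm s}) + \mt{G}\mtg{\sigma}_2$ with $\mtg{\sigma}_2 \in \mm{S}_{-\mt{b}_{\rm s}}$. Averaging these two expressions eliminates $\mt{c}_{\rm s}$ and yields $\mt{z} = \mt{c}' + \mt{G}\mtg{\sigma}$ with $\mtg{\sigma} := \tfrac{1}{2}(\mtg{\sigma}_1 + \mtg{\sigma}_2)$. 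Convexity of the box places $\mtg{\sigma} \in \li -\overline{\mtg{\sigma}}, \overline{\mtg{\sigma}}\ri$, while linearity gives $\mt{A}\mtg{\sigma} = \tfrac{1}{2}(\mt{b}_{\rm s} - \mt{b}_{\rm s}) = \bfz$; hence $\mtg{\sigma} \in \mm{S}_{\bfz}$ and $\mt{z} \in \mt{c}' + \mt{G}\,\mm{S}_{\bfz}$. As $\mt{z} \in \sZ$ was arbitrary, the desired inclusion follows.

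The argument is short, and the only delicate points are bookkeeping rather than conceptual. The step that carries all the weight is the averaging: it works exactly because $\sZ$ is centrally symmetric (so that reflecting the inclusion about $\mt{c}'$ produces the companion inclusion) and because the box is centrally symmetric (so that the reflected half-width is again $\overline{\mtg{\sigma}}$ and the convex combination stays in the box). I expect the main obstacle to be stating the reflection identity $-\mm{S}_{\mt{b}_{\rm s}} = \mm{S}_{-\mt{b}_{\rm s}}$ cleanly and making explicit that the homogeneous constraint $\mt{A}\mtg{\sigma} = \bfz$ arises from the cancellation $\mt{b}_{\rm s} + (-\mt{b}_{\rm s}) = \bfz$; without a centrally symmetric subtrahend neither the mirror inclusion nor the cancellation would be available, and the reduction to $\mt{c}_{\rm s} = \mt{c}'$, $\mt{b}_{\rm s} = \bfz$ would fail.
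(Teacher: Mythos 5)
Your proof is correct and is essentially the paper's own argument: the paper intersects $\sZ - \mt{c}'$ with its reflection and extracts the half-difference $\tfrac{1}{2}(\mtg{\theta}-\mtg{\mu})$ of two elements of $\mm{S}_{\mt{b}_{\rm s}}$, which is exactly your half-sum $\tfrac{1}{2}(\mtg{\sigma}_1+\mtg{\sigma}_2)$ after the substitution $\mtg{\sigma}_2 = -\mtg{\mu}$. Both versions rest on the same three ingredients --- central symmetry of $\sZ$ about $\mt{c}'$, symmetry of the box $\li -\overline{\mtg{\sigma}}, \overline{\mtg{\sigma}}\ri$, and cancellation of $\mt{b}_{\rm s}$ under the linear constraint --- and both conclude via feasible-set containment.
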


\begin{proof}
We first prove that $\sZ \subseteq \mt{c}_{\rm s} + \mt{G}\,\mm{S}_{\mt{b}_{\rm s}}$ implies $\sZ \subseteq \mt{c}' + \mt{G} \mm{S}_{\bfz}$. 
Note that $\sZ$ is symmetric w.r.t its center $\mt{c}'$, i.e., $-\sZ+\mt{c}' = \sZ-\mt{c}'$. 
By $\sZ \subseteq \mt{c}_{\rm s} + \mt{G}\,\mm{S}_{\mt{b}_{\rm s}}$, we have 
\begin{align}
& \ \sZ - \mt{c}'  \nonumber \\
\subseteq & \ (\mt{c}_{\rm s}-\mt{c}')+ \mt{G}\,\mm{S}_{\mt{b}_{\rm s}} \cap -(\mt{c}_{\rm s}-\mt{c}')- \mt{G}\,\mm{S}_{\mt{b}_{\rm s}}\nonumber \\
=& \left\{
\begin{array}{c}
\mt{G}\mtg{\theta} + \\
(\mt{c}_{\rm s}-\mt{c}') 
\end{array}
  \bigg\vert
\begin{array}{l}
\mtg{\theta}, \mtg{\mu} \in \li-\overline{\mtg{\sigma}}, \overline{\mtg{\sigma}}\ri,  \mt{A}\mtg{\theta} = \mt{A}\mtg{\mu} = \mt{b}_{\rm s}, \\
 \mt{G}\mtg{\theta} + (\mt{c}_{\rm s}-\mt{c}') = -\mt{G}\mtg{\mu} - (\mt{c}_{\rm s}-\mt{c}')
\end{array}
 \right\}   \nonumber \\
\subseteq &\, \Big\{\mt{G}\mtg{\theta} - \mt{G} \Big(\tfrac{\mtg{\mu} + \mtg{\theta}}{2}\Big) \, \Big\vert \mtg{\theta}, \mtg{\mu} \in \li-\overline{\mtg{\sigma}}, \overline{\mtg{\sigma}}\ri,  \mt{A}(\mtg{\theta}-\mtg{\mu}) = \bfz\Big\} \nonumber \\
=  &\, \Big\{\mt{G} \Big(\tfrac{ \mtg{\theta}-\mtg{\mu}}{2}\Big) \, \Big\vert  \tfrac{ \mtg{\theta}-\mtg{\mu}}{2}  \in \li-\overline{\mtg{\sigma}}, \overline{\mtg{\sigma}}\ri,  \mt{A}\Big(\tfrac{ \mtg{\theta}-\mtg{\mu}}{2}\Big) = \bfz\Big\} \nonumber \\
= & \,\{\mt{G}\mtg{\sigma} \mid \mtg{\sigma} \in \li-\overline{\mtg{\sigma}}, \overline{\mtg{\sigma}}\ri, \mt{A}\mtg{\sigma} = \bfz\}
= \  \mt{G} \mm{S}_{\bfz}
\end{align} 
Therefore, we have $\sZ \subseteq \mt{c}' + G \mm{S}_{\bfz}$. Since $\mm{S}_{\mt{b}_{\rm s}}$ and $\mm{S}_{\bfz}$ are defined by the same $\overline{\mtg{\sigma}}$, Eq. \eqref{eq:mincb} follows readily. 
\end{proof}

 \begin{rmk}
Proposition \ref{prop:cb} holds for any set $\sZ$ that is symmetric w.r.t. $\mt{c}'$ and any other cost  function 
of $\overline{\mtg{\sigma}}$ than $\Vert \mtg{\overline{\sigma}} \Vert_1$. 
\end{rmk}

Next, we show that, with the sufficient condition for $\sZ \subseteq \mm{CZ}_{\rm s}$ given in Lemma \ref{lem:ahpc}, how to find a suboptimal solution of \eqref{eq:minout0} by solving a linear program.

\begin{prop}\label{prop:minout}
Suppose that  $\sZ = \langle \mt{G}',\mt{c}'\rangle$ has nonempty interior.  
Let $\overline{\mtg{\sigma}}$ be part of a minimizer of the following linear program: 
\begin{align}
\begin{array}{rl}
\hspace{-5mm}
 \min\limits_{\overline{\mtg{\sigma}}, \mt{c}_{\rm s}, \mt{b}_{\rm s}, \mtg{\Gamma}, \mtg{\beta}, \mtg{\Lambda}} &  \Vert\overline{\mtg{\sigma}}\Vert_1 \\ 
\text{ s.t. } &  \mt{G}' = \mt{G}\mtg{\Gamma}, \mt{G}\mtg{\beta} = \mt{c}_{\rm s} - \mt{c}', \\
& \mtg{\Lambda} [\mt{I};-\mt{I}] = [\mt{A};-\mt{A};\mt{I};-\mt{I}]\mtg{\Gamma}, \\
& \mtg{\Lambda} \bfo \leq [\mt{b}_{\rm s}; -\mt{b}_{\rm s}; \overline{\mtg{\sigma}}; \overline{\mtg{\sigma}}] +  [\mt{A};-\mt{A};\mt{I};-\mt{I}]\mtg{\beta},  \\
& \bfz \leq \overline{\mtg{\sigma}} \leq \bfo, \  \mtg{\Lambda}\geq \bfz
\end{array}
\tag{min-out}
\label{eq:minout}
\end{align}
then $\mm{Z} \subseteq \mm{CZ}_{\rm s} \cgeq \langle G\,\text{diag}(\overline{\mtg{\sigma}}), \mt{c}_{\rm s}, A\,\text{diag}(\overline{\mtg{\sigma}}), \mt{b}_{\rm s} \rangle$.  
\end{prop}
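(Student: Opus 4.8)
The plan is to recognize this proposition as a direct application of the AH-polytope containment encoding (Lemma \ref{lem:ahpc}), with $\mm{Z}$ playing the role of the inner set $\mm{S}_1$ and $\mm{CZ}_{\rm s}$ the outer set $\mm{S}_2$. The hypothesis that $\mm{Z} = \langle \mt{G}', \mt{c}'\rangle$ has nonempty interior is exactly the nonempty-interior requirement on $\mm{S}_1$ in Lemma \ref{lem:ahpc}, so the lemma applies. The task then reduces to writing both sets as AH-polytopes and checking that the constraints of \eqref{eq:minout} coincide, family by family, with the sufficient condition in Eq. \eqref{eq:ahpc}. Note that only feasibility of \eqref{eq:minout} is needed; a minimizer is in particular feasible.

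The crucial modeling step, and the one that makes the program linear rather than bilinear, is to express $\mm{CZ}_{\rm s}$ as an AH-polytope whose generator matrix is the \emph{fixed} matrix $\mt{G}$, pushing the dependence on $\overline{\mtg{\sigma}}$ into the offset vector. To this end I would perform the change of variable $\mtg{\sigma} = \text{diag}(\overline{\mtg{\sigma}})\mtg{\eta}$ in the CG-Rep $\langle \mt{G}\,\text{diag}(\overline{\mtg{\sigma}}), \mt{c}_{\rm s}, \mt{A}\,\text{diag}(\overline{\mtg{\sigma}}), \mt{b}_{\rm s}\rangle$. Since $\overline{\mtg{\sigma}} \geq \bfz$, the image of the box $\li -\bfo, \bfo \ri$ under $\text{diag}(\overline{\mtg{\sigma}})$ is precisely $\li -\overline{\mtg{\sigma}}, \overline{\mtg{\sigma}}\ri$, and one obtains the set identity $\mm{CZ}_{\rm s} = \mt{c}_{\rm s} + \mt{G}\,\mm{S}_{\mt{b}_{\rm s}}$ with $\mm{S}_{\mt{b}_{\rm s}} = \{\mtg{\sigma} \in \li -\overline{\mtg{\sigma}}, \overline{\mtg{\sigma}} \ri \mid \mt{A}\mtg{\sigma} = \mt{b}_{\rm s}\}$. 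Encoding $\mm{S}_{\mt{b}_{\rm s}}$ as inequalities then yields the AH-form $\mm{CZ}_{\rm s} = \mt{c}_{\rm s} + \mt{G}\{\mtg{\sigma} \mid \mt{H}_2 \mtg{\sigma} \leq \mt{a}_2\}$ with $\mt{G}_2 = \mt{G}$, $\mt{c}_2 = \mt{c}_{\rm s}$, $\mt{H}_2 = [\mt{A};-\mt{A};\mt{I};-\mt{I}]$ and $\mt{a}_2 = [\mt{b}_{\rm s};-\mt{b}_{\rm s};\overline{\mtg{\sigma}};\overline{\mtg{\sigma}}]$; the variable $\overline{\mtg{\sigma}}$ now sits only in $\mt{a}_2$, which is the favorable outer-set position anticipated in Sec. \ref{sec:naive}.

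Next I would instantiate the inner set using the preliminaries' AH-form of a zonotope, $\mm{Z} = \mt{c}' + \mt{G}'\{\mtg{\theta} \mid [\mt{I};-\mt{I}]\mtg{\theta} \leq \bfo\}$, giving $\mt{G}_1 = \mt{G}'$, $\mt{c}_1 = \mt{c}'$, $\mt{H}_1 = [\mt{I};-\mt{I}]$ and $\mt{a}_1 = \bfo$. Substituting both sets of matrices into Eq. \eqref{eq:ahpc} reproduces exactly the five constraint families of \eqref{eq:minout}: $\mt{G}' = \mt{G}\mtg{\Gamma}$ from $\mt{G}_1 = \mt{G}_2\mtg{\Gamma}$; $\mt{G}\mtg{\beta} = \mt{c}_{\rm s}-\mt{c}'$ from $\mt{c}_2-\mt{c}_1 = \mt{G}_2\mtg{\beta}$; $\mtg{\Lambda}[\mt{I};-\mt{I}] = [\mt{A};-\mt{A};\mt{I};-\mt{I}]\mtg{\Gamma}$ from $\mtg{\Lambda}\mt{H}_1 = \mt{H}_2\mtg{\Gamma}$; the inequality $\mtg{\Lambda}\bfo \leq [\mt{b}_{\rm s};-\mt{b}_{\rm s};\overline{\mtg{\sigma}};\overline{\mtg{\sigma}}] + [\mt{A};-\mt{A};\mt{I};-\mt{I}]\mtg{\beta}$ from $\mtg{\Lambda}\mt{a}_1 \leq \mt{a}_2 + \mt{H}_2\mtg{\beta}$; and $\mtg{\Lambda}\geq\bfz$. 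The extra box constraint $\bfz \leq \overline{\mtg{\sigma}} \leq \bfo$ only guarantees $\overline{\mtg{\sigma}} \in \li\bfz,\bfo\ri$, which both legitimizes the change of variable and keeps $\mm{CZ}_{\rm s}$ a constrained zonotope with the intended template. Hence any feasible point of \eqref{eq:minout} certifies the hypotheses of Lemma \ref{lem:ahpc}, and $\mm{Z}\subseteq\mm{CZ}_{\rm s}$ follows.

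I do not expect a genuine obstacle here: once the AH-representation with fixed generator $\mt{G}$ is in hand, the argument is a mechanical matching of constraints. The only point deserving care is the change of variable when some component of $\overline{\mtg{\sigma}}$ vanishes, where $\text{diag}(\overline{\mtg{\sigma}})$ is non-injective; I would therefore state that step purely as an equality of image sets rather than a coordinate bijection, which sidesteps the degeneracy entirely.
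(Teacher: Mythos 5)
Your proposal is correct and follows exactly the paper's own argument: rewrite $\mm{CZ}_{\rm s}$ as an AH-polytope with fixed generator matrix $\mt{G}$ and the $\overline{\mtg{\sigma}}$-dependence absorbed into the offset vector $[\mt{b}_{\rm s};-\mt{b}_{\rm s};\overline{\mtg{\sigma}};\overline{\mtg{\sigma}}]$, write $\mm{Z}$ in its standard AH-form, and apply Lemma \ref{lem:ahpc}, whose conditions coincide term by term with the constraints of \eqref{eq:minout}. Your extra care about the non-injectivity of $\text{diag}(\overline{\mtg{\sigma}})$ when components vanish is a point the paper glosses over, but it is handled the same way implicitly (as a set identity), so the two proofs are essentially identical.
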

 
\begin{proof}
Note that the constrained zonotope $\mm{CZ}_{\rm s}$ and the zonotope $\mm{Z}$ can be written as 
\begin{align}
\hspace{-3mm}\mm{CZ}_{\rm s} 
& = \mt{c}_{\rm s} +\mt{G} \{\mtg{\sigma}   \mid   [\mt{A};-\mt{A};\mt{I};-\mt{I}]\mtg{\sigma} \leq [\mt{b}_{\rm s}; -\mt{b}_{\rm s}; \overline{\mtg{\sigma}}; \overline{\mtg{\sigma}}] \}, \label{eq:CZah}\\
\mm{Z} & = \mt{c}' + \mt{G}'\{\mtg{\sigma} \mid [\mt{I};-\mt{I}] \mtg{\sigma} \leq \bfo \}. \label{eq:Zah}
\end{align}
Therefore $\mm{CZ}_{\rm s} \supseteq \mm{Z}$ can be enforced by a set of linear constraints using Lemma \ref{lem:ahpc}, which leads to \eqref{eq:minout}.  
\end{proof}

In spite of Proposition \ref{prop:cb}, we keep $\mt{c}_{\rm s}$, $\mt{b}_{\rm s}$ as free variables in \eqref{eq:minout}. 
In what follows, we show that one can also set $\mt{c}_{\rm s} = \mt{c}'$ and $\mt{b}_{\rm s} = \bfz$ in \eqref{eq:minout} without loss of optimality. This result leads to a linear program equivalent to \eqref{eq:minout} with fewer variables and constraints. 
Note that this result does not follow immediately from Proposition \ref{prop:cb} because the condition in Lemma \ref{lem:ahpc} is only sufficient but not necessary in general (in fact, if that condition were also necessary, it would be straightforward that $\mt{c}_{\rm s} = \mt{c}'$ and $\mt{b}_{\rm s} = \bfz$ is optimal for \eqref{eq:minout}). 
The proof is based on the following observations. 

\begin{prop}\label{prop:obs}
Let $(\overline{\mtg{\sigma}}, \mt{c}_{\rm s}, \mt{b}_{\rm s}, \mtg{\Gamma}, \mtg{\beta}, \mtg{\Lambda})$ be a feasible solution of \eqref{eq:minout}, then i) $\mt{c}_{\rm s} = \mt{G} \mtg{\beta} - \mt{c}'$, ii) $\mt{b}_{\rm s} = - \mt{A} \mtg{\beta}$ and iii) $\mt{A} \mtg{\Gamma} = \bfz$, 
and iv) $(\overline{\mtg{\sigma}}, \mt{c}', \bfz, \mtg{\Gamma}, \bfz, \underline{\mtg{\Lambda}})$ is feasible for some  $\underline{\mtg{\Lambda}}$. 
\end{prop}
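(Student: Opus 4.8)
The plan is to read each of the four claims directly off the constraints of \eqref{eq:minout}, exploiting the sign structure of the stacked matrices $[\mt{A};-\mt{A};\mt{I};-\mt{I}]$ and $[\mt{I};-\mt{I}]$ together with $\mtg{\Lambda}\geq\bfz$. First I would partition the multiplier by rows into the four blocks matching $\mt{A},-\mt{A},\mt{I},-\mt{I}$, writing $\mtg{\Lambda} = [\mtg{\Lambda}_1;\mtg{\Lambda}_2;\mtg{\Lambda}_3;\mtg{\Lambda}_4]$, and by columns into the two halves matching $[\mt{I};-\mt{I}]$, writing $\mtg{\Lambda}_i = [\mtg{\Lambda}_i^+\,|\,\mtg{\Lambda}_i^-]$ so that $\mtg{\Lambda}_i[\mt{I};-\mt{I}] = \mtg{\Lambda}_i^+ - \mtg{\Lambda}_i^-$. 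With this bookkeeping the equality constraint reads $\mtg{\Lambda}_1^+-\mtg{\Lambda}_1^- = \mt{A}\mtg{\Gamma}$, $\mtg{\Lambda}_2^+-\mtg{\Lambda}_2^- = -\mt{A}\mtg{\Gamma}$, $\mtg{\Lambda}_3^+-\mtg{\Lambda}_3^- = \mtg{\Gamma}$, $\mtg{\Lambda}_4^+-\mtg{\Lambda}_4^- = -\mtg{\Gamma}$, and the inequality constraint reads, block by block, $\mtg{\Lambda}_1\bfo \leq \mt{b}_{\rm s}+\mt{A}\mtg{\beta}$, $\mtg{\Lambda}_2\bfo \leq -\mt{b}_{\rm s}-\mt{A}\mtg{\beta}$, $\mtg{\Lambda}_3\bfo \leq \overline{\mtg{\sigma}}+\mtg{\beta}$, $\mtg{\Lambda}_4\bfo \leq \overline{\mtg{\sigma}}-\mtg{\beta}$, where $\mtg{\Lambda}_i\bfo$ is the vector of row sums.

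Claim i) is simply the center-matching equality $\mt{G}\mtg{\beta}=\mt{c}_{\rm s}-\mt{c}'$ of \eqref{eq:minout}, so it requires no work. For ii) and iii) I would add the first two inequality blocks to get $\mtg{\Lambda}_1\bfo+\mtg{\Lambda}_2\bfo \leq \bfz$; since each summand is the row sum of a nonnegative block it is also $\geq\bfz$, which squeezes $\mtg{\Lambda}_1=\mtg{\Lambda}_2=\bfz$ and forces $\mt{b}_{\rm s}+\mt{A}\mtg{\beta}=\bfz$, i.e.\ $\mt{b}_{\rm s}=-\mt{A}\mtg{\beta}$ (claim ii). Substituting $\mtg{\Lambda}_1=\bfz$ into the first equality block then gives $\mt{A}\mtg{\Gamma} = \mtg{\Lambda}_1^+-\mtg{\Lambda}_1^- = \bfz$ (claim iii).

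The main work is claim iv). After setting $\mt{c}_{\rm s}=\mt{c}'$, $\mt{b}_{\rm s}=\bfz$, $\mtg{\beta}=\bfz$ while keeping $\overline{\mtg{\sigma}},\mtg{\Gamma}$, the two equality blocks for $\pm\mt{A}$ reduce to $\mt{A}\mtg{\Gamma}=\bfz$, which holds by iii), and the center/generator constraints hold trivially, so the only thing to repair is that the $\pm\mt{I}$ inequalities tighten from $\mtg{\Lambda}_3\bfo\leq\overline{\mtg{\sigma}}+\mtg{\beta}$, $\mtg{\Lambda}_4\bfo\leq\overline{\mtg{\sigma}}-\mtg{\beta}$ to the $\mtg{\beta}$-free bounds $\underline{\mtg{\Lambda}}_3\bfo\leq\overline{\mtg{\sigma}}$, $\underline{\mtg{\Lambda}}_4\bfo\leq\overline{\mtg{\sigma}}$. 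Keeping the original $\mtg{\Lambda}$ fails precisely because of the leftover $\mtg{\beta}$, so the key idea I would use is a symmetrization that averages the two $\pm\mt{I}$ multiplier blocks across their $\pm$ column halves: set $\underline{\mtg{\Lambda}}_3^+=\tfrac12(\mtg{\Lambda}_3^++\mtg{\Lambda}_4^-)$, $\underline{\mtg{\Lambda}}_3^-=\tfrac12(\mtg{\Lambda}_3^-+\mtg{\Lambda}_4^+)$, symmetrically $\underline{\mtg{\Lambda}}_4^+=\tfrac12(\mtg{\Lambda}_4^++\mtg{\Lambda}_3^-)$, $\underline{\mtg{\Lambda}}_4^-=\tfrac12(\mtg{\Lambda}_4^-+\mtg{\Lambda}_3^+)$, and $\underline{\mtg{\Lambda}}_1=\underline{\mtg{\Lambda}}_2=\bfz$. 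Because block $3$ encodes $\mtg{\Lambda}_3^+-\mtg{\Lambda}_3^-=\mtg{\Gamma}$ and block $4$ encodes $\mtg{\Lambda}_4^+-\mtg{\Lambda}_4^-=-\mtg{\Gamma}$, the averaged blocks still reproduce $\mtg{\Gamma}$ and $-\mtg{\Gamma}$ in the equality constraint, while the row sums average to $\tfrac12(\mtg{\Lambda}_3\bfo+\mtg{\Lambda}_4\bfo)\leq\tfrac12\big((\overline{\mtg{\sigma}}+\mtg{\beta})+(\overline{\mtg{\sigma}}-\mtg{\beta})\big)=\overline{\mtg{\sigma}}$, exactly cancelling $\mtg{\beta}$; nonnegativity is preserved since the new blocks are averages of nonnegative ones. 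This shows $(\overline{\mtg{\sigma}},\mt{c}',\bfz,\mtg{\Gamma},\bfz,\underline{\mtg{\Lambda}})$ is feasible, establishing iv). The one subtlety I would watch is the dimension bookkeeping so that the averaging pairs compatible blocks; once the row/column partition is fixed this is routine.
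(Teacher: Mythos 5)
Your proof is correct. Parts i)--iii) coincide with the paper's own argument: both of you read i) directly off the constraint $\mt{G}\mtg{\beta}=\mt{c}_{\rm s}-\mt{c}'$ (note the proposition as printed has a sign typo; the constraint actually gives $\mt{c}_{\rm s}=\mt{G}\mtg{\beta}+\mt{c}'$), and both use nonnegativity of $\mtg{\Lambda}$ against the $\pm$ structure of the right-hand side to force the $\pm\mt{A}$ rows of the inequality to be tight and the corresponding multiplier rows to vanish, yielding ii) and iii). Part iv) is where you genuinely diverge. The paper builds $\underline{\mtg{\Lambda}}$ by a case split on the sign of each entry $\beta_i$: for each $i$ it keeps whichever of the two rows $2m+i$, $2m+N+i$ of $\mtg{\Lambda}$ carries the tighter bound $\overline{\sigma}_i-|\beta_i|$, and manufactures the partner row by swapping the two column halves (the swap converts a row producing the $i^{\rm th}$ row of $\mtg{\Gamma}$ under $[\mt{I};-\mt{I}]$ into one producing its negation). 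Your symmetrization --- averaging block $3$ with the column-swapped block $4$, and vice versa --- achieves the same goal with no case analysis: the differences of column halves average to $\tfrac12(\mtg{\Gamma}-(-\mtg{\Gamma}))=\mtg{\Gamma}$ and $-\mtg{\Gamma}$, while the row sums average to at most $\tfrac12\big((\overline{\mtg{\sigma}}+\mtg{\beta})+(\overline{\mtg{\sigma}}-\mtg{\beta})\big)=\overline{\mtg{\sigma}}$, cancelling $\mtg{\beta}$ exactly. Both constructions output a matrix of the structured form $[\bfz_{2m\times 2N'};\mt{P},\mt{Q};\mt{Q},\mt{P}]$ (in your notation $\underline{\mtg{\Lambda}}_4=[\underline{\mtg{\Lambda}}_3^-,\underline{\mtg{\Lambda}}_3^+]$), which is precisely what the paper's Theorem 1 later extracts from this proposition, so your version would serve equally well downstream. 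The only thing the paper's selection buys is the marginally sharper row-sum bound $\overline{\mtg{\sigma}}-|\mtg{\beta}|$ in place of your $\overline{\mtg{\sigma}}$; that extra tightness is not needed for feasibility here nor in the subsequent use, so your case-free route is a clean alternative.
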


\begin{proof}
Bullet i) follows from the constraint $\mt{G}\mtg{\beta} = \mt{c}_{\rm s} - \mt{c}'$. 
By $\mtg{\Lambda} \geq \bfz$ (hence $\mtg{\Lambda} \bfo\geq \bfz$) and $\mtg{\Lambda} \bfo \leq [\mt{b}_{\rm s}; -\mt{b}_{\rm s}; \overline{\mtg{\sigma}}; \overline{\mtg{\sigma}}] +  [\mt{A};-\mt{A}; \mt{I}; -\mt{I}]\mtg{\beta}$, we have 
\begin{align}
\bfz\leq \mtg{\Lambda} \bfo \leq [\mt{b}_{\rm s}; -\mt{b}_{\rm s}; \overline{\mtg{\sigma}}; \overline{\mtg{\sigma}}] +  [\mt{A};-\mt{A}; \mt{I};-\mt{I}]\mtg{\beta}.
\end{align}
This implies that $\bfz \leq \mt{b}_{\rm s} + \mt{A}\mtg{\beta}$ and $\bfz \leq -\mt{b}_{\rm s}-\mt{A}\mtg{\beta}$, i.e., $\mt{b}_{\rm s} + \mt{A}\mtg{\beta} = \bfz$. 
Hence bullet ii) holds. 
Also, $\mtg{\Lambda} \bfo  = [\bfz_{2m}; \overline{\mtg{\sigma}} + \mtg{\beta}; \overline{\mtg{\sigma}} - \mtg{\beta}]$ and hence the upper part of matrix $\mtg{\Lambda}$ must be all zeros, i.e., $\mtg{\Lambda} = [\bfz_{2m\times 2N'};\widetilde{\mtg{\Lambda}}]$ for some $\widetilde{\mtg{\Lambda}} \geq \bfz$, where $m$ is the height of $\mt{A}$ and $N'$ is the width of $\mt{G}'$. 
This further implies that $\mt{A} \mtg{\Gamma}  = \bfz$ because $\mtg{\Lambda} [\mt{I};-\mt{I}] = [\mt{A};-\mt{A};\mt{I};-\mt{I}]\mtg{\Gamma}$. 

To prove bullet iv), define $\underline{\mtg{\Lambda}}$ as follows. The topmost $2m$ rows of $\underline{\mtg{\Lambda}}$ are all zeros. 
For $i = 1,2\dots N$,  where $N$ is the width of $\mt{G}$, 
\begin{itemize}
\item[i)] if the $i^{\rm th}$ element of $\mtg{\beta}$ is non-positive, we define the $2m+i^{\rm th}$ row of $\underline{\mtg{\Lambda}}$ to be the same as that of $\mtg{\Lambda}$, i.e., $[\mtg{\Lambda}_{2m+i,1:N'}, \mtg{\Lambda}_{2m+i,N'+1:2N'}]$, and the $2m+N+i^{\rm th}$ row of $\underline{\mtg{\Lambda}}$ to be $[\mtg{\Lambda}_{2m+i,N'+1:2N'}, \mtg{\Lambda}_{2m+i,1:N'}]$; 
\item[ii)]  if the $i^{\rm th}$ element of $\mtg{\beta}$ is positive, we define the $2m+N+i^{\rm th}$ row of $\underline{\mtg{\Lambda}}$ to be the same as that of $\mtg{\Lambda}$, i.e., $[\mtg{\Lambda}_{2m+N+i,1:N'}, \mtg{\Lambda}_{2m+N+i,N'+1:2N'}]$, and the $2m+i^{\rm th}$ row of $\underline{\mtg{\Lambda}}$ to be $[\mtg{\Lambda}_{2m+N+i,N'+1:2N'}, \mtg{\Lambda}_{2m+N+i,1:N'}]$. 
\end{itemize}
By construction, $\underline{\mtg{\Lambda}}$ has a special structure, i.e., $\underline{\mtg{\Lambda}} = [\bfz_{2m\times 2N'}; \mtg{\Lambda}_1, \mtg{\Lambda}_2; \mtg{\Lambda}_2, \mtg{\Lambda}_1]$. 
Moreover, $\underline{\mtg{\Lambda}} [\mt{I}; -\mt{I}] = [\bfz_{2m}; \mtg{\Gamma}; -\mtg{\Gamma}]$ and  $\underline{\mtg{\Lambda}} \bfo \leq [\bfz_{2m}; \overline{\mtg{\sigma}}-|\mtg{\beta}|; \overline{\mtg{\sigma}}-|\mtg{\beta}|] \leq [\bfz_{2m}; \overline{\mtg{\sigma}}; \overline{\mtg{\sigma}}]$. 
Together with bullet i) ii) and iii), it is straightforward to check that $(\overline{\mtg{\sigma}}, \mt{c}', \bfz, \mtg{\Gamma}, \bfz, \underline{\mtg{\Lambda}})$ is feasible. 
\end{proof}

Proposition \ref{prop:obs} leads to a simplification of \eqref{eq:minout}.

\begin{thm}\label{thm:simple}
The linear program \eqref{eq:minout} is equivalent to
\begin{align}
\begin{array}{rl}
\hspace{-5mm}\min_{\mtg{\Gamma}} &   \Vert\vert\mtg{\Gamma}\vert \bfo\Vert_1 \\
\text{ s.t. } &  [\mt{G};  \mt{A}]\mtg{\Gamma} = [\mt{G}'; \bfz], \ \vert\mtg{\Gamma}\vert \bfo \leq \bfo
\end{array}. 
\tag{simple}
\label{eq:minouts}
\end{align}
\end{thm}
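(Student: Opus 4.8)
The plan is to eliminate the auxiliary variables $\mt{c}_{\rm s}$, $\mt{b}_{\rm s}$, $\mtg{\beta}$, $\mtg{\Lambda}$ and $\overline{\mtg{\sigma}}$ one group at a time, reducing the optimization to one over $\mtg{\Gamma}$ alone, which will be exactly \eqref{eq:minouts}.

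First I would use Proposition \ref{prop:obs} to fix the centering variables. Its bullet iv) shows that from any feasible point $(\overline{\mtg{\sigma}}, \mt{c}_{\rm s}, \mt{b}_{\rm s}, \mtg{\Gamma}, \mtg{\beta}, \mtg{\Lambda})$ of \eqref{eq:minout} one obtains a feasible point $(\overline{\mtg{\sigma}}, \mt{c}', \bfz, \mtg{\Gamma}, \bfz, \underline{\mtg{\Lambda}})$ with the \emph{same} $\overline{\mtg{\sigma}}$ and $\mtg{\Gamma}$, hence the same objective; and conversely any point with $\mt{c}_{\rm s} = \mt{c}'$, $\mt{b}_{\rm s} = \bfz$, $\mtg{\beta} = \bfz$ is a fortiori feasible for \eqref{eq:minout}. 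Thus \eqref{eq:minout} has the same optimal value as the program obtained by fixing $\mt{c}_{\rm s} = \mt{c}'$, $\mt{b}_{\rm s} = \bfz$, $\mtg{\beta} = \bfz$, whose constraints are $\mt{G}\mtg{\Gamma} = \mt{G}'$, $\mtg{\Lambda}[\mt{I};-\mt{I}] = [\mt{A};-\mt{A};\mt{I};-\mt{I}]\mtg{\Gamma}$, $\mtg{\Lambda}\bfo \leq [\bfz;\bfz;\overline{\mtg{\sigma}};\overline{\mtg{\sigma}}]$, $\bfz\leq\overline{\mtg{\sigma}}\leq\bfo$ and $\mtg{\Lambda}\geq\bfz$.

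Second I would eliminate $\mtg{\Lambda}$ and pin down, for a fixed $\mtg{\Gamma}$, the smallest feasible $\overline{\mtg{\sigma}}$. Partition $\mtg{\Lambda} = [\mtg{\Lambda}^{(1)}, \mtg{\Lambda}^{(2)}]$ into its first and last $N'$ columns, so $\mtg{\Lambda}[\mt{I};-\mt{I}] = \mtg{\Lambda}^{(1)} - \mtg{\Lambda}^{(2)}$, and split the rows into the top $2m$, middle $N$ and bottom $N$ blocks matching $[\mt{A}\mtg{\Gamma};-\mt{A}\mtg{\Gamma};\mtg{\Gamma};-\mtg{\Gamma}]$. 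On the top block, nonnegativity of $\mtg{\Lambda}$ together with $\mtg{\Lambda}\bfo\leq\bfz$ forces those rows of $\mtg{\Lambda}^{(1)}$ and $\mtg{\Lambda}^{(2)}$ to vanish, which in turn yields $\mt{A}\mtg{\Gamma}=\bfz$ (recovering bullet iii of Proposition \ref{prop:obs}) and leaves the top block inactive. On the middle block, row $i$ reads $\mtg{\Lambda}^{(1)}_{i,:} - \mtg{\Lambda}^{(2)}_{i,:} = \mtg{\Gamma}_{i,:}$ with both rows nonnegative, so entrywise $\Lambda^{(1)}_{ij}+\Lambda^{(2)}_{ij}\geq |\Gamma_{ij}|$, with equality attainable by zeroing the smaller summand; summing over $j$, the least achievable value of $(\mtg{\Lambda}\bfo)_i$ on this block is exactly $(\vert\mtg{\Gamma}\vert\bfo)_i$. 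The bottom block, involving $-\mtg{\Gamma}$, gives the identical bound. Hence a nonnegative $\mtg{\Lambda}$ satisfying $\mtg{\Lambda}\bfo\leq[\bfz;\bfz;\overline{\mtg{\sigma}};\overline{\mtg{\sigma}}]$ exists if and only if $\overline{\mtg{\sigma}}\geq \vert\mtg{\Gamma}\vert\bfo$ componentwise.

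Finally I would assemble the pieces: since $\overline{\mtg{\sigma}}\geq\bfz$ and $\Vert\overline{\mtg{\sigma}}\Vert_1 = \bfo^\top\overline{\mtg{\sigma}}$ is coordinatewise monotone, for fixed $\mtg{\Gamma}$ the optimum is attained at $\overline{\mtg{\sigma}} = \vert\mtg{\Gamma}\vert\bfo$, with value $\Vert\vert\mtg{\Gamma}\vert\bfo\Vert_1$, and the bound $\overline{\mtg{\sigma}}\leq\bfo$ becomes exactly $\vert\mtg{\Gamma}\vert\bfo\leq\bfo$; minimizing over $\mtg{\Gamma}$ subject to $\mt{G}\mtg{\Gamma}=\mt{G}'$ and $\mt{A}\mtg{\Gamma}=\bfz$, i.e. $[\mt{G};\mt{A}]\mtg{\Gamma}=[\mt{G}';\bfz]$, is precisely \eqref{eq:minouts}. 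The correspondence is two-way: any minimizing $\mtg{\Gamma}$ of \eqref{eq:minouts} recovers a minimizer of \eqref{eq:minout} through $\overline{\mtg{\sigma}}=\vert\mtg{\Gamma}\vert\bfo$ and the $\mtg{\Lambda}$ built above, establishing equivalence. I expect the delicate step to be the middle one, namely the block bookkeeping of $\mtg{\Lambda}$ and the argument that $\vert\mtg{\Gamma}\vert\bfo$ is simultaneously a valid lower bound on $\overline{\mtg{\sigma}}$ and attainable; it is also worth remarking that \eqref{eq:minouts} is genuinely a linear program, since both $\Vert\vert\mtg{\Gamma}\vert\bfo\Vert_1$ and $\vert\mtg{\Gamma}\vert\bfo\leq\bfo$ linearize by the standard device of introducing an entrywise majorant of $\vert\mtg{\Gamma}\vert$.
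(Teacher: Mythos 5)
Your proof is correct and follows essentially the same route as the paper's: both reduce to the centered case $\mt{c}_{\rm s}=\mt{c}'$, $\mt{b}_{\rm s}=\bfz$, $\mtg{\beta}=\bfz$ via Proposition \ref{prop:obs}, and both relate $\mtg{\Lambda}$ to $\mtg{\Gamma}$ through the positive/negative-part decomposition, with the row-sum bound $\mtg{\Lambda}_1\bfo+\mtg{\Lambda}_2\bfo\geq\vert\mtg{\Gamma}\vert\bfo$ doing the key work in both arguments. Your block-by-block elimination is a slightly sharper packaging---it characterizes exactly when a feasible $\mtg{\Lambda}$ exists (and avoids the paper's minor slip of writing $\mtg{\Lambda}_1\bfo+\mtg{\Lambda}_2\bfo=\vert\mtg{\Gamma}\vert\bfo$ where only ``$\geq$'' holds in general), whereas the paper maps minimizers back and forth---but the mathematical content is the same.
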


\begin{proof} 
If $\mtg{\Gamma}$ minimizes \eqref{eq:minouts}, $(\overline{\mtg{\sigma}}, \mt{c}', \bfz, \mtg{\Gamma}, \bfz, \underline{\mtg{\Lambda}})$ is feasible to \eqref{eq:minout}, where $\overline{\mtg{\sigma}}=\vert\mtg{\Gamma}\vert \bfo$, $\underline{\mtg{\Lambda}} = [\bfz_{2m\times 2N'}; \mtg{\Lambda}_1, \mtg{\Lambda}_2; \mtg{\Lambda}_2, \mtg{\Lambda}_1]$, $\mtg{\Lambda}_1 = \mtg{\Gamma}^+ : = (\vert\mtg{\Gamma}\vert + \mtg{\Gamma})/2$ and $\mtg{\Lambda}_2 = \mtg{\Gamma}^- : = (\vert\mtg{\Gamma}\vert - \mtg{\Gamma})/2$. 
Moreover, the cost given by this feasible solution is $\Vert \overline{\mtg{\sigma}} \Vert_1 = \Vert\vert\mtg{\Gamma}\vert \bfo \Vert_1$, i.e., the same as the minimum of \eqref{eq:minouts}.  
Suppose that $(\overline{\mtg{\sigma}}, \mt{c}_{\rm s}, \mt{b}_{\rm s}, \mtg{\Gamma}, \mtg{\beta}, \mtg{\Lambda})$ minimizes  \eqref{eq:minout}.  
Construct $\underline{\mtg{\Lambda}}$ from $\mtg{\Lambda}$ as in the proof of Proposition \ref{prop:obs}, and let $[\mtg{\Lambda}_1,  \mtg{\Lambda}_2]$ consist of the $2m+1^{\rm st}$ to $2m+N^{\rm th}$ rows of $\underline{\mtg{\Lambda}}$. 
Then $\mtg{\Gamma} = \mtg{\Lambda}_1 - \mtg{\Lambda}_2$ is feasible to \eqref{eq:minouts}. 
Further, the cost associated with $\mtg{\Gamma}$ is  $\Vert\vert\mtg{\Gamma}\vert \bfo \Vert_1 \leq \Vert \overline{\mtg{\sigma}} \Vert_1$.
This is because, by Proposition \ref{prop:obs}, $(\overline{\mtg{\sigma}}, \mt{c}', \bfz, \mtg{\Gamma}, \bfz, \underline{\mtg{\Lambda}})$ is also feasible to \eqref{eq:minout} and hence $\overline{\mtg{\sigma}} \geq \mtg{\Lambda}_1 \bfo + \mtg{\Lambda}_2 \bfo = \vert \mtg{\Gamma} \vert \bfo$ must hold.
\end{proof}

By Theorem \ref{thm:simple}, we can find the minimizer $\mtg{\Gamma}$ of \eqref{eq:minouts}, define $\overline{\mtg{\sigma}} = \vert\mtg{\Gamma}\vert \bfo$ and $\langle \mt{G}\,\text{diag}(\overline{\mtg{\sigma}}),$ $ \mt{c}', \mt{A}\,\text{diag}(\overline{\mtg{\sigma}}), \bfz\rangle$ is guaranteed to enclose $\mm{Z}$. 
 From now no, we will use $\mm{CZ}_{\rm s}$ to denote the constrained zonotope $\langle \mt{G}\,\text{diag}(\overline{\mtg{\sigma}}), \mt{c}', \mt{A}\,\text{diag}(\overline{\mtg{\sigma}}), \bfz\rangle$ and omit the subscript ``$\bfz$'' of the set $\mm{S}_{\bfz} = \{\mtg{\sigma} \in  \li -\overline{\mtg{\sigma}}, \overline{\mtg{\sigma}}\ri\mid \mt{A} \mtg{\sigma} = \bfz\}$.

The simplified optimization problem \eqref{eq:minouts} has a geometric interpretation. 
Its decision variable $\mtg{\Gamma}$ can be viewed as the generator matrix of a zonotope $\langle \mtg{\Gamma}, \bfz\rangle \subseteq \mathbb{R}^{N}$, where $N$ is the width of $\mt{G}$. 
The inner zonotope $\mm{Z}$  is the image of $\langle \mtg{\Gamma}, \bfz\rangle$ under linear map $\mt{G}$ and translation $\mt{c}'$. Moreover, $\langle \mtg{\Gamma}, \bfz\rangle$ is in the null space of $\mt{A}$, and $\li - \overline{\mtg{\sigma}}, \overline{\mtg{\sigma}}\ri$ is the smallest hyper-box that encloses $\langle \mtg{\Gamma}, \bfz\rangle$ as $\overline{\mtg{\sigma}} = \vert \mtg{\Gamma} \vert \bfo$. With this interpretation, it is easy to see $\mm{Z} \subseteq \mm{CZ}_{\rm s}$. 
To be precise, 
\begin{align}
\mm{Z} = & \,  \langle \mt{G}', \mt{c}'\rangle \nonumber \\
= & \, \{\mt{G}\mtg{\Gamma}\mtg{\theta} + \mt{c}' \mid \mtg{\theta} \in \li -\bfo, \bfo \ri \} & (\mt{G}\mtg{\Gamma} = \mt{G}') \nonumber \\
= & \, \{\mt{G} \mtg{\sigma} + \mt{c}' \mid \mtg{\sigma} \in \langle \mtg{\Gamma}, \bfz\rangle\}\nonumber \\
= & \,  \{\mt{G} \mtg{\sigma} + \mt{c}' \mid \mtg{\sigma} \in \langle \mtg{\Gamma}, \bfz\rangle, \mt{A} \mtg{\sigma} = \bfz\} & (\mt{A}\mtg{\Gamma} = \bfz)\nonumber \\
\subseteq & \,  \{\mt{G} \mtg{\sigma} + \mt{c}' \mid \mtg{\sigma} \in \li - \overline{\mtg{\sigma}}, \overline{\mtg{\sigma}}\ri, \mt{A} \mtg{\sigma} = \bfz\} & (\overline{\mtg{\sigma}} = \vert\mtg{\Gamma}\vert \bfo) \nonumber \\
= & \, \mt{c}' + \mt{G} \mm{S} = \mm{CZ}_{\rm s}. 
\end{align}
 
\begin{rmk}\label{rmk:wlog}
For arbitrary constrained zonotope $\langle \mt{G}, \mt{c}, \mt{A}, \mt{b}\rangle$, the unit hyper-box $\li - \bfo, \bfo\ri$ may not necessarily be \emph{tight}, i.e., it is not the smallest hyper-box that  encloses $\{\mtg{\theta} \in \li - \bfo, \bfo\ri \mid \mt{A}\mtg{\theta} = \mt{b} \}$. 
Such a tight hyper-box can be founded by solving $2N$ linear programs, where $N$ is the width of matrix $\mt{G}$, or can be outer-approximated more efficiently by an iterative method proposed in \cite{scott2016constrained}. 
However, for $\langle \mt{G}\,\text{diag}(\overline{\mtg{\sigma}}), \mt{c}', \mt{A}\,\text{diag}(\overline{\mtg{\sigma}}), \bfz\rangle$ obtained by solving \eqref{eq:minouts}, 
$\li - \bfo, \bfo\ri$ is tight. 
This is because, by the above interpretation, 
  $\li-\overline{\mtg{\sigma}}, \overline{\mtg{\sigma}}\ri$ is the smallest hyper-box that contains $\langle \mtg{\Gamma}, \bfz\rangle \subseteq \{\mtg{\sigma} \in \li -\overline{\mtg{\sigma}}, \overline{\mtg{\sigma}} \ri \mid \mt{A}\mtg{\sigma} = \bfz\} = \mm{S}$.  
Hence  $\li-\overline{\mtg{\sigma}}, \overline{\mtg{\sigma}}\ri$ is also the smallest hyper-box containing $ \mm{S}$.  
As we will see, the tightness of  $\li-\overline{\mtg{\sigma}}, \overline{\mtg{\sigma}}\ri$ plays an important role in the conservatism analysis of Step II. 
\end{rmk}

\begin{rmk}
The cost function $\Vert \vert \mtg{\Gamma}\vert \bfo \Vert_1$ of \eqref{eq:minouts} is the absolute element sum of the matrix $\mtg{\Gamma}$.
One may ask whether minimizing this cost would achieve $\vert \mtg{\Gamma}\vert \bfo \leq \bfo$ whenever possible, even after removing the constraint $\vert \mtg{\Gamma}\vert \bfo \leq \bfo$. 
This is not the case in general though it happens often times. 
If we ignore $\vert \mtg{\Gamma}\vert \bfo \leq \bfo$ in \eqref{eq:minouts} and minimize the Frobenius norm of $\mtg{\Gamma}$ instead, it is equivalent to finding the minimum norm solution of the least square problem defined by $[\mt{G};  \mt{A}]\mtg{\Gamma} = [\mt{G}'; \bfz]$.
If this minimum norm solution also satisfies $\vert\mtg{\Gamma}\vert \bfo \leq \bfo$, then it is a good estimate of the minimizer of \eqref{eq:minout} and can be found more efficiently. 
\end{rmk}

\subsection{Step II: $\mm{CZ} \ominus \mm{CZ}_{\rm s}$ by CG-Rep Manipulation}

We further under-approximate $\mm{CZ} \ominus \mm{CZ}_{\rm s}$ by $\mm{CZ}_{\rm d} \cgeq \langle \mt{G}\,\text{diag}(\bfo - \overline{\mtg{\sigma}}), \mt{c}-\mt{c}', \mt{A}\,\text{diag}(\bfo - \overline{\mtg{\sigma}}), \mt{b}\rangle$. It is tempting to conclude that $\mm{CZ}_{\rm d} = \mm{CZ} \ominus \mm{CZ}_{\rm s}$, but this is not true in general. 
In what follows, we show $\mm{CZ}_{\rm d} \subseteq \mm{CZ} \ominus \mm{CZ}_{\rm s}$ and give a sufficient condition for this under-approximation to be exact. 

The following two propositions will be useful later. 
\begin{prop}\label{prop:boxcase}
Define 
\begin{align}
{\mm{M}} & = \{\mtg{\mu} \in \li-\overline{\mtg{\delta}}, \overline{\mtg{\delta}}\ri \oplus  \li-\overline{\mtg{\sigma}}, \overline{\mtg{\sigma}}\ri \mid  \mt{A}\mtg{\mu} = \mt{b}\},  
\label{eq:mM}\\
{\mm{S}} & = \{\sigma \in  \li-\overline{\mtg{\sigma}}, \overline{\mtg{\sigma}}\ri \mid  \mt{A}\mtg{\sigma} = \bfz\}, 
\label{eq:mS}\\
{\mm{D}} & = \{\mtg{\delta} \in \li-\overline{\mtg{\delta}}, \overline{\mtg{\delta}}\ri \mid \mt{A}\mtg{\delta} = \mt{b}\}. 
\end{align}
Assume that $\li-\overline{\mtg{\sigma}}, \overline{\mtg{\sigma}}\ri$ is the smallest hyper-box that contains $\mm{S}$. 
Then ${\mm{M}} \ominus {\mm{S}} = {\mm{D}}$. 
\end{prop}

\begin{proof}
This is a direct result of Lemma \ref{lem:HmB}. 
Particularly, since $\li-\overline{\mtg{\sigma}}, \overline{\mtg{\sigma}}\ri$ is the smallest hyper-box that contains $\mm{S}$, $\max \pm \mt{e}_i^\top \mm{S} = \pm \overline{\sigma}_i$, where
 $\mt{e}_i$ is the $i^{\rm th}$ natural basis vector and $\overline{\sigma}_i$ is the $i^{\rm th}$  element of $\overline{\mtg{\sigma}}$. 
\end{proof}


\begin{prop}\label{prop:GMNS}
Define $\mm{M}$, $\mm{S}  \subseteq \mathbb{R}^N$ the same as in Proposition \ref{prop:boxcase}, let $\mt{G} \in \mathbb{R}^{n \times N}$ and $\mm{N} : = \mathcal{N}(\mt{A})\cap\mathcal{N}(\mt{G})$. 
Then $\mt{G}\,\mm{M} \ominus \mt{G}\,\mm{S} = \mt{G}\,(\mm{M} \oplus \mm{N} \ominus \mm{S})$. 
\end{prop}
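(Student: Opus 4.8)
The plan is to prove the equality by double inclusion, relying throughout on the fact that every $\mt{n} \in \mm{N}$ satisfies both $\mt{G}\mt{n} = \bfz$ and $\mt{A}\mt{n} = \bfz$, and on the affine constraints $\mt{A}\mt{m} = \mt{b}$ (for $\mt{m} \in \mm{M}$) and $\mt{A}\mt{s} = \bfz$ (for $\mt{s} \in \mm{S}$) built into the definitions. The inclusion $\mt{G}\,((\mm{M}\oplus\mm{N})\ominus\mm{S}) \subseteq \mt{G}\,\mm{M} \ominus \mt{G}\,\mm{S}$ is the routine one: I would take $\mt{y} = \mt{G}\mt{z}$ with $\mt{z} + \mm{S} \subseteq \mm{M}\oplus\mm{N}$ and verify $\mt{y} + \mt{G}\,\mm{S} \subseteq \mt{G}\,\mm{M}$ pointwise, writing $\mt{z} + \mt{s} = \mt{m} + \mt{n}$ with $\mt{m} \in \mm{M}$, $\mt{n} \in \mm{N}$, and using $\mt{G}\mt{n} = \bfz$ to obtain $\mt{y} + \mt{G}\mt{s} = \mt{G}\mt{m} \in \mt{G}\,\mm{M}$.

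The substantive direction is $\mt{G}\,\mm{M} \ominus \mt{G}\,\mm{S} \subseteq \mt{G}\,((\mm{M}\oplus\mm{N})\ominus\mm{S})$. Since $\bfz \in \mm{S}$ (because $\overline{\mtg{\sigma}} \geq \bfz$ and $\mt{A}\bfz = \bfz$), any $\mt{y}$ with $\mt{y} + \mt{G}\,\mm{S} \subseteq \mt{G}\,\mm{M}$ already lies in $\mt{G}\,\mm{M}$, so I fix one preimage $\mt{z}_0 \in \mm{M}$ with $\mt{G}\mt{z}_0 = \mt{y}$. I then claim this same $\mt{z}_0$ satisfies $\mt{z}_0 + \mm{S} \subseteq \mm{M}\oplus\mm{N}$, which yields $\mt{y} = \mt{G}\mt{z}_0 \in \mt{G}\,((\mm{M}\oplus\mm{N})\ominus\mm{S})$. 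For each $\mt{s} \in \mm{S}$, the hypothesis supplies some $\mt{m}_{\mt{s}} \in \mm{M}$ with $\mt{G}\mt{m}_{\mt{s}} = \mt{y} + \mt{G}\mt{s} = \mt{G}(\mt{z}_0 + \mt{s})$; setting $\mt{r}_{\mt{s}} := \mt{z}_0 + \mt{s} - \mt{m}_{\mt{s}}$, I verify $\mt{G}\mt{r}_{\mt{s}} = \bfz$ directly and, from $\mt{A}\mt{z}_0 = \mt{A}\mt{m}_{\mt{s}} = \mt{b}$ together with $\mt{A}\mt{s} = \bfz$, also $\mt{A}\mt{r}_{\mt{s}} = \bfz$. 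Hence $\mt{r}_{\mt{s}} \in \mathcal{N}(\mt{A}) \cap \mathcal{N}(\mt{G}) = \mm{N}$ and $\mt{z}_0 + \mt{s} = \mt{m}_{\mt{s}} + \mt{r}_{\mt{s}} \in \mm{M}\oplus\mm{N}$.

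The crux, and the step I expect to be the main obstacle, is recognizing that the residual $\mt{r}_{\mt{s}}$ lands automatically in the common null space $\mm{N}$; this is exactly where the definition $\mm{N} = \mathcal{N}(\mt{A}) \cap \mathcal{N}(\mt{G})$ earns its keep. A subtlety I would be careful to flag is that a \emph{single} lifted point $\mt{z}_0$ must work for all $\mt{s}$ at once, so I fix $\mt{z}_0$ once (via $\mt{s} = \bfz$) and allow only the representative $\mt{m}_{\mt{s}} \in \mm{M}$ to vary with $\mt{s}$. Finally, I would note that this argument is purely algebraic and, unlike Proposition \ref{prop:boxcase}, never invokes the smallest-hyper-box assumption.
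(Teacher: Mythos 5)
Your proof is correct and follows essentially the same route as the paper: both directions hinge on the observation that residuals between different lifts of the same point lie in $\mm{N} = \mathcal{N}(\mt{A})\cap\mathcal{N}(\mt{G})$, so that every translate $\mt{z}_0 + \mt{s}$ lands in $\mm{M}\oplus\mm{N}$. The only cosmetic difference is your anchor point: the paper fixes an arbitrary $\mtg{\sigma}\in\mm{S}$ and works with $\mtg{\delta}_{\mtg{\sigma}} = \mtg{\mu}_{\mtg{\sigma}}-\mtg{\sigma}$ (which need not lie in $\mm{M}$), whereas you specialize to $\mtg{\sigma}=\bfz$ using $\bfz\in\mm{S}$, which makes the anchor an element of $\mm{M}$ but is otherwise the identical argument.
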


\begin{proof}
By Lemma \ref{lem:Minkowski}, bullets i) and ii) 
\begin{align}
\mt{G}\,(\mm{M} \oplus \mm{N} \ominus \mm{S}) & \subseteq  \mt{G}\,(\mm{M} \oplus \mm{N}) \ominus \mt{G}\,\mm{S} \nonumber \\
& = \mt{G}\, \mm{M} \ominus \mt{G}\, \mm{S}. 
\end{align} 
 It remains to prove that $\mt{G}\,\mm{M} \ominus \mt{G}\,\mm{S} \subseteq \mt{G}\,(\mm{M} \oplus \mm{N} \ominus \mm{S}) $. To this end, let $\mt{x} \in \mt{G}\,\mm{M} \ominus \mt{G}\,\mm{S}$ be arbitrary. Since $\mt{x} \oplus \mt{G}\,\mm{S} \subseteq \mt{G}\,\mm{M}$, we have 
\begin{align}
\forall  \mtg{\sigma} \in \mm{S}: \exists \mtg{\mu}_{\mtg{\sigma}} \in \mm{M}:  \mt{x} + \mt{G}\mtg{\sigma} = \mt{G}\mtg{\mu}_{\mtg{\sigma}}. 
\label{eq:rho}
\end{align}
Now let $\mtg{\sigma}$, $\mtg{\sigma}' \in \mm{S}$ be arbitrary, Eq. \eqref{eq:rho} tells us  
\begin{align}
\mt{A}\mtg{\mu}_{\mtg{\sigma}} & = \mt{A}\mtg{\mu}_{\mtg{\sigma}'} = \mt{b}, \label{eq:Ab}\\
\mt{G}(\underbrace{\mtg{\mu}_{\mtg{\sigma}} - \mtg{\sigma}}_{=:\mtg{\delta}_{\mtg{\sigma}}}) & = \mt{G}(\underbrace{\mtg{\mu}_{\mtg{\sigma}'} - \mtg{\sigma}'}_{=:\mtg{\delta}_{\mtg{\sigma}'}}) = \mt{x}. \label{eq:Gx}
\end{align}
Clearly, $\mtg{\delta}_{\mtg{\sigma}} - \mtg{\delta}_{\mtg{\sigma}'} \in \mm{N}$ by Eqs. \eqref{eq:Ab}, \eqref{eq:Gx} and the fact that $\mtg{\sigma}$, $\mtg{\sigma}' \in \mathcal{N}(\mt{A})$. 
This further implies that  
\begin{align}
\mtg{\delta}_{\mtg{\sigma}} + \mtg{\sigma}' & =  \mtg{\delta}_{\mtg{\sigma}'} - \mtg{\delta}_{\mtg{\sigma}'} + \mtg{\delta}_{\mtg{\sigma}} + \mtg{\sigma}' \nonumber \\
& = \mtg{\mu}_{\mtg{\sigma}'} + \underbrace{(\mtg{\delta}_{\mtg{\sigma}} - \mtg{\delta}_{\mtg{\sigma}'})}_{\in \mm{N}} \in \mm{M} \oplus \mm{N}.  \label{eq:mns}
\end{align}
Since $\mtg{\sigma}' \in \mm{S}$ is arbitrary, Eq. \eqref{eq:mns} implies that 
\begin{align}
& \mtg{\delta}_{\mtg{\sigma}} \oplus \mm{S} \subseteq \mm{M} \oplus \mm{N} \ \ 
\iff \ \  \mtg{\delta}_{\mtg{\sigma}} \in \mm{M} \oplus \mm{N} \ominus \mm{S}. 
\label{eq:final1}
\end{align}
Note that, by Eq. \eqref{eq:rho}, $\mt{x} = \mt{G}(\mtg{\mu}_{\mtg{\sigma}} - \mtg{\sigma}) = \mt{G}\mtg{\delta}_{\mtg{\sigma}}$. Combining this with Eq. \eqref{eq:final1} yields $\mt{x} \in \mt{G}\, (\mm{M} \oplus \mm{N} \ominus \mm{S})$. 
\end{proof}

Now we state the main result of this part. 
\begin{thm}\label{thm:step2}
Let $\mm{CZ} \cgeq \langle \mt{G}, \mt{c}, \mt{A},\mt{b}\rangle$ and $\mm{CZ}_{\rm s} \cgeq \langle  \mt{G}\text{diag}(\overline{\mtg{\sigma}}), \mt{c}', \mt{A}\text{diag}(\overline{\mtg{\sigma}}),\bfz \rangle $, then $\mm{CZ}_{\rm d} \cgeq \langle \mt{G}\,\text{diag}(\bfo - \overline{\mtg{\sigma}}), \mt{c}-\mt{c}', \mt{A}\,\text{diag}(\bfo - \overline{\mtg{\sigma}}), \mt{b}\rangle \subseteq \mm{CZ} \ominus \mm{CZ}_{\rm s}$. 
Further, if $\mm{N}: = \mm{N}(\mt{G}) \cap \mm{N}(\mt{A}) = \{\bfz\}$, we have $\mm{CZ}_{\rm d} = \mm{CZ} \ominus \mm{CZ}_{\rm s}$. 
\end{thm}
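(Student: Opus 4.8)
The plan is to push the whole statement through the two auxiliary propositions by first rewriting all three constrained zonotopes as shifted linear images of the box-constrained sets $\mm{M}$, $\mm{S}$, $\mm{D}$ introduced in Proposition \ref{prop:boxcase}. Substituting $\mtg{\sigma} = \text{diag}(\overline{\mtg{\sigma}})\mtg{\theta}$ and $\mtg{\delta} = \text{diag}(\bfo - \overline{\mtg{\sigma}})\mtg{\theta}$ with $\mtg{\theta} \in \li-\bfo,\bfo\ri$ into the CG-Reps gives $\mm{CZ}_{\rm s} = \mt{c}' + \mt{G}\,\mm{S}$ and $\mm{CZ}_{\rm d} = (\mt{c}-\mt{c}') + \mt{G}\,\mm{D}$, where $\overline{\mtg{\delta}} := \bfo - \overline{\mtg{\sigma}}$. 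The key bookkeeping observation is that $\overline{\mtg{\delta}} + \overline{\mtg{\sigma}} = \bfo$, so the Minkowski sum of boxes $\li-\overline{\mtg{\delta}},\overline{\mtg{\delta}}\ri \oplus \li-\overline{\mtg{\sigma}},\overline{\mtg{\sigma}}\ri$ collapses exactly to the full unit box $\li-\bfo,\bfo\ri$; hence $\mm{M} = \{\mtg{\mu} \in \li-\bfo,\bfo\ri \mid \mt{A}\mtg{\mu} = \mt{b}\}$ and $\mm{CZ} = \mt{c} + \mt{G}\,\mm{M}$. I would also record here that, since $\overline{\mtg{\sigma}}$ comes from \eqref{eq:minouts}, the box $\li-\overline{\mtg{\sigma}},\overline{\mtg{\sigma}}\ri$ is the smallest one containing $\mm{S}$ (Remark \ref{rmk:wlog}), so the tightness hypothesis of Proposition \ref{prop:boxcase} is legitimately in force.

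Next I would reduce the Minkowski difference to the generator space. Using translation-invariance of $\ominus$, namely $(\mt{v}+\mm{A}) \ominus (\mt{w}+\mm{B}) = (\mt{v}-\mt{w}) + (\mm{A} \ominus \mm{B})$, I obtain $\mm{CZ} \ominus \mm{CZ}_{\rm s} = (\mt{c}-\mt{c}') + (\mt{G}\,\mm{M} \ominus \mt{G}\,\mm{S})$. Proposition \ref{prop:GMNS} then rewrites the bracket as $\mt{G}\,(\mm{M} \oplus \mm{N} \ominus \mm{S})$, so $\mm{CZ} \ominus \mm{CZ}_{\rm s} = (\mt{c}-\mt{c}') + \mt{G}\,(\mm{M} \oplus \mm{N} \ominus \mm{S})$. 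Since $\mm{CZ}_{\rm d} = (\mt{c}-\mt{c}') + \mt{G}\,\mm{D}$, both claims reduce to comparing $\mm{D}$ with $\mm{M} \oplus \mm{N} \ominus \mm{S}$ under the common map $\mt{G}$ and the same translation.

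For the inclusion I would chain Proposition \ref{prop:boxcase} with Lemma \ref{lem:Minkowski}(iv). Because $\mm{M} \ominus \mm{S} = \mm{D}$ and $\mm{N} = \mm{N}(\mt{G}) \cap \mm{N}(\mt{A})$ is a linear subspace (so $\bfz \in \mm{N}$ and $\mm{D} \subseteq \mm{D} \oplus \mm{N}$), Lemma \ref{lem:Minkowski}(iv) gives $\mm{D} \oplus \mm{N} = (\mm{M} \ominus \mm{S}) \oplus \mm{N} \subseteq (\mm{M} \oplus \mm{N}) \ominus \mm{S}$, whence $\mm{D} \subseteq \mm{M} \oplus \mm{N} \ominus \mm{S}$. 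Applying $\mt{G}$ and translating by $\mt{c}-\mt{c}'$ yields $\mm{CZ}_{\rm d} \subseteq \mm{CZ} \ominus \mm{CZ}_{\rm s}$. For the equality under $\mm{N} = \{\bfz\}$, the set $\mm{M} \oplus \mm{N} \ominus \mm{S}$ degenerates to $\mm{M} \ominus \mm{S}$, which equals $\mm{D}$ by Proposition \ref{prop:boxcase}; substituting back gives $\mm{CZ} \ominus \mm{CZ}_{\rm s} = (\mt{c}-\mt{c}') + \mt{G}\,\mm{D} = \mm{CZ}_{\rm d}$.

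The conceptual weight of the argument lives entirely in the two auxiliary propositions, which are already established, so the theorem is mostly a careful chaining exercise. The step I expect to need the most care is the very first one: verifying that the decomposition $\overline{\mtg{\delta}} = \bfo - \overline{\mtg{\sigma}}$ makes $\mm{M}$ precisely the full-box set underlying $\mm{CZ}$, and confirming that the tightness hypothesis of Proposition \ref{prop:boxcase} is genuinely inherited from \eqref{eq:minouts} via Remark \ref{rmk:wlog}; without that tightness $\mm{M} \ominus \mm{S}$ need not equal $\mm{D}$ and both claims could break. The inclusion direction is robust, and the only gap between $\mm{D}$ and $\mm{M} \oplus \mm{N} \ominus \mm{S}$ is carried by $\mm{N}$ being nontrivial, which is exactly why $\mm{N} = \{\bfz\}$ suffices to upgrade the inclusion to an equality.
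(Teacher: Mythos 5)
Your proposal is correct and follows essentially the same route as the paper's proof: rewrite $\mm{CZ}$, $\mm{CZ}_{\rm s}$, $\mm{CZ}_{\rm d}$ in generator space as $\mt{c}+\mt{G}\,\mm{M}$, $\mt{c}'+\mt{G}\,\mm{S}$, $(\mt{c}-\mt{c}')+\mt{G}\,\mm{D}$, invoke Proposition \ref{prop:GMNS}, Proposition \ref{prop:boxcase} (with tightness supplied by Remark \ref{rmk:wlog}), and Lemma \ref{lem:Minkowski}(iv), with equality falling out when $\mm{N}=\{\bfz\}$. The only cosmetic difference is bookkeeping order: you apply Lemma \ref{lem:Minkowski}(iv) in generator space and use $\bfz\in\mm{N}$ to get $\mm{D}\subseteq\mm{D}\oplus\mm{N}$, whereas the paper pushes $\mt{G}$ through first and eliminates $\mt{G}\,\mm{N}=\{\bfz\}$ via Lemma \ref{lem:Minkowski}(i).
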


\begin{proof}
Note that $\mm{CZ}_{\rm s} = \mt{c}' + \mt{G}\, \mm{S}$ where $\mm{S} = \{\mtg{\sigma} \in \li -\overline{\mtg{\sigma}}, \overline{\mtg{\sigma}} \ri \mid \mt{A}\mtg{\sigma} = \bfz\}$ (see Remark \ref{rmk:wlog}). Also note that $\mm{CZ} = \mt{c} + \mt{G}\, \mm{M}$ where $\mm{M}  = \{\mtg{\mu} \in \li -\bfo, \bfo\ri \mid \mt{A}\mtg{\mu} = \mt{b}\} =  \{\mtg{\mu} \in \li -\bfo+\overline{\mtg{\sigma}}, \bfo - \overline{\mtg{\sigma}}\ri \oplus  \li -\overline{\mtg{\sigma}}, \overline{\mtg{\sigma}} \ri  \mid \mt{A}\mtg{\mu} = \mt{b}\}$. Define $\mm{D} = \{\mtg{\delta} \in \li -\bfo+\overline{\mtg{\sigma}}, \bfo - \overline{\mtg{\sigma}}\ri    \mid \mt{A}\mtg{\delta}= \mt{b}\}$.  We have $\mm{D} = \mm{M}\ominus \mm{S}$ by Remark \ref{rmk:wlog} and Proposition \ref{prop:boxcase}. Also note that 
\begin{align}
& \mm{CZ} \ominus \mm{CZ}_{\rm s} \nonumber \\
= & \ (\mt{c}-\mt{c}') + \mt{G}\,\mm{M} \ominus \mt{G}\,\mm{S} \nonumber \\
= & \ (\mt{c}-\mt{c}') + \mt{G}(\mm{M} \oplus \mm{N} \ominus \mm{S}) & (\text{Proposition \ref{prop:GMNS}})\nonumber \\
\supseteq & \  (\mt{c}-\mt{c}') + \mt{G}(\mm{M} \ominus \mm{S} \oplus \mm{N}) & (\text{Lemma \ref{lem:Minkowski}})\nonumber \\
= & \  (\mt{c}-\mt{c}') +  \mt{G}(\mm{M} \ominus \mm{S}) \oplus \mt{G}\,\mm{N} &  (\text{Lemma \ref{lem:Minkowski}}) \nonumber \\
= & \  (\mt{c}-\mt{c}') +  \mt{G}(\mm{M} \ominus \mm{S})  & (\mm{N}\subseteq \mm{N}(G))\nonumber \\
= & \  (\mt{c}-\mt{c}') + \mt{G}\,\mm{D} =   \mm{CZ}_{\rm d}.  & (\text{Proposition \ref{prop:boxcase}}) 
\label{eq:keyeq}
\end{align}
Note that 
``$\supseteq$'' in Eq. \eqref{eq:keyeq} holds as ``$=$'' if $\mm{N} = \{\bfz\}$. 
\end{proof}
\begin{figure}[h]
  \centering
  \includegraphics[width=2.3in]{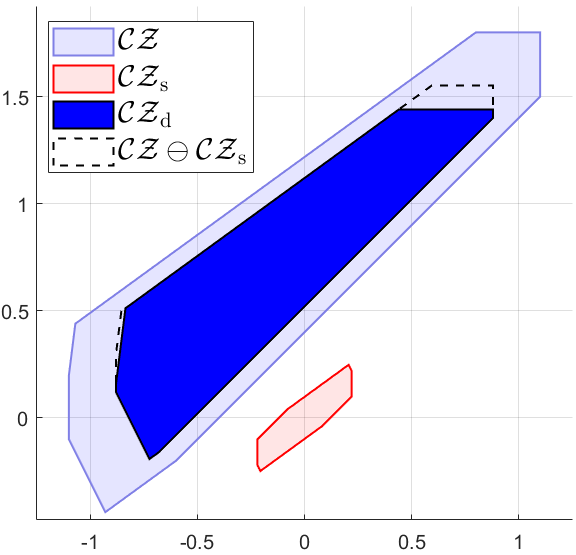}\\
  \caption{Example \ref{eg:diff}. }
\label{fig:eg_diff}
\vspace{-7mm}
\end{figure}

\begin{eg}\label{eg:diff}
With an example, we illustrate that $\mm{CZ} \ominus \mm{CZ}_{\rm s} \neq \mm{CZ}_{\rm d}$ 
in general. 
Define 
\begin{align}
& \mt{G} = 
\left[
\begin{array}{cccc}
1 & 0 & 0 & 0.1 \\
0 & 1 & 0 & 0.8 
\end{array}
\right],   & & \mt{c} = \left[
\begin{array}{cccc}
0  \\
0  
\end{array}
\right], \\
& \mt{A} =  
\left[
\begin{array}{cccc}
-1 & 1 & 0.3 & 1 
\end{array}
\right],  & & \mt{b} = 1, 
\end{align}
and $\overline{\mtg{\sigma}} = [0.2, 0.2, 0.2, 0.2]^\top$. Let $\mm{CZ} \cgeq \langle \mt{G},\mt{c}, \mt{A}, \mt{b}\rangle$ and  $\mm{CZ}_{\rm s}$ be defined as in Proposition \ref{prop:minout}. Fig. \ref{fig:eg_diff} shows that there is a gap between $\mm{CZ} \ominus \mm{CZ}_{\rm s}$ and its under-approximation $\mm{CZ}_{\rm d}$. 
\end{eg}

\begin{rmk}
Since \revise{$\mm{CZ} \ominus \mm{CZ}_{\rm s} \supsetneq \mm{CZ}_{\rm d}$} in general, it is possible that $\mm{CZ}\ominus \mm{CZ}_{\rm s} \neq \emptyset$ but $\mm{CZ}_{\rm d} = \emptyset$. This issue can be mitigated by enforcing the following constraint in \eqref{eq:minout}:  
$ \mtg{\theta} \in \li -\bfo+ \overline{\mtg{\sigma}}, \bfo-\overline{\mtg{\sigma}}\ri$,  $\mt{A}\mtg{\theta} = \mt{b}$, where $\mtg{\theta}$ is a decision variable. 
This extra constraint will ensure that $\mm{CZ}_{\rm d} \neq \emptyset$ whenever possible. 
\end{rmk}

\section{Exactness under A Rich CG-Rep of $\mm{CZ}$}\label{sec:rep}
The CG-Rep of a constrained zonotope is not unique. 
A notable feature of our two-step approach is: the obtained under-approximated difference $\mm{CZ}_{\rm d}$ varies with the CG-Rep of the minuend $\mm{CZ}$. In fact, if the CG-Rep of $\mm{CZ}$ 
is ``rich'' enough, the two-step approach is exact  when $\mm{CZ}\ominus \mm{Z} \neq \emptyset$. 
Such a rich CG-Rep of $\mm{CZ}$ can be constructed as follows. 
Let 
\begin{itemize}
\item[1)] $\mt{H} \in \mathbb{R}^{\ell \times n}$, $\mt{a} \in \mathbb{R}^\ell$ be s.t. $\{\mt{x} \in \mathbb{R}^n \mid \mt{H}\mt{x} \leq \mt{a}\} = \mm{CZ}$;   
\item[2)] $\mt{h}_i^\top \hspace{-1.5mm}\neq \hspace{-0.5mm}\bfz$ be the $i^{\rm th}$ row of $\mt{H}$ and $a_i$ the $i^{\rm th}$ element of $\mt{a}$ (if $\mt{h}_i^\top = \bfz$, then $a_i = 0$ and this row can be removed); 
\item[3)] $r > 0$ be a sufficiently large real number s.t.  $\mm{CZ}_0 := \langle r\,\mt{I}_n, \mt{c} \rangle$ encloses $\mm{CZ}$ for some $\mt{c} \in \mt{c}' + (\mm{CZ}\ominus \mm{Z}) \neq \emptyset$, where $\mt{c}'$ is the center of $\mm{Z}$; 
\item[4)] $\langle \mt{G}, \mt{c}, \mt{A}, \mt{b} \rangle$ be the CG-Rep of $\mm{CZ}_\ell$ obtained via iteratively applying Lemma \ref{lem:cz} iv) to the following intersection operation, which leaves the center $c$ unchanged: 
\begin{align}
\hspace{-6mm} \mm{CZ}_i = \mm{CZ}_{i-1} \cap \{\mt{x} \mid \mt{h}_i^\top \mt{x} \leq a_i\}, \ \ i = 1,2,\dots, \ell. \label{eq:CZi}
\end{align} 
\end{itemize}
Clearly, $\langle \mt{G}, \mt{c}, \mt{A}, \mt{b} \rangle$ is a CG-Rep of set $\mm{CZ}$ because $\langle \mt{G}, \mt{c}, \mt{A}, \mt{b} \rangle \cgeq \mm{CZ}_\ell = \{\mt{x}\mid \mt{H}\mt{x} \leq \mt{a}\} = \mm{CZ}$. 
To be precise, 
\begin{align}
\mt{G} & = [r\, \mt{I}_n, \bfz_{n\times \ell}], \label{eq:G}\\
\mt{A} & = [r\, \mt{H}, \text{diag}(\tfrac{1}{2}\mt{d})], \label{eq:A}\\
d_i & =  a_i - \mt{h}_i^\top \mt{c} + r \Vert \mt{h}_i^\top \Vert_1, \ i= 1,2,\dots, \ell, \label{eq:di}\\
b_i & = \tfrac{a_i - \mt{h}_i^\top \mt{c} - r \Vert \mt{h}_i^\top \Vert_1}{2}, \ i= 1,2,\dots, \ell, \label{eq:bi}
\end{align}
where $d_i$ ($b_i$, resp.) is the $i^{\rm th}$ element of vector $\mt{d}$ ($\mt{b}$, resp.).

In the rest of this section, we use $\langle \mt{G}, \mt{c}, \mt{A}, \mt{b} \rangle$ as the CG-Rep of the minuend $\mm{CZ}$ and show that Step I and Step II are exact when $\mm{CZ}\ominus \mm{Z} \neq \emptyset$. 

The following lemma will be useful. 

\begin{lem}\label{lem:dig}
Assume that $\mm{CZ} \ominus \mm{Z} \neq \emptyset$, then $d_i > 0$ and $[\mt{G};\mt{A}]$ is invertible. 
\end{lem}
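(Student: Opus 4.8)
The plan is to prove the two assertions in order, obtaining the invertibility of $[\mt{G};\mt{A}]$ as an easy consequence of the positivity of the $d_i$ together with the block structure of the matrix built in Eqs. \eqref{eq:G}--\eqref{eq:bi}. The one place where the hypothesis $\mm{CZ}\ominus\mm{Z}\neq\emptyset$ really enters is in locating the center $\mt{c}$ inside $\mm{CZ}$, so I would isolate that fact first.

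First I would show $\mt{c}\in\mm{CZ}$. By the construction (step 3), $\mt{c}$ is drawn from $\mt{c}'+(\mm{CZ}\ominus\mm{Z})$, which is nonempty precisely by assumption. Hence $\mt{c}-\mt{c}'\in\mm{CZ}\ominus\mm{Z}$, which by definition means $(\mt{c}-\mt{c}')+\mm{Z}\subseteq\mm{CZ}$. Since $\mm{Z}=\langle\mt{G}',\mt{c}'\rangle$ contains its own center $\mt{c}'$ (take the generator coordinates to be $\bfz$), I get $\mt{c}=(\mt{c}-\mt{c}')+\mt{c}'\in\mm{CZ}$. As $\mm{CZ}=\{\mt{x}\mid\mt{H}\mt{x}\leq\mt{a}\}$, this gives $\mt{h}_i^\top\mt{c}\leq a_i$, i.e. $a_i-\mt{h}_i^\top\mt{c}\geq 0$ for every $i$. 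The first claim then follows from Eq. \eqref{eq:di}: $d_i=(a_i-\mt{h}_i^\top\mt{c})+r\Vert\mt{h}_i^\top\Vert_1$, where the first summand is nonnegative by the above, and the second is strictly positive because $r>0$ and $\mt{h}_i^\top\neq\bfz$ (the zero rows of $\mt{H}$ having been discarded in step 2). Hence $d_i>0$.

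For the second claim I would read off the shapes in Eqs. \eqref{eq:G} and \eqref{eq:A} and stack them into the square $(n+\ell)\times(n+\ell)$ matrix
\begin{align}
[\mt{G};\mt{A}]=\begin{bmatrix} r\,\mt{I}_n & \bfz_{n\times\ell} \\ r\,\mt{H} & \text{diag}(\tfrac{1}{2}\mt{d})\end{bmatrix},\nonumber
\end{align}
whose upper-right block is zero, so it is block lower-triangular. Its determinant therefore factors as $\det(r\,\mt{I}_n)\,\det(\text{diag}(\tfrac{1}{2}\mt{d}))=r^{n}\prod_{i=1}^{\ell}\tfrac{1}{2}d_i$. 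Since $r>0$ and every $d_i>0$ by the first claim, this product is strictly positive, so $[\mt{G};\mt{A}]$ is nonsingular.

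The argument is elementary throughout, and I expect no genuine obstacle: the only two points requiring care are that the center of a zonotope always belongs to it (used to pass from $\mt{c}-\mt{c}'\in\mm{CZ}\ominus\mm{Z}$ to $\mt{c}\in\mm{CZ}$), and that the prior removal of zero rows of $\mt{H}$ is exactly what makes $r\Vert\mt{h}_i^\top\Vert_1$ strictly positive rather than merely nonnegative. The hypothesis $\mm{CZ}\ominus\mm{Z}\neq\emptyset$ is used solely to guarantee a feasible center $\mt{c}\in\mm{CZ}$, which is what forces $a_i-\mt{h}_i^\top\mt{c}\geq 0$ and hence $d_i>0$.
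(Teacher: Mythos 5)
Your proof is correct and takes essentially the same approach as the paper: use $\mt{c}\in\mt{c}'+(\mm{CZ}\ominus\mm{Z})$ (available exactly because the difference is nonempty) to control $a_i-\mt{h}_i^\top\mt{c}$, conclude $d_i>0$ from Eq.~\eqref{eq:di}, and then invoke the (block-)triangular structure of $[\mt{G};\mt{A}]$ with positive diagonal. The only minor difference is that the paper bounds $a_i\geq\mt{h}_i^\top\mt{c}+\Vert\mt{h}_i^\top\mt{G}'\Vert_1$ via the support function of $\mm{Z}$ over the inclusion $\mt{c}-\mt{c}'+\mm{Z}\subseteq\mm{CZ}$, whereas you use only the center $\mt{c}'\in\mm{Z}$ to get the weaker but sufficient bound $a_i-\mt{h}_i^\top\mt{c}\geq 0$; both arguments close with $r\Vert\mt{h}_i^\top\Vert_1>0$.
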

\begin{proof}
Recall that $\mm{CZ} = \{\mt{x} \mid \mt{H}\mt{x} \leq \mt{a}\}$, $\mm{Z} = \langle \mt{c}', \mt{G}'\rangle$. Since $\mt{c} \in \mt{c}' + (\mm{CZ}\ominus \mm{Z})$, i.e., $\mt{c}-\mt{c}' + \mm{Z} \subseteq \mm{CZ}$, we have
\begin{align}
a_i \geq \max \mt{h}_i^\top (\mt{c}-\mt{c}' + \mm{Z}) = \mt{h}_i^\top \mt{c} + \Vert \mt{h}_i^\top \mt{G}'\Vert_1. 
\label{eq:aig}
\end{align}
By Eqs. \eqref{eq:di}, \eqref{eq:aig}, $d_i \geq \Vert \mt{h}_i^\top \mt{G}'\Vert_1 +  r \Vert \mt{h}_i^\top \Vert_1$. 
Since $\mt{h}_i \neq \bfz$ and $r >0$ (see bullets 2), 3)), $d_i>0$.  

By Eqs. \eqref{eq:G}, \eqref{eq:A}, $[\mt{G};\mt{A}] = [r\,\mt{I}_n, \bfz; r\,\mt{H}, \text{diag}(\tfrac{\mt{d}}{2})] \in \mathbb{R}^{(n+\ell)\times(n+\ell)}$.
Since $r > 0$ and $d_i > 0$ for $i = 1,2,\dots, \ell$, $[\mt{G};\mt{A}]$ is a triangular matrix with non-zero diagonal entries. Therefore $[\mt{G};\mt{A}]$ is invertible. 
\end{proof}

The following result says that Step I is exact. 
\begin{prop}\label{prop:exact1}
Suppose that  $\mm{CZ} \ominus \mm{Z} \neq \emptyset$, then  \eqref{eq:minouts} is feasible. 
Moreover, let $\mtg{\Gamma}$ be the minimizer of \eqref{eq:minouts}, define $\overline{\mtg{\sigma}} = \vert\mtg{\Gamma}\vert \bfo$ and $\mm{CZ}_{\rm s} \cgeq \langle \mt{G}\,\text{diag}(\overline{\mtg{\sigma}}), \mt{c}', \mt{A}\,\text{diag}(\overline{\mtg{\sigma}}),\bfz \rangle$, then $\mm{CZ} \ominus \mm{CZ}_{\rm s} = \mm{CZ} \ominus \mm{Z}$. 
\end{prop}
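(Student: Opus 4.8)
The plan is to exploit the very special structure of this particular CG-Rep: by Lemma~\ref{lem:dig} the stacked matrix $[\mt{G};\mt{A}]$ is square and invertible, so the equality constraint $[\mt{G};\mt{A}]\mtg{\Gamma} = [\mt{G}';\bfz]$ of \eqref{eq:minouts} admits a \emph{unique} solution and everything downstream is determined in closed form. First I would invert the block lower-triangular matrix $[\mt{G};\mt{A}] = [r\,\mt{I}_n,\,\bfz;\; r\,\mt{H},\,\text{diag}(\tfrac12\mt{d})]$ to obtain the unique $\mtg{\Gamma} = [\mt{G};\mt{A}]^{-1}[\mt{G}';\bfz]$, whose top $n$ rows equal $\tfrac1r\mt{G}'$ and whose $i$-th bottom row equals $-\tfrac{2}{d_i}\mt{h}_i^\top\mt{G}'$ (using $d_i>0$). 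Consequently $\overline{\mtg{\sigma}} = \vert\mtg{\Gamma}\vert\bfo$ is given explicitly by $\overline{\sigma}_j = \tfrac1r\Vert\mt{e}_j^\top\mt{G}'\Vert_1$ on the top block and $\overline{\sigma}_{n+i} = \tfrac{2}{d_i}\Vert\mt{h}_i^\top\mt{G}'\Vert_1$ on the bottom block.

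For feasibility it then suffices to check $\overline{\mtg{\sigma}}\leq\bfo$ entrywise, since the equality constraint already holds by construction and, because $[\mt{G};\mt{A}]$ is invertible, this unique feasible point is automatically the minimizer. The top-block bound $\Vert\mt{e}_j^\top\mt{G}'\Vert_1\leq r$ is immediate: $\mm{Z}\subseteq\mm{CZ}\subseteq\mm{CZ}_0$ forces each coordinate half-width of $\mm{Z}$ to be at most $r$. For the bottom block I would combine the estimate $d_i\geq\Vert\mt{h}_i^\top\mt{G}'\Vert_1 + r\Vert\mt{h}_i^\top\Vert_1$ already used in the proof of Lemma~\ref{lem:dig} with the elementary fact $\Vert\mt{h}_i^\top\mt{G}'\Vert_1 = \max\mt{h}_i^\top(\mm{Z}-\mt{c}')\leq r\Vert\mt{h}_i^\top\Vert_1$ (which holds because each half-width $\leq r$ gives $\mm{Z}-\mt{c}'\subseteq\li-r\bfo,r\bfo\ri$); together these yield $d_i\geq 2\Vert\mt{h}_i^\top\mt{G}'\Vert_1$, i.e.\ $\overline{\sigma}_{n+i}\leq 1$.

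For the exactness claim $\mm{CZ}\ominus\mm{CZ}_{\rm s} = \mm{CZ}\ominus\mm{Z}$ I would \emph{not} try to show $\mm{CZ}_{\rm s} = \mm{Z}$ (this is false in general, as $\mm{CZ}_{\rm s}$ can be strictly larger in directions that $\mt{H}$ does not constrain). Instead I would argue at the level of support functions through Lemma~\ref{lem:HmB}, using $\mm{CZ} = \{\mt{x}\mid\mt{H}\mt{x}\leq\mt{a}\}$. By that lemma the two H-Reps coincide as soon as $\max\mt{h}_i^\top\mm{CZ}_{\rm s} = \max\mt{h}_i^\top\mm{Z}$ for every row $\mt{h}_i^\top$ of $\mt{H}$. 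The inequality ``$\geq$'' is just $\mm{Z}\subseteq\mm{CZ}_{\rm s}$ (Step~I). For ``$\leq$'', writing $\mm{CZ}_{\rm s} = \mt{c}' + \mt{G}\,\mm{S}$ with $\mm{S} = \{\mtg{\sigma}\in\li-\overline{\mtg{\sigma}},\overline{\mtg{\sigma}}\ri\mid\mt{A}\mtg{\sigma}=\bfz\}$ and splitting $\mtg{\sigma} = [\mtg{\sigma}_1;\mtg{\sigma}_2]$, the identities $\mt{G}\mtg{\sigma} = r\mtg{\sigma}_1$ and (from the $i$-th row of $\mt{A}\mtg{\sigma}=\bfz$) $r\mt{h}_i^\top\mtg{\sigma}_1 = -\tfrac{d_i}{2}\sigma_{2,i}$ give $\vert\mt{h}_i^\top\mt{G}\mtg{\sigma}\vert = \tfrac{d_i}{2}\vert\sigma_{2,i}\vert \leq \tfrac{d_i}{2}\,\overline{\sigma}_{n+i} = \Vert\mt{h}_i^\top\mt{G}'\Vert_1$, whence $\max\mt{h}_i^\top\mm{CZ}_{\rm s}\leq\mt{h}_i^\top\mt{c}' + \Vert\mt{h}_i^\top\mt{G}'\Vert_1 = \max\mt{h}_i^\top\mm{Z}$. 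Applying Lemma~\ref{lem:HmB} to the two (compact) subtrahends then closes the argument.

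I expect the main obstacle to be conceptual rather than computational: recognizing that exactness must be certified via the support functions in the directions $\mt{h}_i$ (Lemma~\ref{lem:HmB}), rather than by equating $\mm{CZ}_{\rm s}$ and $\mm{Z}$. The one genuinely delicate step is the cancellation of the factor $d_i$ when coupling the closed form $\overline{\sigma}_{n+i} = \tfrac{2}{d_i}\Vert\mt{h}_i^\top\mt{G}'\Vert_1$ with the constraint row $r\mt{h}_i^\top\mtg{\sigma}_1 = -\tfrac{d_i}{2}\sigma_{2,i}$; everything else is routine bookkeeping resting on the enclosure $\mm{Z}\subseteq\mm{CZ}_0$.
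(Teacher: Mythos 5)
Your proposal is correct and matches the paper's own proof essentially step for step: the same closed-form $\mtg{\Gamma}$ obtained from the invertibility of $[\mt{G};\mt{A}]$ (Lemma~\ref{lem:dig}), the same two entrywise bounds establishing $\overline{\mtg{\sigma}}\leq\bfo$ (hence feasibility and, by uniqueness, optimality), and the same support-function argument via Lemma~\ref{lem:HmB}, showing $\max\mt{h}_i^\top\mm{CZ}_{\rm s}\leq\mt{h}_i^\top\mt{c}'+\Vert\mt{h}_i^\top\mt{G}'\Vert_1=\max\mt{h}_i^\top\mm{Z}$ so that the two Minkowski differences share the same H-Rep. One cosmetic repair: the hypothesis $\mm{CZ}\ominus\mm{Z}\neq\emptyset$ yields only the translate $\mt{c}-\mt{c}'+\mm{Z}\subseteq\mm{CZ}\subseteq\mm{CZ}_0$ (not $\mm{Z}\subseteq\mm{CZ}$ as written), but since the half-widths $\Vert\mt{e}_j^\top\mt{G}'\Vert_1$ and $\Vert\mt{h}_i^\top\mt{G}'\Vert_1$ are translation invariant, all of your estimates go through unchanged.
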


\begin{proof}
By  Lemma \ref{lem:dig}, there is a unique $\mtg{\Gamma}$ satisfying the equality constraint $[\mt{G}; \mt{A}]\mtg{\Gamma}= [\mt{G}'; \bfz]$, i.e., 
\begin{align}
\mtg{\Gamma} & = \big[\tfrac{1}{r} \mt{G}';  -\text{diag}(\tfrac{\mt{d}}{2})^{-1} \mt{H} \mt{G}' \big], 
\end{align} 
In what follows, we show that, assuming that  $\langle r\,\mt{I}_n, \mt{c}\rangle \supseteq \mm{CZ}$, the constraint  
$\vert\mtg{\Gamma}\vert \bfo \leq \bfo$ holds automatically. 
Note that 
\begin{align}
\overline{\mtg{\sigma}} = \vert\mtg{\Gamma}\vert \bfo = \Big[\tfrac{1}{r}\vert \mt{G}'\vert \bfo ; \big\vert\text{diag}(\tfrac{\mt{d}}{2})^{-1}\mt{H}\mt{G}'\big\vert \bfo \Big] . 
\end{align}
\begin{itemize}
\item[i)]For $j = 1,2\dots, n$, $\overline{\sigma}_j = \tfrac{1}{r}\Vert \mt{e}_j^\top \mt{G}' \Vert_1$ where $\mt{e}_j$ is the $j^{\rm th}$ natural basis. 
Since $\langle r\,\mt{I}_n, \mt{c}\rangle \supseteq \mm{CZ} \supseteq \mt{c}-\mt{c}' + \mm{Z} = \langle \mt{G}', \mt{c}'\rangle$,  
we have
\begin{align}
& \, \mt{e}_j^\top \mt{c}' + \Vert \mt{e}_j^\top \mt{G}' \Vert_1 = \, \max \mt{e}_j^\top (\mt{c}-\mt{c}'+\mm{Z}) \nonumber \\
 \leq &   \, \max \mt{e}_j^\top \langle r\, \mt{I}_n, \mt{c}\rangle
=  \, \mt{e}_j^\top \mt{c}' + r. 
\end{align}
Therefore  $\Vert \mt{e}_j^\top \mt{G}' \Vert_1 \leq r$ and $\overline{\sigma}_j \leq 1$. 
\item[ii)]For $i = 1,2,\dots, \ell$, $\overline{\sigma}_{n+i} = \tfrac{2}{d_i} \Vert \mt{h}_i^\top \mt{G}' \Vert_1 = \tfrac{2 \Vert \mt{h}_i^\top \mt{G}' \Vert_1}{\Vert \mt{h}_i^\top \mt{G}' \Vert_1 + r\Vert \mt{h}_i^\top \Vert_1}$. Again, since 
$\langle r\,\mt{I}_n, \mt{c}\rangle  \supseteq \mt{c}-\mt{c}' + \mm{Z} = \langle \mt{G}', \mt{c}'\rangle$,  we have 
\begin{align}
&  \, \mt{h}_i^\top \mt{c}  +  \Vert \mt{h}_i^\top \mt{G}' \Vert_1 = \, \max \mt{h}_i^\top (\mt{c}-\mt{c}'+\mm{Z}) \nonumber \\
 \leq &  \, \max \mt{h}_i^\top \langle r\, \mt{I}_n, \mt{c}\rangle = \, \mt{h}_i^\top \mt{c}  + r\Vert \mt{h}_i^\top \Vert_1. 
\end{align}
Therefore  $\Vert \mt{h}_i^\top \mt{G}' \Vert_1 \leq r\Vert \mt{h}_i^\top \Vert_1$ and $\overline{\sigma}_{n+i} \leq 1$. 
\end{itemize}
So far we have proved that $\mtg{\Gamma}$ is the unique feasible solution (hence the minimizer) of \eqref{eq:minouts}. 

It is known from Lemma \ref{lem:HmB} that $\mm{CZ} \ominus \mm{Z} = \mm{CZ} \ominus \{\mt{x} \mid \mt{H}\mt{x} \leq \mt{a}_{\rm s}\}$, where $\mt{a}_{\rm s} \in \mathbb{R}^\ell$ and its $i^{\rm th}$ element $a_{{\rm s}, i} = \mt{h}_i^\top \mt{c}' + \Vert \mt{h}_i^\top \mt{G}'\Vert_1$. 
Note that, for $i = 1,2,\dots, \ell$,  
\begin{align}
&  \, \max \mt{h}_i^\top \mm{CZ}_{\rm s} \nonumber \\
= &  \, \max \mt{h}_i^\top \{\mt{G}\mtg{\sigma} + \mt{c}' \mid \mtg{\sigma} \in \li -\overline{\mtg{\sigma}}, \overline{\mtg{\sigma}}\ri, \mt{A}\mtg{\sigma} = \bfz \} \nonumber \\
= &   \, \mt{h}_i^\top \mt{c}' + \max \left\{ r\,\mt{h}_i^\top \mtg{\sigma}_{1:n} \bigg\vert 
\begin{array}{l}
\mtg{\sigma} \in \li -\overline{\mtg{\sigma}}, \overline{\mtg{\sigma}}\ri, \ \  r\mt{H}\mtg{\sigma}_{1:n} =\\
 -\text{diag}(\tfrac{\mt{d}}{2}) \mtg{\sigma}_{n+1:n+\ell} 
\end{array}
 \right \}\nonumber \\
\leq  &  \,   \mt{h}_i^\top \mt{c}' + \max \{ - \tfrac{d_i}{2} \sigma_{n+i} \mid \mtg{\sigma} \in \li -\overline{\mtg{\sigma}}, \overline{\mtg{\sigma}}\ri \}\nonumber \\
= &  \, \mt{h}_i^\top \mt{c}' + \tfrac{d_i}{2} \overline{\sigma}_{n+i} \nonumber \\
= & \,   \mt{h}_i^\top \mt{c}'  + \Vert \mt{h}_i^\top \mt{G}' \Vert_1 = a_{{\rm s}, i},
\label{eq:incl}
\end{align}
where $\mtg{\sigma}_{1:n}$ (and $\mtg{\sigma}_{n+1:n+\ell}$, resp.) is a vector that consists of the first $n$ elements (and the last $\ell$ elements, resp.) of $\mtg{\sigma}$. 
By Eq. \eqref{eq:incl}, $\mm{CZ}_{\rm s} \subseteq \{\mt{x} \mid \mt{H}\mt{x} \leq \mt{a}_{\rm s}\}$. 
Together with the fact that $\mm{Z} \subseteq \mm{CZ}_{\rm s}$, we have $\mm{CZ} \ominus \mm{CZ}_{\rm s} = \mm{CZ} \ominus \mm{Z}$. 
\end{proof}

The following result says that Step II is exact. 
\begin{prop}\label{prop:exact2}
Let $\mm{CZ}_{\rm s} \cgeq \langle \mt{G}\,\text{diag}(\overline{\mtg{\sigma}}), \mt{c}', \mt{A}\,\text{diag}(\overline{\mtg{\sigma}}),\bfz \rangle$ 
and 
$\mm{CZ}_{\rm d} \cgeq \langle \mt{G}\,\text{diag}(\bfo - \overline{\mtg{\sigma}}), \mt{c}-\mt{c}', \mt{A}\,\text{diag}(\bfo - \overline{\mtg{\sigma}}),\mt{b} \rangle$, 
where $\overline{\mtg{\sigma}}$ is defined the same as in Proposition \ref{prop:exact1}, then $\mm{CZ} \ominus \mm{CZ}_{\rm s} = \mm{CZ}_{\rm d}$. 
\end{prop}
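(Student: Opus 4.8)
The plan is to reduce the claim entirely to the general exactness criterion already proved in Theorem \ref{thm:step2}, whose hypothesis is precisely the triviality of $\mm{N} := \mm{N}(\mt{G}) \cap \mm{N}(\mt{A})$. Thus the whole argument amounts to (a) checking that Theorem \ref{thm:step2} is applicable to the rich CG-Rep and the $\overline{\mtg{\sigma}}$ at hand, and (b) verifying $\mm{N} = \{\bfz\}$ for this particular $\langle \mt{G}, \mt{c}, \mt{A}, \mt{b}\rangle$.

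First I would confirm applicability. Here $\overline{\mtg{\sigma}} = \vert \mtg{\Gamma}\vert \bfo$ with $\mtg{\Gamma}$ the minimizer of \eqref{eq:minouts}, exactly as in Proposition \ref{prop:exact1}. By Remark \ref{rmk:wlog}, for such $\overline{\mtg{\sigma}}$ the hyper-box $\li -\overline{\mtg{\sigma}}, \overline{\mtg{\sigma}}\ri$ is the \emph{smallest} one enclosing $\mm{S} = \{\mtg{\sigma} \in \li -\overline{\mtg{\sigma}}, \overline{\mtg{\sigma}}\ri \mid \mt{A}\mtg{\sigma} = \bfz\}$. This tightness is the standing assumption behind Proposition \ref{prop:boxcase} and hence behind Theorem \ref{thm:step2}, so the theorem applies verbatim to the current $\mm{CZ}$, $\mm{CZ}_{\rm s}$, and $\mm{CZ}_{\rm d}$.

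Next I would establish the null-space condition. The key is the elementary identity $\mm{N}(\mt{G}) \cap \mm{N}(\mt{A}) = \mm{N}([\mt{G};\mt{A}])$: a vector $\mt{x}$ satisfies $\mt{G}\mt{x} = \bfz$ and $\mt{A}\mt{x} = \bfz$ if and only if $[\mt{G};\mt{A}]\mt{x} = \bfz$. By Lemma \ref{lem:dig}, under the standing hypothesis $\mm{CZ} \ominus \mm{Z} \neq \emptyset$ the square matrix $[\mt{G};\mt{A}] \in \mathbb{R}^{(n+\ell)\times(n+\ell)}$ is invertible (being triangular with nonzero diagonal entries, since $r > 0$ and each $d_i > 0$). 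An invertible matrix has trivial null space, so $\mm{N} = \mm{N}([\mt{G};\mt{A}]) = \{\bfz\}$.

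Finally, invoking the second assertion of Theorem \ref{thm:step2} with $\mm{N} = \{\bfz\}$ gives $\mm{CZ} \ominus \mm{CZ}_{\rm s} = \mm{CZ}_{\rm d}$, which is the claim. I do not anticipate a genuine obstacle here: all the substantive work has already been front-loaded into Lemma \ref{lem:dig} (the invertibility of $[\mt{G};\mt{A}]$ engineered by the rich construction) and into Theorem \ref{thm:step2} (the general exactness criterion). The only point requiring care is confirming that the specific $\overline{\mtg{\sigma}}$ inherited from Proposition \ref{prop:exact1} meets the tightness requirement of Remark \ref{rmk:wlog}, so that Theorem \ref{thm:step2} can be cited without re-deriving Propositions \ref{prop:boxcase} and \ref{prop:GMNS}.
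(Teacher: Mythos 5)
Your proposal is correct and follows exactly the paper's own argument: Lemma \ref{lem:dig} gives invertibility of $[\mt{G};\mt{A}]$, hence $\mm{N}(\mt{G})\cap\mm{N}(\mt{A})=\mm{N}([\mt{G};\mt{A}])=\{\bfz\}$, and the second assertion of Theorem \ref{thm:step2} then yields $\mm{CZ}\ominus\mm{CZ}_{\rm s}=\mm{CZ}_{\rm d}$. Your extra care in verifying the tightness hypothesis of Remark \ref{rmk:wlog} (so that Theorem \ref{thm:step2} applies to this $\overline{\mtg{\sigma}}$) is a point the paper leaves implicit, but it is the same proof.
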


\begin{proof}
By  Lemma \ref{lem:dig}, $\mm{N}(\mt{G}) \cap \mm{N}(\mt{A}) = \mm{N}([\mt{G};\mt{A}]) = \{\bfz\}$. 
By Theorem \ref{thm:step2}, $\mm{CZ} \ominus \mm{CZ}_{\rm s} = \mm{CZ}_{\rm d}$. 
 \end{proof}
 
Although $\mm{CZ}\ominus \mm{Z}  \neq \emptyset$ is assumed in Propositions \ref{prop:exact1}, \ref{prop:exact2}, 
the result $\mm{CZ}_{\rm d}$ returned by the two-step approach is exact in the following sense, regardless of this assumption. 

\begin{thm}\label{thm:exact}
For any constrained zonotope $\mm{CZ}$, if we construct its CG-Rep following steps 1)-4), then the followings hold: 
\begin{itemize}
\item[i)] if $\mm{CZ}\ominus \mm{Z} = \emptyset$, either $\mm{CZ}_{\rm d} = \emptyset$ or \eqref{eq:minouts} is infeasible; 
\item[ii)] if $\mm{CZ}\ominus \mm{Z} \neq \emptyset$, $\mm{CZ}_{\rm d} = \mm{CZ}\ominus \mm{Z}$. 
\end{itemize}
\end{thm}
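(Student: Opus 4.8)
The plan is to establish the two bullets separately, since they correspond to genuinely different facts: bullet ii) is an \emph{exactness} statement that cashes in the rich construction of Section \ref{sec:rep}, whereas bullet i) is really a \emph{soundness} statement of the two-step approach that needs nothing special about the CG-Rep. For bullet ii), assuming $\mm{CZ}\ominus\mm{Z}\neq\emptyset$, essentially no new work is required: Proposition \ref{prop:exact1} guarantees that \eqref{eq:minouts} is feasible, produces the minimizer $\mtg{\Gamma}$ with $\overline{\mtg{\sigma}}=\vert\mtg{\Gamma}\vert\bfo$, and certifies $\mm{CZ}\ominus\mm{CZ}_{\rm s}=\mm{CZ}\ominus\mm{Z}$, while Proposition \ref{prop:exact2} gives $\mm{CZ}\ominus\mm{CZ}_{\rm s}=\mm{CZ}_{\rm d}$ for that same $\overline{\mtg{\sigma}}$. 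Chaining the two equalities yields $\mm{CZ}_{\rm d}=\mm{CZ}\ominus\mm{Z}$, which is exactly bullet ii).

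For bullet i) I would argue by a feasibility dichotomy. If \eqref{eq:minouts} is infeasible we are immediately in the first alternative, so suppose instead it is feasible with minimizer $\mtg{\Gamma}$ and $\overline{\mtg{\sigma}}=\vert\mtg{\Gamma}\vert\bfo$. Any feasible $\mtg{\Gamma}$ satisfies $\mt{G}\mtg{\Gamma}=\mt{G}'$, $\mt{A}\mtg{\Gamma}=\bfz$, and $\overline{\mtg{\sigma}}=\vert\mtg{\Gamma}\vert\bfo$, so the inclusion chain established right after Theorem \ref{thm:simple} certifies $\mm{Z}\subseteq\mm{CZ}_{\rm s}$. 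Monotonicity of the Minkowski difference in its subtrahend, which is immediate from $\mm{CZ}\ominus\mm{R}=\{\mt{x}\mid\mt{x}+\mm{R}\subseteq\mm{CZ}\}$, then gives $\mm{CZ}\ominus\mm{CZ}_{\rm s}\subseteq\mm{CZ}\ominus\mm{Z}=\emptyset$. Finally, the unconditional inclusion $\mm{CZ}_{\rm d}\subseteq\mm{CZ}\ominus\mm{CZ}_{\rm s}$ of Theorem \ref{thm:step2} (which holds for any CG-Rep and any $\overline{\mtg{\sigma}}$, with no need for $\mm{N}(\mt{G})\cap\mm{N}(\mt{A})=\{\bfz\}$) forces $\mm{CZ}_{\rm d}\subseteq\emptyset$, i.e. $\mm{CZ}_{\rm d}=\emptyset$, placing us in the second alternative.

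The one point that deserves explicit care is the status of the construction in steps 1)--4) when $\mm{CZ}\ominus\mm{Z}=\emptyset$: step 3) requests a center $\mt{c}\in\mt{c}'+(\mm{CZ}\ominus\mm{Z})$, which is vacuous in this case, so the formulas \eqref{eq:G}--\eqref{eq:bi} and the invertibility of $[\mt{G};\mt{A}]$ from Lemma \ref{lem:dig} need not apply. I would stress that the bullet i) argument is insensitive to this: it uses only that $\mm{CZ}_{\rm d}$ and $\mm{CZ}_{\rm s}$ share the template $\mt{G},\mt{A}$ through $\overline{\mtg{\sigma}}$, so any admissible $\mt{c}$ (indeed any CG-Rep of $\mm{CZ}$) keeps Theorem \ref{thm:step2} and the feasibility-to-containment implication applicable. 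Thus the rich CG-Rep is exploited only in bullet ii). I expect the main obstacle to be presentational rather than mathematical, namely cleanly separating the facts that genuinely require the invertibility of $[\mt{G};\mt{A}]$ (used only in the nonempty case, via Propositions \ref{prop:exact1} and \ref{prop:exact2}) from those that hold generically, so that bullet i) is not accidentally made to depend on a construction that is ill-defined precisely when $\mm{CZ}\ominus\mm{Z}=\emptyset$.
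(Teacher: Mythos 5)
Your proof is correct and takes essentially the same route as the paper: bullet ii) by chaining Propositions \ref{prop:exact1} and \ref{prop:exact2}, and bullet i) by the inclusion $\mm{CZ}_{\rm d} \subseteq \mm{CZ}\ominus\mm{CZ}_{\rm s} \subseteq \mm{CZ}\ominus\mm{Z} = \emptyset$ whenever \eqref{eq:minouts} is feasible, which is exactly the paper's (much terser) argument via Theorem \ref{thm:step2}. Your extra observation---that bullet i) must not lean on the steps 1)--4) construction because step 3) is ill-defined when $\mm{CZ}\ominus\mm{Z}=\emptyset$, and that only the template-sharing structure and the unconditional inclusion of Theorem \ref{thm:step2} are needed there---is a legitimate refinement that the paper's one-line proof glosses over.
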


\begin{proof}
Bullet ii) follows from Proposition  \ref{prop:exact1} and Proposition \ref{prop:exact2}. 
For bullet i), if \eqref{eq:minouts} is feasible, then $\mm{CZ}_{\rm d} = \emptyset$ because $\mm{CZ}_{\rm d}\subseteq \mm{CZ}\ominus \mm{Z} = \emptyset$ by Theorem \ref{thm:step2}. 
\end{proof}
 


\begin{rmk}\label{rmk:hard}
\revise{The run time of our two-step approach is polynomial in the input size. Thus, by Proposition \ref{prop:hard}, this approach cannot be exact \emph{in general} unless P $=$ NP. }
However, as stated by Theorem \ref{thm:exact}, the two-step approach still achieves exactness \emph{in special cases} where $\mm{CZ}$'s CG-Rep is not the most ``compact'' one. 
Note that,  such non-compact CG-Rep is constructed from the H-Rep of $\mm{CZ}$, whose complexity may, in the worst case, be exponential in that of $\mm{CZ}$'s most compact CG-Rep. 
Therefore, the exactness results are not surprising because in this case, the two-step approach bypasses the high-complexity step of computing $\mm{CZ}$'s H-Rep. 
This is also consistent with the fact that 
Minkowski-subtracting a zonotope from a polytope in its H-Rep is easy \cite{kolmanovsky1998theory}. 
However, our results in Sec. \ref{sec:rep} may open the direction of incrementally enriching the minuend's CG-Rep to  achieve exact results quickly whenever possible. 
\end{rmk}



\section{Backward Reachable Set Computation for Nonlinear Systems} \label{sec:brs_nl}
In this section, we use our two-step approach from Sec. \ref{sec:Mdiff} to develop BRS under-approximation algorithms  for system \eqref{eq:sysnl}, as promised in Sec. \ref{sec:prob}. In particular, to incorporate the error introduced by sequential linearization, we require $\underline{\mm{X}}_{k-1} \ominus \mm{L}$ to be reachable from $\underline{\mm{X}}_k$ under the linearized dynamics, where $\underline{\mm{X}}_{k-1}$ ( $\underline{\mm{X}}_k$, resp.) are the $k-1^{\rm st}$ ($k^{\rm th}$, resp.) under-approximated BRS and $\mm{L}$ contains all values of the linearization error over $\underline{\mm{X}}_k$. 
As mentioned at the end of Sec. \ref{sec:prob}, the challenge is to approximate $\mm{L}$ without knowing $\underline{\mm{X}}_k$ a priori. 
To resolve this issue, we explore the following two strategies. 
\begin{itemize}
    \item[A)] Scaling method
    : we incrementally enlarge $\mm{L}$ by a scaling factor until i) $\underline{\mm{X}}_{k-1} \ominus \mm{L}$ is reachable from $\underline{\mm{X}}_k$ under the linear dynamics, and ii) $\mm{L}$ encompasses all the values of the linearization error in $\underline{\mm{X}}_k$. Here, each $\underline{\mm{X}}_k$ is a constrained zonotope. 
    \item[B)] Splitting method: we fix $\mm{L}$ and split $\underline{\mm{X}}_{k-1}$ into finitely many smaller sets, i.e.,  $\underline{\mm{X}}_{k-1} = \bigcup_i  \underline{\mm{X}}_{k-1}^i$. The nonlinear system is linearized for each $\underline{\mm{X}}_{k-1}^i$, and the splitting procedure terminates when the linearization error in each $\underline{\mm{X}}_{k}^i$, from where  $\underline{\mm{X}}_{k-1}^i \ominus \mm{L}$ is reachable under the $i^{\rm th}$ linear dynamics, are contained by $\mm{L}$. In this case, $\underline{\mm{X}}_k  = \bigcup_i  \underline{\mm{X}}_{k}^i$ and is represented by the collection of the CG-Reps of constrained zonotopic sets $\underline{\mm{X}}_{k}^i$.  
\end{itemize}
The scaling method is in principle similar to the approach  proposed in \revise{ \cite{althoff2013reachability}}, and the splitting method borrows the idea from \cite{althoff2008reachability}, where a similar splitting procedure is developed for zonotopes to control the linearization error in forward reachability analysis. Unique to our implementation is the use of our efficient Minkowski-difference computation techniques tailored to constrained zonotopes. 
While the scaling method is more suitable for computation with more steps (i.e., larger $k$) in a convex safe set $\mm{X}_{\rm safe}$, the splitting method can better capture the shape of a  nonconvex set $\mm{X}_k$ and is more suitable when the safe set $\mm{X}_{\rm safe}$ is also nonconvex, because $\underline{\mm{X}}_k$ is represented as a collection of constrained zonotopes in the latter method. 
In what follows, we present the detailed  algorithms for the two methods above. 


\subsection{Scaling Method}

\begin{algorithm}[htp]
 \caption{$\underline{\mm{X}}_k = $ ScalingBRS($\underline{\mm{X}}_{k-1}, \mt{f}, \mm{U},\mm{W},  \mm{X}_{\rm safe}$)}
 \begin{algorithmic}[1]
 \REQUIRE Constrained zonotope $\underline{\mm{X}}_{k-1}$; 
 System's vector field $\mt{f}$; Control input set $\mm{U}$; Disturbance set $\mm{W}$; 
 Safe set $\mm{X}_{\rm safe}$ 
 \ENSURE Constrained zonotope $\underline{\mm{X}}_k\subseteq  Pre(\underline{\mm{X}}_{k-1})$
  \STATE $\widetilde{\mt{z}} \leftarrow  center(\underline{\mm{X}}_{k-1} \times \mm{U})$ 
  \STATE $[\widetilde{\mt{A}}, \widetilde{\mt{B}}] \leftarrow  linearize(\widetilde{\mt{z}}, \mt{f})$; \ $\widetilde{\mm{L}} \leftarrow \{\mt{f}(\widetilde{\mt{z}}) - [ \widetilde{\mt{A}},  \widetilde{\mt{B}}] \widetilde{\mt{z}}\}$
  \STATE $\widetilde{\mm{Z}}_k  \leftarrow  Pre_{\mt{x}, \mt{u}}(\underline{\mm{X}}_{k-1}, \widetilde{\mt{A}},  \widetilde{\mt{B}}, \mm{U}, \mm{W}, \widetilde{\mm{L}}, \mm{X}_{\rm safe})$
  \STATE $ \mt{z} ^\ast \leftarrow center(\widetilde{\mm{Z}}_k)$ 
  \STATE $[\mt{A}, \mt{B}] \leftarrow linearize(  \mt{z}^\ast, \mt{f})$; \ $\mm{L} \leftarrow LE( \mt{z}^\ast, \mt{f},  \widetilde{\mm{Z}}_k)$
  \STATE $\mm{Z}_k \leftarrow  Pre_{\mt{x}, \mt{u}}(\underline{\mm{X}}_{k-1},\mt{A}, \mt{B}, \mm{U}, \mm{W},  \mm{L}, \mm{X}_{\rm safe})$
  \WHILE{ $LE( \mt{z}^\ast, \mt{f}, \mm{Z}_k) \not\subseteq  \mm{L}$}
  \STATE Enlarge  $\mm{L}$  by a factor $\alpha$
  \STATE $\mm{Z}_k \leftarrow  Pre_{\mt{x}, \mt{u}}(\underline{\mm{X}}_{k-1}, \mt{A}, \mt{B}, \mm{U}, \mm{W},  \mm{L},  \mm{X}_{\rm safe})$
  \ENDWHILE
 \RETURN $\underline{\mm{X}}_k \leftarrow Proj_{\mt{x}}(\mm{Z}_k)$
 \end{algorithmic}
 \label{alg:scaling}
 \end{algorithm}
Algorithm \ref{alg:scaling} details the scaling method. In this algorithm, 
 we first linearize the system at the geometric center $\widetilde{z}$ of \revise{the interval closure of $\underline{\mm{X}}_{k-1}\times \mm{U}$}. 
The function $center(\mm{Z})$ returns $\tfrac{1}{2}(\underline{\mt{z}}  + \overline{\mt{z}})$, where $\underline{\mt{z}}$ and $\overline{\mt{z}}$ are the lower and upper limits of the smallest hyper-box that contains set $\mm{Z}$. That is, 
\begin{align}
     \underline{z}_i & =  \min \mt{e}_i^\top \mm{Z}, \ \ \ \  \overline{z}_i  = \max \mt{e}_i^\top \mm{Z}.
     \label{eq:zbars}
\end{align}
where $\underline{z}_i$ and $\overline{z}_i$ are the $i^{\rm th}$ elements of $\underline{\mt{z}}$ and $\overline{\mt{z}}$, respectively. 
Suppose that $\mm{Z}$ is a constrained zonotope, executing the function $center$ amounts to solving $2n$ linear programs. 
On line 2, the function $linearize(\mt{z} , \mt{f})$ linearizes the vector field $\mt{f}$ at point $\mt{z}$ and returns the  matrices that define the obtained linear system, i.e., 
\begin{align}
    &  \mt{A}  = \frac{\partial{\mt{f}(\mt{x},\mt{u})}}{\partial{\mt{x}}}\bigg|_{[\mt{x};\mt{u}] = \mt{z}  }, \ \ \  \mt{B}  = \frac{\partial{\mt{f}(\mt{x},\mt{u})}}{\partial{\mt{u}}}\bigg|_{[\mt{x};\mt{u}] = \mt{z}  }. 
    \label{eq:AB_lin}
\end{align}
On line 3, we compute a set $\widetilde{\mm{Z}}_k$ of state-input vectors $[\mt{x};\mt{u}]$ such that $\underline{\mm{X}}_{k-1}$ is reached from state $\mt{x}$ under control $\mt{u}$ and the obtained linear dynamics. To be precise, $Pre_{\mt{x}, \mt{u}}(\mm{X}, \mt{A}, \mt{B}, \mm{U}, \mm{W}, \mm{L}, \mm{X}_{\rm safe})$ is defined to be:
\begin{align}
\{[\mt{x};\mt{u}] \in \mm{X}_{\rm safe}\times \mm{U} \mid& \mt{A}\mt{x} + \mt{B}\mt{u} \in \mm{X}\ominus (\mm{L} \oplus \mm{W})\}. \label{eq:Zk}
\end{align}
Suppose that $\mm{X}_{\rm safe}$ is a polytope, then  $ Pre_{\mt{x},\mt{u}}$ is a constrained zonotope whose CG-Rep can be obtained using Lemma \ref{lem:cz} and our two-step approach for Minkowski-difference computation. Note that, in line 3, $ \widetilde{\mm{Z}}_k$ is computed without considering any linearization error (i.e., $\widetilde{\mm{L}} = \{\mt{f}(\widetilde{\mt{z}} ) - [ \widetilde{\mt{A}},  \widetilde{\mt{B}}] \widetilde{\mt{z}}\}$ is a singleton set). The purpose of this step is to find a better point $\mt{z}^\ast = center(\widetilde{\mm{Z}}_k)$ to linearize the system at (line 4). 
Then on lines 5-6, we linearize the system at $\mt{z}^\ast$ and recompute a set $\mm{Z}_k$ with the latest linear system and a set $ \mm{L}  = LE( \mt{z}^\ast, \mt{f},  \widetilde{\mm{Z}}_k)$ that contains all possible values of the linearization error. Particularly, $LE(\mt{z}^\ast, \mt{f}, \mm{Z})$ is a zonotope that encloses the following set of Lagrange remainders over set $\mm{Z}$: 
\begin{align}
    & \hspace{-1mm}\mt{f}(\mt{z}^\ast) - [\mt{A},  \mt{B}] \mt{z}^\ast + \nonumber \\
    & \hspace{-2mm}\left\{\hspace{-0.5mm}\mt{L} \in \mathbb{R}^n \,\bigg\vert\hspace{-1.5mm}
    \begin{array}{l}
         L_i = \tfrac{1}{2}(\mt{z}- \mt{z}^\ast)^\top \tfrac{\partial^2 f_i}{\partial \mt{z}^2}(\mtg{\xi}_i)(\mt{z}- \mt{z}^\ast), \ \mt{z} \in \mm{Z}  \\
         \mtg{\xi}_i = \lambda_i\mt{z}^\ast + (1-\lambda_i)\mt{z}, \ \lambda_i \in \li 0,1\ri
    \end{array}\hspace{-2.2mm}\right\}
\end{align}
where $\mt{A}$, $\mt{B}$ are given by Eq. \eqref{eq:AB_lin} evaluated at $\mt{z}^\ast$, and $L_i$, $f_i$ are the $i^{\rm th}$ elements of $\mt{L}$ and $\mt{f}$, respectively.  
The set $LE(\mt{z}^*, \mt{f}, \mm{Z})$ can be computed as a hyper-box using interval analysis techniques (e.g., see  \cite{althoff2008reachability}). If $\mm{L}$ encloses all possible values of the linearization error in set $\mm{Z}_k$, then $\mm{X}_{k-1}$ can be reached from   $Proj_{\mt{x}}(\mm{Z}_k) := \{\mt{x} \mid [\mt{x}; \mt{u}] \in \mm{Z}_k \}$ under the nonlinear dynamics\footnote{The projection step amounts to a linear transformation and is easy for CG-Reps by Lemma \ref{lem:cz}, bullet i).}. In that case, we return $\underline{\mm{X}}_k = Proj_{\mt{x}}(\mm{Z}_k)$. 
Otherwise we incrementally enlarge $\mm{L}$ by a factor $\alpha$ and recompute $\mm{Z}_k$ until the linearization error set $LE( \mt{z}^\ast, \mt{f}, \mm{Z}_k)$ is enclosed by $ \mm{L} $.

\subsection{Splitting Method} \label{subsec:splitting}
The real backward reachable set $\mm{X}_k$ may not be convex due to the nonlinearity of the system's vector field $\mt{f}$ or the nonconvexity of the safe set $\mm{X}_{\rm safe}$. 
Therefore, the scaling method can be conservative because we under-approximate a potentially nonconvex set $\mm{X}_k$ with a  constrained zonotope $\underline{\mm{X}}_k$,
which is a convex set. In this part, we present another approach called the splitting method, 
where the linearization error set $\mm{L}$ is defined to have a prescribed (i.e., fixed) size, and $\underline{\mm{X}}_k$ is represented as the \emph{union} of a finite collection  $\{\underline{\mm{X}}_{k}^i\}$ of constrained zonotopes, so that the linearization error in each $\underline{\mm{X}}_{k}^i$ is over-approximated by the prescribed $\mm{L}$. 
Since $\underline{\mm{X}}_k = \bigcup_i \underline{\mm{X}}_k^i$ is not necessarily convex in the splitting method, it serves as a less conservative (i.e., larger) under-approximation of $\mm{X}_k$. 


The following proposition provides a rigorous way to split one constrained zonotope into two, so that the linearization error can be evaluated over the two smaller sets separately. 
\begin{prop}\label{prop:splitting}
Assume that $\mm{CZ} = \langle \mt{G}, \mt{c}, \mt{A}, \mt{b} \rangle$, where 
$ \mt{G} = [\mt{g}_1, \mt{g}_2, \cdots, \mt{g}_N],
    \mt{A}  = [\mt{a}_1, \mt{a}_2, \cdots, \mt{a}_N]$. 
Then set $\mm{CZ}$ can be split into $\mm{CZ}_1^i = \langle \mt{G}_1, \mt{c}_1, \mt{A}_1, \mt{b}_1 \rangle$  and $\mm{CZ}_2^i = \langle \mt{G}_2, \mt{c}_2, \mt{A}_2, \mt{b}_2 \rangle$ along  $\mt{g}_i$, i.e., $\mm{CZ} = \mm{CZ}_1^i \cup \mm{CZ}_2^i$ where 
\begin{align}
    \mt{G}_1 = \mt{G}_2 &= [\mt{g}_1, \mt{g}_2, \cdots,\tfrac{1}{2}\mt{g}_i, \cdots, \mt{g}_N]  \\
    \mt{A}_1 = \mt{A}_2 &= [\mt{a}_1, \mt{a}_2, \cdots, \tfrac{1}{2}\mt{a}_i, \cdots, \mt{a}_N]  \\
    \mt{b}_1 &= \mt{b} - \tfrac{1}{2}\mt{a}_i \quad
    \mt{c}_1 = \mt{c} + \tfrac{1}{2}\mt{g}_i \\
    \mt{b}_2 &= \mt{b} + \tfrac{1}{2}\mt{a}_i \quad
    \mt{c}_2 = \mt{c} - \tfrac{1}{2}\mt{g}_i 
\end{align}
\end{prop}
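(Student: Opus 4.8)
The plan is to read the split as a bisection of the $i^{\rm th}$ coordinate $\theta_i$ of the parameter vector and to verify that each half of the unit interval reproduces exactly one of the two claimed CG-Reps after an affine reparametrization of that single coordinate. Starting from the definition
\begin{align}
\mm{CZ} = \{\mt{G}\mtg{\theta} + \mt{c} \mid \mtg{\theta} \in \li -\bfo, \bfo\ri,\ \mt{A}\mtg{\theta} = \mt{b}\}, \nonumber
\end{align}
I would first exploit the trivial set identity $[-1,1] = [-1,0]\cup[0,1]$ to partition the feasible parameters according to the sign of $\theta_i$. Writing $\mm{CZ}^{+}$ (resp.\ $\mm{CZ}^{-}$) for the subset of points generated by those $\mtg{\theta}$ with $\theta_i \in [0,1]$ (resp.\ $\theta_i \in [-1,0]$), all remaining coordinates still free in $[-1,1]$ and the constraint $\mt{A}\mtg{\theta} = \mt{b}$ intact, this immediately gives $\mm{CZ} = \mm{CZ}^{+} \cup \mm{CZ}^{-}$. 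The remainder of the proof is to identify $\mm{CZ}^{+}$ with $\mm{CZ}_1^i$ and $\mm{CZ}^{-}$ with $\mm{CZ}_2^i$.

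For the upper half, I would substitute $\theta_i = \tfrac{1}{2}(1 + \tilde{\theta}_i)$, which is a bijection carrying $\tilde{\theta}_i \in [-1,1]$ onto $\theta_i \in [0,1]$. Isolating the $i^{\rm th}$ column in the two sums $\mt{G}\mtg{\theta} = \sum_{j\neq i}\mt{g}_j\theta_j + \mt{g}_i\theta_i$ and $\mt{A}\mtg{\theta} = \sum_{j\neq i}\mt{a}_j\theta_j + \mt{a}_i\theta_i$, the substitution yields $\mt{g}_i\theta_i = \tfrac{1}{2}\mt{g}_i + \tfrac{1}{2}\mt{g}_i\tilde{\theta}_i$ and $\mt{a}_i\theta_i = \tfrac{1}{2}\mt{a}_i + \tfrac{1}{2}\mt{a}_i\tilde{\theta}_i$. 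The constant $\tfrac{1}{2}\mt{g}_i$ is absorbed into the center, giving $\mt{c}_1 = \mt{c} + \tfrac{1}{2}\mt{g}_i$; the factor-$\tfrac12$ generator replaces the $i^{\rm th}$ column, giving $\mt{G}_1$; and moving the constant $\tfrac{1}{2}\mt{a}_i$ to the right-hand side of the equality constraint gives $\mt{b}_1 = \mt{b} - \tfrac{1}{2}\mt{a}_i$ together with the halved $i^{\rm th}$ column in $\mt{A}_1$. This is precisely the CG-Rep of $\mm{CZ}_1^i$. The lower half is handled identically with $\theta_i = \tfrac{1}{2}(-1 + \tilde{\theta}_i)$, which flips the sign of the two extracted constants and produces $\mt{c}_2 = \mt{c} - \tfrac{1}{2}\mt{g}_i$ and $\mt{b}_2 = \mt{b} + \tfrac{1}{2}\mt{a}_i$, matching $\mm{CZ}_2^i$.

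I do not expect a genuine obstacle here: the argument is an elementary change of variables, and the only care needed is the bookkeeping of separating the $i^{\rm th}$ term from the two matrix--vector products while keeping all other coordinates and the coupling constraint $\mt{A}\mtg{\theta} = \mt{b}$ untouched. The one point worth stating explicitly is that the two reparametrizations are genuine bijections of the unit interval onto $[0,1]$ and $[-1,0]$, so that no feasible point of $\mm{CZ}$ is lost or double-generated beyond the shared boundary $\theta_i = 0$; combined with $[-1,1] = [-1,0]\cup[0,1]$ this delivers the claimed equality $\mm{CZ} = \mm{CZ}_1^i \cup \mm{CZ}_2^i$.
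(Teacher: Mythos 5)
Your proof is correct and follows essentially the same route as the paper's: partition the parameter set by the sign of $\theta_i$, then apply the affine change of variables $\theta_i = \tfrac{1}{2}(\pm 1 + \tilde{\theta}_i)$ (the paper writes it as $\mu_i = 2\theta_i \mp 1$) to identify each half with the claimed CG-Rep. The bookkeeping of the center, generator, and constraint updates matches the paper's computation exactly.
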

\begin{proof}
The set $\mm{CZ}$ can be written as
\begin{align}
    & \mm{CZ} = \{ \mt{G\theta+c} \mid |\theta_j| \leq1, j=1,2,\cdots,N, \mt{A\theta}=\mt{b}\}. 
    \end{align}
where $\mt{\theta} = [\theta_1; \theta_2; \cdots;\theta_N] \in \mathbb{R}^{N}$. Define 
    \begin{align}
    \hspace{-3mm}& \hspace{-3mm}\mm{CZ}_1^i \hspace{-0.5mm}=\hspace{-0.5mm} \{ \mt{G\theta+c} \hspace{-0.5mm}\mid \hspace{-0.5mm}\theta_i \in \li 0,1 \ri,|\theta_j| \leq1, j \neq i, \mt{A\theta}=\mt{b}\}, \\
    & \hspace{-3mm} \mm{CZ}_2^i \hspace{-0.5mm}=\hspace{-0.5mm} \{ \mt{G\theta+c} \hspace{-0.5mm}\mid \hspace{-0.5mm}\theta_i \in \li -1,0 \ri,|\theta_j| \leq1, j \neq i, \mt{A\theta}=\mt{b}\} . 
\end{align}
Apparently, $\mm{CZ} = \mm{CZ}_1^i \cup \mm{CZ}_2^i$. 
To show that $\mm{CZ}_1^i $ $ = \langle \mt{G}_1, \mt{c}_1, \mt{A}_1, \mt{b}_1 \rangle$, let  $\mtg{\theta}$ be such that $\theta_i \in \li 0,1 \ri$, $|\theta_j| \leq1$ for all $j \neq i$ and $\mt{A\theta}=\mt{b}$. 
Define $\mu_i = 2\theta_i - 1$ and  $\mt{\hat{\theta}} = [\theta_1; \theta_2; \cdots;\theta_{i-1}; \mu_i; \theta_{i+1}; \cdots; \theta_N]$, then 
\begin{align}
\mt{G\theta+c}  & = \mt{c} + \textstyle\sum_{j=1, j \neq i}^N \mt{g}_j\theta_j + \mt{g}_i(\tfrac{1}{2}\mu_i + \tfrac{1}{2}) \nonumber \\
    &= (\mt{c}+\tfrac{1}{2}\mt{g}_i) + \textstyle\sum_{j=1, j \neq i}^N \mt{g}_j\theta_j + \tfrac{1}{2}\mt{g}_i\mu_i  \nonumber \\
    &= \mt{G}_1 \mt{\hat{\theta}} + \mt{c}_1
\end{align}
In addition,
\begin{align}
    \mt{A\theta} = \mt{b} & \iff \textstyle\sum_{j=1}^N \mt{a}_j \theta_j = \mt{b} \nonumber \\
    & \iff \textstyle\sum_{j=1,j \neq i}^N \mt{a}_j \theta_j + \mt{a}_i(\tfrac{1}{2}\mu_i+\tfrac{1}{2}) = \mt{b} \nonumber \\
    & \iff \mt{A}_1\mt{\hat{\theta}} = \mt{b}_1
\end{align}
Since $|\mu_i| = |2\theta_i - 1| \leq 1$, we have $\hat{\mtg{\theta}} \in \li -\bfo, \bfo\ri$. Therefore, 
\begin{align}
    \mm{CZ}_1^i &= \{ \mt{G_1\hat{\theta}+c_1}\mid \hat{\mtg{\theta}} \in \li -\bfo, \bfo\ri ,\mt{A_1\hat{\theta}}=\mt{b_1}\} \nonumber \\
    &= \langle \mt{G}_1, \mt{c}_1, \mt{A}_1, \mt{b}_1 \rangle
\end{align}
Similarly, to show that $\mm{CZ}_2^i = \langle \mt{G}_2, \mt{c}_2, \mt{A}_2, \mt{b}_2 \rangle$, let $\mtg{\theta}$ be such that $\theta_i \in \li -1,0 \ri$, $|\theta_j| \leq1$ for all $j \neq i$ and $\mt{A\theta}=\mt{b}$. The above argument follows by setting 
$\mu_i = 2\theta_i + 1$. 
\end{proof}
Fig. \ref{fig:prop8} shows the two sets $\mm{CZ}_1$ (green) and  $\mm{CZ}_2$ (red)  obtained by splitting $\mm{CZ}$ (black contour) using Proposition \ref{prop:splitting}. Note that $\mm{CZ}_1$ overlaps with $\mm{CZ}_2$.  As we will see, this overlap helps to reduce the conservatism in the BRS computation. 


\begin{figure}[h]
  \centering
  \includegraphics[width=2.5in]{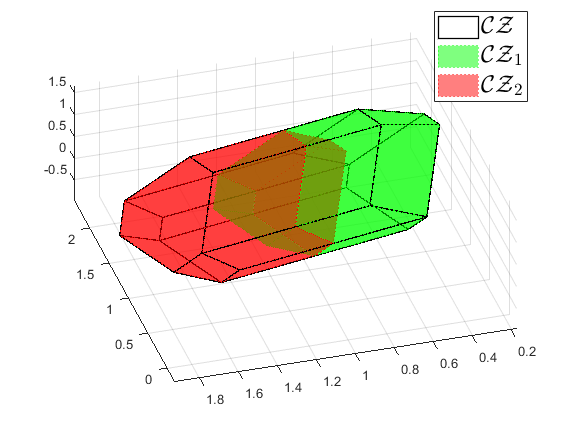}\\
  \caption{Result for Proposition \ref{prop:splitting}: $\mm{CZ} = \mm{CZ}_1 \cup \mm{CZ}_2$}
\label{fig:prop8}
\vspace{-4mm}
\end{figure}

Based on Proposition \ref{prop:splitting}, a detailed algorithm that implements the splitting method is given in Algorithm \ref{alg:split}. 
The output $\{\underline{\mm{X}}_k^\ell\}$ and the input $\{\underline{\mm{X}}_{k-1}^i\}$ are finite collections of constrained zonotopes s.t. $\underline{\mm{X}}_{k-1} = \bigcup_i \underline{\mm{X}}_{k-1}^i$ and $\underline{\mm{X}}_k = \bigcup_\ell \underline{\mm{X}}_k^\ell$ are the under-approximation of the $k-1^{\rm st}$ and the $k^{\rm th}$ BRSs, respectively. 
The sub-procedures $center$, $linearize$ and  $LE$ are the same as in the scaling method. However,  $Pre_{\mt{x},\mt{u}}(\mm{X}, \mt{A}, \mt{B}, \mm{U}, \mm{W}, \mm{L})$, which computes the extended constrained zonotope $\mm{Z}$, has a slightly different definition, i.e.,  
\begin{align}
\mm{Z} = \{[\mt{x};\mt{u}] \mid \mt{A}\mt{x} + \mt{B}\mt{u} \in \mm{X}\ominus (\mm{L} \oplus \mm{W}),
\mt{u} \in \mm{U} \}.  \label{eq:Zk_nosafe}
\end{align}
Note that, in Eq. \eqref{eq:Zk_nosafe},   $\mt{x} \in \mm{X}_{\rm safe}$ is not enforced as in the scaling method. 

 Algorithm \ref{alg:split} is briefly explained below. For each $\underline{\mm{X}}_{k-1}^i$ from the input collection, we construct a collection $\mm{C}_k^i$ of constrained zonotopes, the union of which is contained by $Pre(\underline{\mm{X}}_{k-1}^i)$. 
 To obtain $\mm{C}_k^i$, we compute a set $\mm{Z}_k^i$ of the state-input pairs $[\mt{x}; \mt{u}]$ using linearization (lines 2-5). These steps are the same as those in the scaling method except that the linearization error set $\mm{L}$ is now defined by a prescribed error bound $\bar{\mt{L}}$. 
 If all possible values of the linearization error over $\mm{Z}_k^i$ are contained by $\mm{L}$, it follows that  $Proj_{\mt{x}}(\mm{Z}_{k}^i) \cap \mm{X}_{\rm safe}  \subseteq Pre(\underline{\mm{X}}_{k-1}^i)$ and  $\mm{C}_k^i$ is given as in line 7. Otherwise we split $\underline{\mm{X}}_{k-1}^i$ into two smaller constrained zonotopes $\underline{\mm{X}}_{k-1,1}^{i,j^*}$,   $\underline{\mm{X}}_{k-1,2}^{i,j^*}$ and compute the BRSs for them by calling SplittingBRS recursively. 
 Note that, for the first case (i.e., line 7), each $\mm{C}_k^i$ may still contain multiple sets when $\mm{X}_{\rm safe}$ is nonconvex. 
 For example, if $\mm{X}_{\rm safe} = \bigcup_p \mm{H}_p$ is the union of finitely many polytopes $\mm{H}_p$, the collection $\mm{C}_k^i$ will consist of $Proj_{\mt{x}}(\mm{Z}_{k}^i) \cap \mm{H}_p$ for all $p$. Each $Proj_{\mt{x}}(\mm{Z}_{k}^i) \cap \mm{H}_p$ is a constrained zonotope, whose CG-Rep can be obtained by Lemma \ref{lem:cz}. For details, see \cite{raghuraman2020set}. Finally, if $\mt{f}$ is twice continuously differentiable, $LE({\mt{z}}^*, \mt{f}, \mm{Z})$ will converge to a singleton set as $\mm{Z}$ does. This ensures that the recursion will terminate after sufficiently many splittings.

\begin{algorithm}[H]
 \caption{$\{\underline{\mm{X}}_k^\ell\}$ = SplittingBRS($\{\underline{\mm{X}}_{k-1}^i\}, \mt{f}, \mm{U}, \mm{W}, \mt{\bar{L}}, \mm{X}_{\rm safe}$)}
 \begin{algorithmic}[1]
 \REQUIRE  A collection $\{\underline{\mm{X}}_{k-1}^i\}$ of constrained zonotopes; 
 System's vector field $\mt{f}$; Control input set $\mm{U}$; Disturbance set  $\mm{W}$; Safe set $\mm{X}_{\rm safe}$; Admissible linearization error  $\mt{\bar{L}} \in \mathbb{R}^n$
 \ENSURE  A collection $\{\underline{\mm{X}}_{k}^\ell\}$ of constrained zonotopes s.t. $\bigcup_\ell \underline{\mm{X}}_k^\ell \subseteq Pre(\bigcup_i\underline{\mm{X}}_{k-1}^i)$
 \FOR{each $ \mm{X}_{k-1}^i$}
 \STATE $\mt{z}^* \leftarrow center(\underline{\mm{X}}_{k-1}^i \times \mm{U})$
 \STATE $[\mt{A}, \mt{B}] \leftarrow linearize(\mt{z}^*, \mt{f})$
 \STATE $\mm{L} \leftarrow \langle \text{diag}(\mt{\bar{L}}), \mt{f}(\mt{z}^\ast) -[\mt{A}, \mt{B}]\mt{z}^\ast \rangle$
 \STATE $\mm{Z}_k^i \leftarrow Pre_{\mt{x}, \mt{u}}(\underline{\mm{X}}_{k-1}^i, \mt{A}, \mt{B}, \mm{U}, \mm{W}, \mm{L})$
  \IF {$LE({\mt{z}}^*, \mt{f}, \mm{Z}_{k}^i) \subseteq  \mm{L} $ }
  \STATE $\mm{C}_{k}^i \leftarrow \{Proj_{\mt{x}}(\mm{Z}_{k}^i) \cap \mm{X}_{\rm safe}\}$
  \STATE \textbf{break}
  \ELSE
  \STATE Select a generator $\mt{g}_{k-1}^{i,j^\ast}$ of $\underline{\mm{X}}_{k-1}^i$ 
  \STATE Split $\underline{\mm{X}}_{k-1}^i$ into $\underline{\mm{X}}_{k-1,1}^{i,j^*}$ and  $\underline{\mm{X}}_{k-1,2}^{i,j^*}$ \COMMENT{Proposition \ref{prop:splitting}}
  \STATE $\mm{C}_{k-1}^i \leftarrow \{\underline{\mm{X}}_{k-1,1}^{i,j^*}, \,  \underline{\mm{X}}_{k-1,2}^{i,j^*}\}$ 
  \STATE $\mm{C}_k^i \leftarrow$ SplittingBRS($\mm{C}_{k-1}^i, \mt{f}, \mm{U}, \mm{W}, \mt{\bar{L}}, \mm{X}_{\rm safe}$)
  \ENDIF
  \ENDFOR
 \RETURN $\{\underline{\mm{X}}_k^\ell\} \leftarrow \bigcup_i \mm{C}_k^i$
 \end{algorithmic}
 \label{alg:split}
 \end{algorithm}

  In line 10, the generator $\mt{g}_{k-1}^{i,j^*}$ is selected as follows. Similar to \cite{althoff2008reachability}, for the $j^{\rm th}$ generator of the set $\underline{\mm{X}}^i_{k-1}$ to split, we compute a performance index $\rho_j$ as follows: 
 \begin{align}
     \rho_j = \max(\mt{L}_1^j / \mt{\bar{L}}) \cdot \max(\mt{L}_2^j / \mt{\bar{L}}),  \label{eq:select_dim}
 \end{align}
where $\mt{L}_1^j$,  $\mt{L}_2^j \in \mathbb{R}^n$ are vectors that define the linearization error bound for sets $\mm{X}_{k,1}^{i,j}$ and $\mm{X}_{k,2}^{i,j}$, respectively. The operations $\max$ and $/$ in Eq. \eqref{eq:select_dim} are element-wise. In line 10, the generator $\mt{g}_{k-1}^{i,j^*}$ with the lowest performance index will be chosen, i.e., 
$j^* = \mathop{\arg\min}_j \rho_j $. 

\begin{rmk}\label{rmk:union}
Note that, while executing SplittingBRS in line 13, the set  $\mm{L} \oplus \mm{W}$ will be subtracted from the two sets $\underline{\mm{X}}_{k-1,1}^{i,j^*}$ and $\underline{\mm{X}}_{k-1,2}^{i,j^*}$, which are obtained via splitting. 
By Lemma \ref{lem:Minkowski}, bullet iii), 
the union of these two Minkowski differences 
is only a subset of (but not necessarily equal to) $\underline{\mm{X}}_{k-1}^i\ominus (\mm{L} \oplus \mm{W})$. 
This means that the splitting procedure introduces more conservatism. However, the overlapping area generated by Proposition \ref{prop:splitting} can ease the conservatism.  
This is because the larger $\underline{\mm{X}}_{k-1,1}^{i,j^*}$ and $\underline{\mm{X}}_{k-1,2}^{i,j^*}$ are, the larger $\mm{C}_k^i$ is. 
\end{rmk}

When doing the splitting, the number of the obtained constrained zonotopes may grow exponentially. Therefore, a sampling algorithm is required to restrict their number. In order to evenly cover the union of these sets, \revise{ we implement the farthest point sampling algorithm \cite{gonzalez1985clustering} based on the geometric centers of the constrained zonotopes' interval closures.}





\section{Examples}
In this section we illustrate our algorithms with several examples. 
\revise{TABLE} \ref{tab:result} summarizes our results with, for each example and method, the system dimension $n$, the iteration steps $k$ and the computing time.
These examples were run on a laptop with a 12th generation Intel CPU and 16 GB of RAM. \revise{Our implementation is in MATLAB R2019a. The zonotope-based method and the HJB method that we use as benchmarks are also in MATLAB.} 
\begin{table}[h]
    \centering
    \begin{tabular}{c|cc|ccc}
    & $n$& $k$ & Splitting& Scaling&HJB  \\
    \hline
    Example 2  & 2 & 100 & N/A & 56.1s & 45.2s\\
     \hline
    Example 3 & 10 & 10 & 226.7s & N/A & Memory error \\
     \hline
    \begin{tabular}{c}
         Example 4:\\
        Convex constraints 
    \end{tabular}
     & 3 & $-$ &          \begin{tabular}{c}
         478.9s \\
         ($k=25$) 
    \end{tabular} & \begin{tabular}{c}
         2821.3s \\
         ($k=400$) 
    \end{tabular}  & Memory error\\
      \hline
        \begin{tabular}{c}
         Example 4: \\
         Nonconvex constraints 
    \end{tabular} & 3 & 20 & 1564.1s & N/A & 4521.6s\\
     \hline
    Example 5 & 10 & 340 & 951.6s & N/A & Memory error \\

    \end{tabular}
    \vspace{-1mm}
    \caption{Computation time for the examples}
    \label{tab:result}
    \vspace{-5mm}
\end{table}
 Note that the splitting method does not apply to Example \ref{eg:brsl} because the system is linear and $\mm{X}_{\rm safe}$ is convex (hence no reason for splitting). We also do not apply the scaling methods to examples with nonconvex $\mm{X}_{\rm safe}$ (Examples~\ref{eg:brs_dubins_split} and \ref{eg:brs_10d_tank}), because its implementation only generates one homotopy class. 

While Algorithm \ref{alg:scaling},  \ref{alg:split} are developed for nonlinear systems, they both reduce to Eq. \eqref{eq:brs_lin} for linear systems (i.e., when $\mt{f}(\mt{x},\mt{u}) = \mt{A}\mt{x}+\mt{B}\mt{u})$. 
The following linear system examples show that less conservative under-approximations  can be obtained using constrained zonotopes instead of zonotopes, because the former
 has a stronger expressive power. 

\begin{figure*}
\begin{minipage}[b]{0.3\textwidth}
   \centering
   \scriptsize
    \begin{tabular}{c|cccc}
    & Exact & \begin{tabular}{c}
         Constrained \\ zonotope
    \end{tabular}   & Zonotope & HJB \\
    \hline
    Volume & 37.079 & 28.343 & 7.810 & 2.817
    \end{tabular}
    \captionof{table}{Example \ref{eg:brsl}, the volume of the BRSs.}
    \label{tab:eg2}
    \centering
    \begin{tabular}{c|cccc}
    &  \begin{tabular}{c} Constrained \\ zonotope \end{tabular} & HJB & Intersection\\
    \hline
    Volume & 0.1608 & 0.3422 & 0.1504 \\
    \begin{tabular}{c} Projection \\ Volume \end{tabular} &0.7070 & 0.7326 & 0.6674
    \end{tabular}
    \captionof{table}{Example \ref{eg:brs_dubins_split}, the volume of the BRSs and their projections.}
    \label{tab:eg4}
\end{minipage}%
\hfill
\begin{minipage}[b]{0.3\textwidth}
  \includegraphics[width=\linewidth]{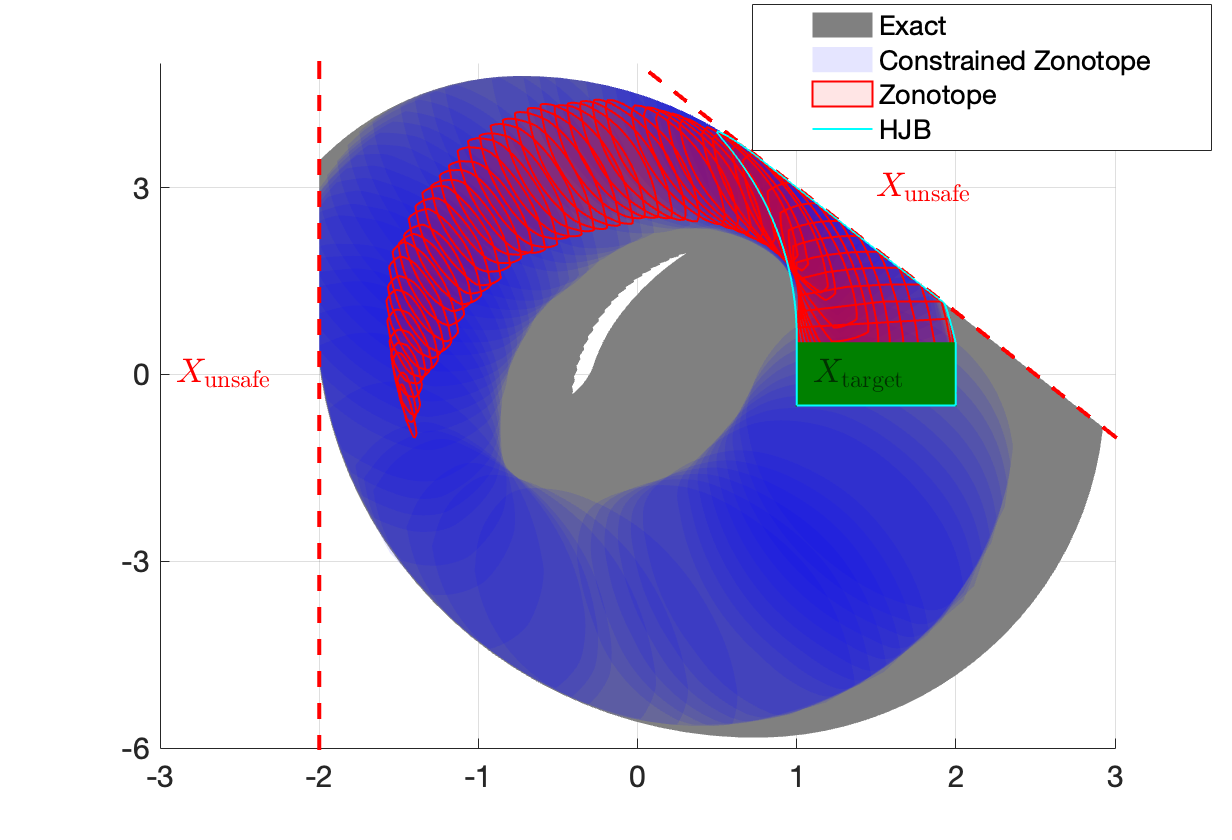}
  \captionof{figure}{Example \ref{eg:brsl}, exact BRSs and their under-approximations.
  }
  \label{fig:eg_brsl}
\end{minipage}%
\hfill 
\begin{minipage}[b]{0.3\textwidth}
  \includegraphics[width=\linewidth]{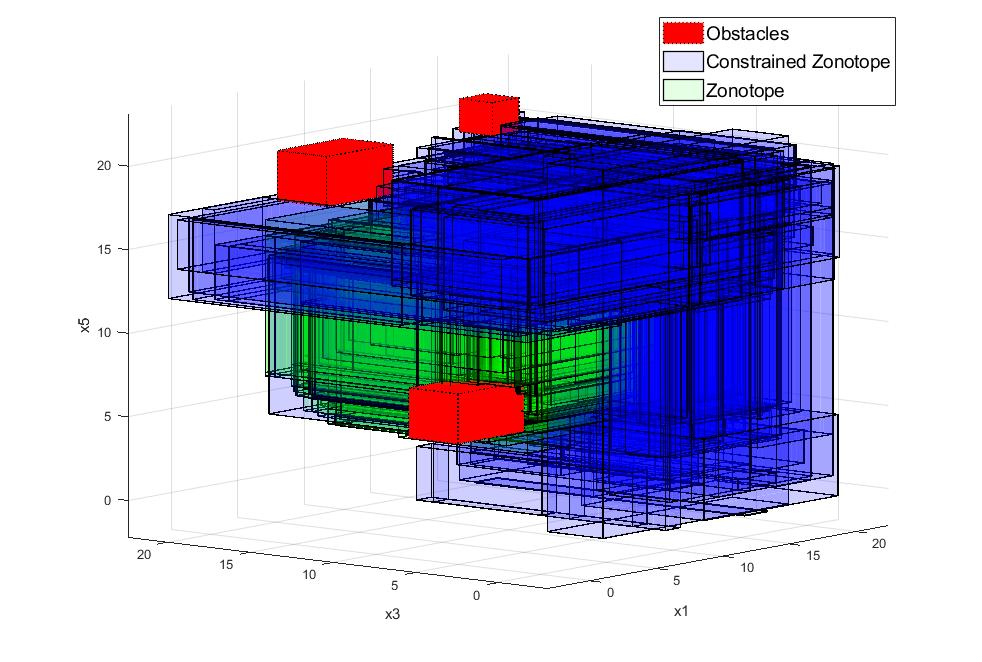}
  \captionof{figure}{Example \ref{eg:brs_10d}, BRSs using the zonotope-based \& the constrained-zonotope-based methods.
  }
  \label{fig:eg_brs_10d}
\end{minipage}
\vspace{-6mm}
\end{figure*}

\begin{figure*}
\begin{minipage}[t]{0.3\textwidth}
  \includegraphics[width=\linewidth]{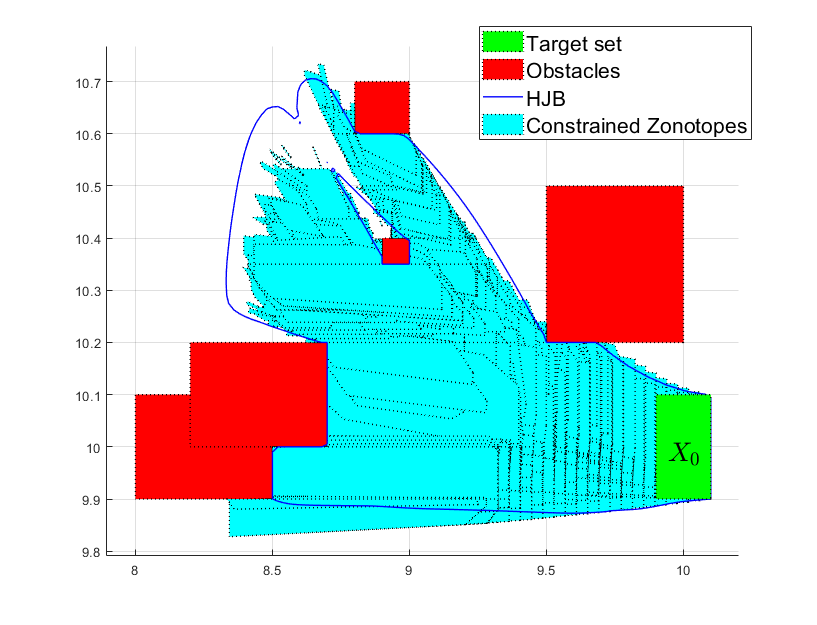}
  \caption{Example \ref{eg:brs_dubins_split}, BRSs of the Dubins car system with obstacles by the splitting method.
  }
  \label{fig:eg_brsnl_split_dubins}
\end{minipage}%
\hfill
\begin{minipage}[t]{0.3\textwidth}
  \includegraphics[width=\linewidth]{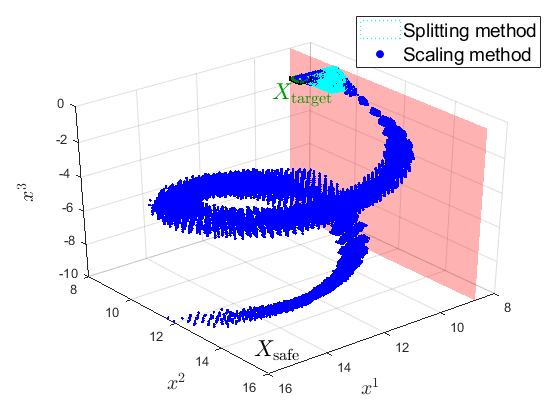}
  \caption{Example \ref{eg:brs_dubins_split}, BRSs of the Dubins car system with convex constraints using the scaling method and the splitting method.
  }
  \label{fig:eg_brsnl_scale}
\end{minipage}%
\hfill 
\begin{minipage}[t]{0.3\textwidth}
  \includegraphics[width=\linewidth]{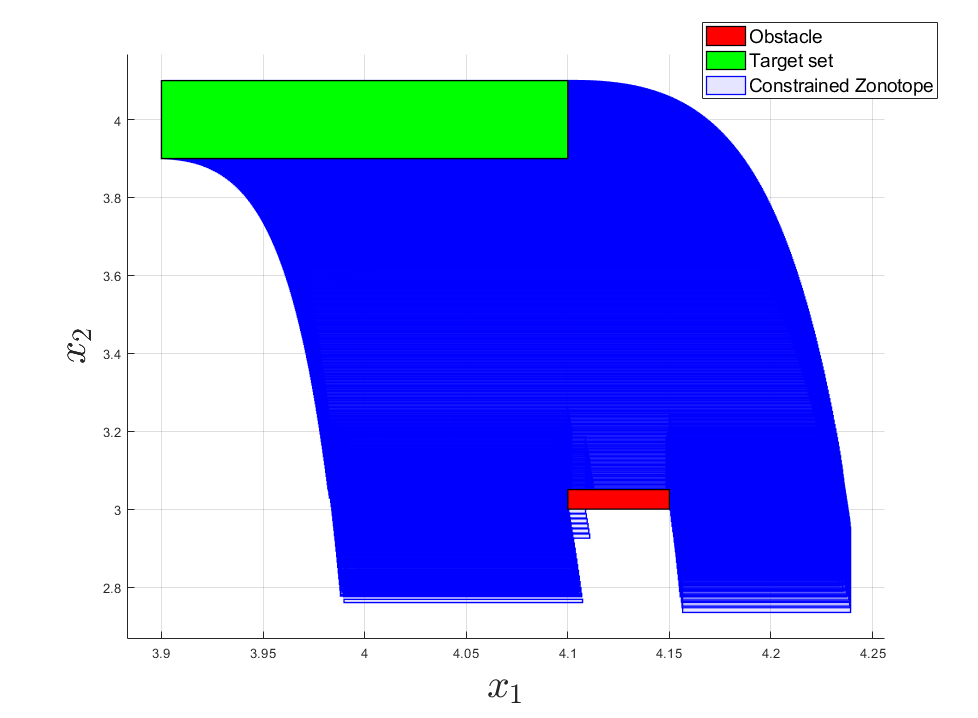}
  \caption{Example \ref{eg:brs_10d_tank}, BRSs for the 10-D tank system.
  }
  \label{fig:eg_brs_tank}
\end{minipage}
\vspace{-7mm}
\end{figure*}

\begin{eg}\label{eg:brsl}
Consider a linear system with the following system matrices and sets: 
\begin{align}
\mt{A} & = 
\left[\hspace{-1mm}
\begin{array}{cc}
0.9962 & 0.02394  \\
-0.1496 & 0.9962
\end{array}
\hspace{-1mm}\right], 
\mt{B} =   
\left[\hspace{-1mm}
\begin{array}{c}
-0.004034 \\
0.08025
\end{array}
\hspace{-1mm}\right],  
\end{align}
$\mm{U} = \li -1.5, 1.5\ri$,  
$ \mm{W} = \langle [0.1997, 0.002396; -0.01498, $ $0.1997], \bfz \rangle$, 
$\mm{X}_0  = \langle \text{diag}([0.5, 0.5]), [1.5; 0] \rangle$ and $
\mm{X}_{\rm safe}  = \{x\in \mathbb{R}^2 \mid [-1,0;2,1]x\leq [2;5]\}$.
Fig. \ref{fig:eg_brsl} shows the exact BRSs $\mm{X}_k$ (gray) for $k=1,2\dots, 100$ and their under-approximations. The constrained zonotopic under-approximations $\underline{\mm{X}}_k$ are in blue. 
As a comparison, we used the method in \cite{yang2022scalable} to compute zonotopic under-approximations (red) and scaled the generators of these zonotopes to satisfy the linear safety constraints.
The latter approach is clearly more conservative (i.e. gives smaller sets). In fact, due to the wrapping effect after hitting the unsafe set $\{\mt{x} \mid [2,1]\mt{x} > 5\}$, the red sets vanish before $k=100$ is reached. 
The main reason of this conservatism is that the true backward reachable set $\mm{X}_k$ becomes asymmetric due to the state constraints. Therefore it is more accurate to approximate $\mm{X}_k$ with a constrained zonotope than with a zonotope. The former is as expressive as polytopes while the latter is restricted to be a centrally symmetric set. 

 In addition, we use the HJB method to compute the BRSs (cyan contour) with the same constraints. The result using HJB is more conservative and stops expanding after $k=20$. The volumes of the BRSs obtained using different methods are approximated using a sample-based method and are shown in TABLE \ref{tab:eg2}. 

\end{eg}



\begin{eg} \label{eg:brs_10d}
    Consider the 10-D system from \cite{yang2022scalable}, discretized with a sampling period of $dt = 0.1$s. Let the disturbance set $\mm{W}$ be so that $w_{\{1,3,5\}} \in \li-0.12, 0.12\ri$, $w_{\{2,4,6\}} \in \li-0.2, 0.2\ri$, $w_{\{7,8,9,10\}} \in \li -0.1, 0.1\ri$ and the control set $\mm{U} \in \li -0.5, 0.5 \ri^3$. Define the target set such that $x_i \in \li 9.5, 10.5 \ri$ for $i \in \{ 1,2,3,4,5,6\}$ and $x_i \in \li 8, 12 \ri$ for $i \in \{ 7,8,9,10\}$.
    
    To avoid potential numerical issues when visualizing 10-D zonotopes or constrained zonotopes, bounding boxes are used to visualize the results. Fig. \ref{fig:eg_brs_10d} shows the 3-D projection of the boxes including the constrained zonotopic under-approximation of BRSs (blue) and zonotopic under-approximation of BRSs (green) for $k = 1,2 \cdots, 10$. Since the system dimension is large, Hamilton-Jacobi method encountered memory error, whereas both zonotope and constrained zonotope-based methods can obtain a result. Here we used the approach in Sec. \ref{subsec:splitting} to avoid obstacles. In this example, the zonotopic representation is more conservative than that based on constrained zonotopes, while the latter being as scalable as the former. That is, constrained zonotopes can also handle high dimensional linear systems, as zonotopes do. 
    
\end{eg} \label{eg:brs_10d_tank}


\begin{eg} \label{eg:brs_dubins_split}
    Consider the following Dubins Car system: 
    \begin{align}
        x_{t+1}^1 &= x_t^1 + u_t^1\cos(x_t^3) \\
        x_{t+1}^2 &= x_t^2 + u_t^1\sin(x_t^3) \\
        x_{t+1}^3 &= x_t^3 + u_t^2
    \end{align}
    Assume that $\mt{x}_t = [x_t^1;x_t^2;x_t^3]$ and $\mt{u}_t = [u_t^1; u_t^2] \in \li 0.04, 0.08\ri \times \li 0, 0.04 \ri$ is the control input. We use the scaling method and the splitting method to compute the BRSs with convex and nonconvex state constraints, respectively. 
    
    Fig. \ref{fig:eg_brsnl_split_dubins} shows the $[x^1;x^2]$-projection of the constrained zonotopic under-approximation $\underline{\mm{X}}_k$ (cyan) of the BRSs, obtained by the splitting method. As a comparison, we also use the HJB method \cite{bansal2017hamilton} to approximate the BRSs (blue contour). To this end, a uniform grid ($201 \times 201 \times 101$) of the state space is used. 
    The BRSs obtained via these two methods both contain states 
    from different homotopy classes in an environment with obstacles. Further, the two methods give BRSs that are similar in sizes but not comparable in the set inclusion sense (Fig \ref{fig:eg_brsnl_split_dubins} \& TABLE \ref{tab:eg4}). 
    In particular, when expanding into the free state space, the HJB method tends to give larger BRSs than the splitting method. 
    However, the splitting method is faster (TABLE \ref{tab:result}). 
    The volumes of the BRSs (and their projections) obtained using both methods are approximated using a sample-based method and are shown in TABLE \ref{tab:eg4}.
 
    Fig. \ref{fig:eg_brsnl_scale} shows $\underline{\mm{X}}_k$ obtained by the scaling method (blue) and the splitting method (cyan).  Here the safe set is a single half-space (specified by the red plane). For small $k$'s, the splitting method finds larger BRSs than the scaling method. 
    However, the splitting method has difficulties to proceed for $k\geq 25$. This is because, in the splitting method, $\underline{\mm{X}}_k$ is represented as a collection of small sets, whose number grows fast without an obstacle ``pruning'' these sets in a convex domain. It is also conservative to 
    Minkowski subtract $\mm{L} \oplus \mm{W}$ from each small set in the collection, and uses the union of the obtained Minkowski-differences to compute $\underline{\mm{X}}_{k+1}$ (see Remark \ref{rmk:union}). On the contrary, the scaling method, which computes one set at each step, does not suffer from these issues and can compute the BRSs in a convex domain for a longer time horizon. 

\end{eg}



\begin{eg} \label{eg:brs_10d_tank}
    Consider a 10-D water tank system with the following dynamics derived from the Torricelli’s law:
    \begin{align}
        x_{t+1}^1 &= x_t^1  + dt\cdot(u - k_2x_t^{10} -k_1\sqrt{2gx_t^1}) \\
        x_{t+1}^i &= x_t^i  + dt\cdot(k_1(\sqrt{2gx_t^{i-1}}-\sqrt{2gx_t^i})), \forall i \neq 1
    \end{align}
    where $x^i$ are the water level of the $i^{\rm th}$ tank, $u \in \li 0.135, 0.145 \ri$ is the inflow, $dt=0.01$, $k_1 = 0.015$, $k_2=0.01$, and $g=9.81$. The target set is $\mm{X}_0 = \li 3.9, 4.1 \ri^{10}$
    
    We apply the splitting method to this example. Figure \ref{fig:eg_brs_tank} shows the 2-D projections of the target set (green), the obstacle (red), and the bounding boxes (blue) that include the obtained constrained zonotopic under-approximations of the BRSs.  For this example, the HJB toolbox reports a memory error due to the large grid size, which is necessary for this 10-D system. We manage to compute the BRSs (with two homotopy classes) in reasonable time (TABLE \ref{tab:result}). This example shows that our method can deal with high-dimensional nonlinear systems with nonconvex state constraints. 
\end{eg}

\section{Conclusion \& Future Work}
In this paper, we developed constrained-zonotope-based methods to under-approximate the BRSs for discrete-time nonlinear systems. 
Our main technical contribution was twofold. 
First, we developed an efficient way to under-approximate the Minkowski difference between a constrained zonotopic minuend and a zonotopic subtrahend, which is a necessary step in the sequential BRS computation. 
Our under-approximation was shown to be exact for minuends with rich enough CG-Reps. 
Secondly, using the developed Minkowski difference computation technique, we proposed two methods, i.e., the scaling method and the splitting method, for BRS computation. 
Experiments showed that these constrained-zonotope-based methods were less conservative than those using zonotopes, and were more scalable than the HJB method. 

The exactness result in Sec. \ref{sec:rep} suggests that, for constrained zonotopes, there is a trade-off between the computational  complexity and the accuracy of set operations.  
This trade-off may be better understood via 
a systematic conversion between the different CG-Reps of a constrained zonotope. This conversion may be used, e.g., to incrementally  enrich the CG-Rep of a constrained zonotopic minuend and improve our two-step approach's accuracy. 
We will explore this in the future.

\balance
\bibliographystyle{IEEEtran}
\bibliography{main}


\end{document}